\newtheorem{definition}{\textbf{Definition}}\newtheorem{lemma}{\textbf{Lemma}}\newtheorem{theorem}{\textbf{Theorem}}\newtheorem{corollary}{Corollary}\newtheorem{remark}{Remark}
\newcommand{\RN}[1]{%
  \textup{\uppercase\expandafter{\romannumeral#1}}%
}
\newcommand{\q}[1]{``#1''}
\begin{document}

\title{
On Shared Rate Time Series for Mobile Users\\
in Poisson Networks}

\author{\IEEEauthorblockN{Pranav Madadi, Fran\c cois Baccelli and Gustavo de Veciana}
\thanks{This paper was presented in part at Wi-Opt 2016.}

}

\maketitle

\thispagestyle{plain}
\pagestyle{plain}

\begin{abstract}

This paper focuses on modeling and analysis of the
temporal performance variations experienced by a mobile user
in a wireless network and its impact on system-level design.
We consider a simple stochastic geometry model: 
the infrastructure nodes are Poisson distributed while the user's
motion is the simplest possible i.e., constant velocity on a straight line. 
We first characterize variations in the SNR process, and associated downlink Shannon
rate, resulting from variations in the infrastructure geometry seen by the mobile.
Specifically, by making a connection between stochastic geometry and queueing theory
the level crossings of the SNR process are shown to form 
an alternating renewal process whose distribution can be completely characterized.
For large/small SNR levels, and associated rare events, we further derive 
simple distributional (exponential) models. 
We then characterize the second major contributor to variation,
associated with changes in the number of other users sharing infrastructure.
Combining these two effects, we study what are the dominant factors 
(infrastructure geometry or sharing number) given mobile experiences 
a very high/low shared rate.These results are then used to evaluate and optimize the system-level Quality of Service (QoS) and system-level capacity to support
mobile users sharing wireless infrastructure; including 
mobile devices streaming video which proactively buffer
content to prevent rebuffering and mobiles which are downloading large files. 
Finally, we use simulation to assess the fidelity of this model 
and its robustness to factors which are presently  
taken into account. 


\end{abstract}

\begin{IEEEkeywords}
Poisson networks, wireless, Signal-to-Noise Ratio (SNR), Shannon rate, mobile user, temporal variations.\end{IEEEkeywords}

\section{Introduction}


The primary aim of this paper is to model and study 
the temporal capacity variations experienced by wireless users 
moving through space.
These are driven by geometric variations in the spatial and environmental 
relationships (associations) to the infrastructure, the
presence of other users, as well as random fluctuations intrinsic
to wireless channels, e.g., fast fading.
Our focus on temporal variability should be 
contrasted with the extensive work characterizing 
spatial variability as seen by randomly located users, e.g., through
metrics such as coverage probability, spatial density of throughput, 
90\% quantile rate, \q{edge} capacity, spectral efficiency, etc. 
Although space and time may be related through averages (when
ergodicity holds), the temporal characteristics of the stochastic processes
modeling  a mobile's capacity variations are relatively 
unexplored beyond correlation analysis and of great practical interest.


Indeed, an increasing volume of data traffic is generated by 
wireless devices while moving, 
e.g., 20-30\% of cellular data is generated during 
commute hours, \cite{Pranav2}. 
In the future, with increasing use of public transportation and/or
the emergence of driverless cars,
this volume could grow substantially. 
The Quality of Service/Experience (QoS/E) seen by such users 
can be highly dependent on the capacity variations 
they see as they move through space. 
This is particularly the case for applications that operate
over longer time scales, e.g., video/audio streaming, navigation/augmented reality, 
and real-time services, which may not be able to smooth substantial capacity 
variation through buffering, but also for the opportunistic 
transfers of large files over heterogeneous networks, e.g., WIFI offloading. 
More broadly, increases in wireless network capacity have 
been achieved through heterogeneous network densification leveraging 
technologies providing different coverage-throughput
trade offs, e.g., cellular (macro, pico, femto cells),
WIFI and perhaps, in the future, mmWave access points.
Characterizing and managing the capacity variations mobile users 
would see across such networks is a challenging but important
problem towards understanding the efficiency and performance
offloading/onloading based services. 

 
{\em Related work.}
There is a rich literature on modeling
spatial capacity variability in wireless infrastructure
for a randomly located users.  Of particular relevance
is that based on stochastic geometry,
which captures the effect of the variability
in base station locations, as well as the variability in the environment
through shadowing, and in the channel through fading, 
see e.g. \cite{BB} for a survey on the matter. 
By contrast work studying the temporal capacity variations for a 
user on the move in a cellular network is limited. 

There is also significant related work on Delay-Tolerant Networks (DTN). 
This literature considers mobile nodes, where the contact 
duration and the inter-contact time are defined and empirically 
measured from real traces \cite{20}, \cite{13} as well as through 
mobility models \cite{10}. In addition to mobility 
induced inter-contact processes, \cite{1} considers other factors 
like user availability. However, this work is in the context of 
opportunistic ad-hoc communication networks where a set of mobile 
nodes are moving under different mobility patterns. Our focus is 
on studying the {\em continuous-parameter stochastic process} 
experienced by a tagged mobile user traversing a static pattern 
of nodes modeling the wireless infrastructure.


There certainly is a lot of interest in studying how to design
networks to better address the needs of mobile users or to
leverage user mobility for the offloading/onloading traffic. 
For example \cite{iitk} studies  how user mobility patterns and 
users perceived QoS might drive the selection of macro-cell upgrades. 
The work in \cite{Hong} examines the effectiveness of algorithms for optimizing 
offloading to a set of spatially distributed WIFI APs.  
The work in \cite{ZhD14} evaluates how proactive knowledge
of capacity variations could be used in designing new models
for video delivery. These works exemplify applications
and engineering problems which depend critically on the temporal capacity
variability that mobile users would experience, but
do not directly address the characteristics of such processes.

{\em Key Questions.} 
Two primary sources of temporal variation in a mobile's capacity are 
the SNR, i.e., associated with changes in the users geometric relationship
to the infrastructure, and the sharing number, i.e., the number of 
other mobiles sharing the resource.
Our goal in this paper is to study the relative impacts these have on
mobile users' QoS/E.  In this setting several basic questions arise: 

\begin{enumerate}
\item 
\noindent{\bf \em Characterization of the SNR process.}
As a mobile user moves through a wireless network infrastructure 
associating with the closest node, its SNR process and thus associated
peak rate (i.e., without accounting for sharing)
will experience peaks and valleys. What is the intensity
of peaks and valleys? Does it match the rate of cell boundary crossings?
Given an SNR threshold, can one characterize the temporal
characteristics of the on/off level crossing process
associated with being above 
and below the threshold, i.e., the coverage and outage durations?

\item 
\noindent{\bf \em Characterization of the sharing number process.}
Assuming a population of other users sharing the network, what
are the characteristics of the sharing number process seen by a mobile?
If the network is shared by heterogeneous users, i.e., static, pedestrian, 
and users on public transport and/or a road system, how will this 
bias what they see? 
\item 
\noindent{\bf \em Smoothness of the effective rate process.}
How smooth is the bit rate obtained by the mobile seen as a function
of time? How often does this rate incur discontinuities, 
trend changes, large jumps or other phenomena negatively
impacting real time applications?
\item 
\noindent{\bf \em Characterization of rare events.}
Conditional on a rare event, i.e., very poor or very good user rate, 
what is the relative contribution of the user's location vs network congestion?  
Can one characterize the time scales for rare
event occurrences?
\item 
\noindent{\bf \em Applications to QoE.}  
What are the implications of the temporal capacity variations mobiles'
see on their application-level QoE and system-level performance, e.g., 
acceptable density of video streaming users, or download delays of 
large files? 
\end{enumerate}

To the best of our knowledge the analysis of the SNR and shared rate processes 
for Poisson wireless infrastructure developed in this paper is new and provides a first order 
answer to these basic questions. While our models are simplified
they should be viewed as a first, and necessary, step towards 
characterizing temporal variability in more complex random structures,
e.g., SINR variations which in this paper are only explored 
via simulation for comparison to SNR process characteristics.
Generalizations to SINR processes would be desirable, particularly
for systems operating in the interference limited regime. 
Such extensions may be tractable based on results known for
shot noise fields (see \cite{BB}, Volume 1), 
yet are beyond the scope of this paper.

{\em Contributions and Organization.} 
Section \RN{2} introduces the basic model studied in this paper. 
We consider a tagged user moving at a fixed velocity along a straight line 
through a shared wireless network such that:
(a) the access points/base stations are a realization of a Poisson point process; 
(b) other users sharing the network also form a Poisson (or possibly Cox) point process;
(c) all users associate with the closest base station; 
(d) resources are shared equally with other users sharing the network;
(e) a standard distance based path loss is used with neither shadowing nor fading.
Our goal is to characterize the \textit{shared rate} process seen by the tagged user  
by studying the \textit{SNR} and \textit{sharing number} processes.

In Section \RN{3} we present our results on the SNR process, 
including the intensity of peaks and valleys and, 
given a SNR threshold, we provide a a complete characterization 
of the on/off level crossing process as an alternating-renewal process.
Equivalently, this characterizes the durations for coverage/outage events seen by the mobile. 
Interestingly this is achieved by establishing a connection between 
the time-varying geometry seen by the mobile user and an 
associated queueing process.  We then provide asymptotics for the 
likelihood of high and low SNR, and show that, after appropriate rescaling,
the time intervals between up crossings are exponential. Which verifies
the \q{rarity hence exponentiality} principle \cite{Kei,Ald} for such
events.  

Section \RN{4} provides a detailed discussion of the
sharing number process,  i.e., the number of other users which 
are covered, i.e., meet an SNR threshold, and share 
infrastructure nodes with the tagged user. The process is
somewhat complex and not independent of the SNR process;
so we introduce a simpler bounding process which is
used in Section \RN{5} to characterize the {\em shared
rate process} seen by our mobile. 
In particular we show that rare events associated
with high shared rates, are associated
with high SNR (i.e., mobile's proximity to base stations) with
no other users sharing the resources. Similarly low shared
rates, arise when mobiles are far from the base station, and there
is a number of sharing users inversely proportional to
the low rate in question. 
We also provide an asymptotic characterization for the re-scaled
interarrival times associated with upcrossings.

In Section \RN{6} we present a discussion of discrete event simulation results 
which are used to assess the robustness of this model to perturbations that were 
not yet taken into account in the analysis.  For example the relative impact of 
variability associated with the changing geometry (proximity of base stations) 
seen by a mobile versus that associated with channel variability due to channel fades.



In Section \RN{7} we leverage our results to evaluate the QoS experienced 
by mobile users in two concrete scenarios. First we consider
the delivery of streaming video to mobile users which are
able to store future video frames to prevent rebuffering. 
The primary question addressed is about the maximal density 
of such users which can be supported while ensuring that in the long term
no rebuffering is required. 
Our second application considers the distribution for the 
delays experienced by a mobile user attempting to download a large file. 
These two examples give a system-level view on the performance that mobile
users would see when downloading large files opportunistically via some
WIFI infrastructure.

Section \RN{8} briefly discusses some important extensions 
of our results to the case of heterogeneous wireless infrastructures
and the case where the locations of mobile users follow
a Cox process associated with a road network. The results show
how heterogeneous technologies impact the temporal variations
in the mobile users SNR process. 
The results also show how a mobile user
on a roadway is likely to see poorer performance than
a stationary or pedestrian user.
Section \RN{9} concludes the paper. 


\section{System Model}

Consider an infrastructure based wireless network consisting of nodes e.g., base stations/Wifi hotspots, denoted through locations on the Euclidean plane. The configuration of the nodes is assumed to be a realization of Poisson point process $\Phi = \{X_1,X_2, ..\} $ in $\mathbb{R}^2$ with intensity $\lambda$. We consider a tagged user moving at a fixed velocity $v$ along a straight line starting from the origin at time $t=0$. The mobile user shares the network with other (static) users spatially distributed according to another independent Poisson point process of intensity $\xi$.

All users associate with the closest infrastructure node. For each $X_i \in \Phi$ one can define a set of locations which are closer to $X_i$ than any other point in $\Phi \setminus {X_i}$. This is a convex polygon known as the Voronoi cell associated with $X_i$ \cite{BB}. The collection of such cells forms a tessellation of the plane called the Voronoi tessellation see e.g., Fig.\ref{fig1}. Thus, all the users located within the Voronoi cell of node $X_i$ associate with $X_i$.


Let $(X(t) ,t \geq 0)$, denote the random process where, $X(t) \in \Phi$ is the closest node to the mobile user at time $t$. 
Let $(L(t) ,t \geq 0)$
be the process denoting the distance from
the mobile user to its closest node.

\begin{figure}[!t]
\centering
\includegraphics[scale =0.5]{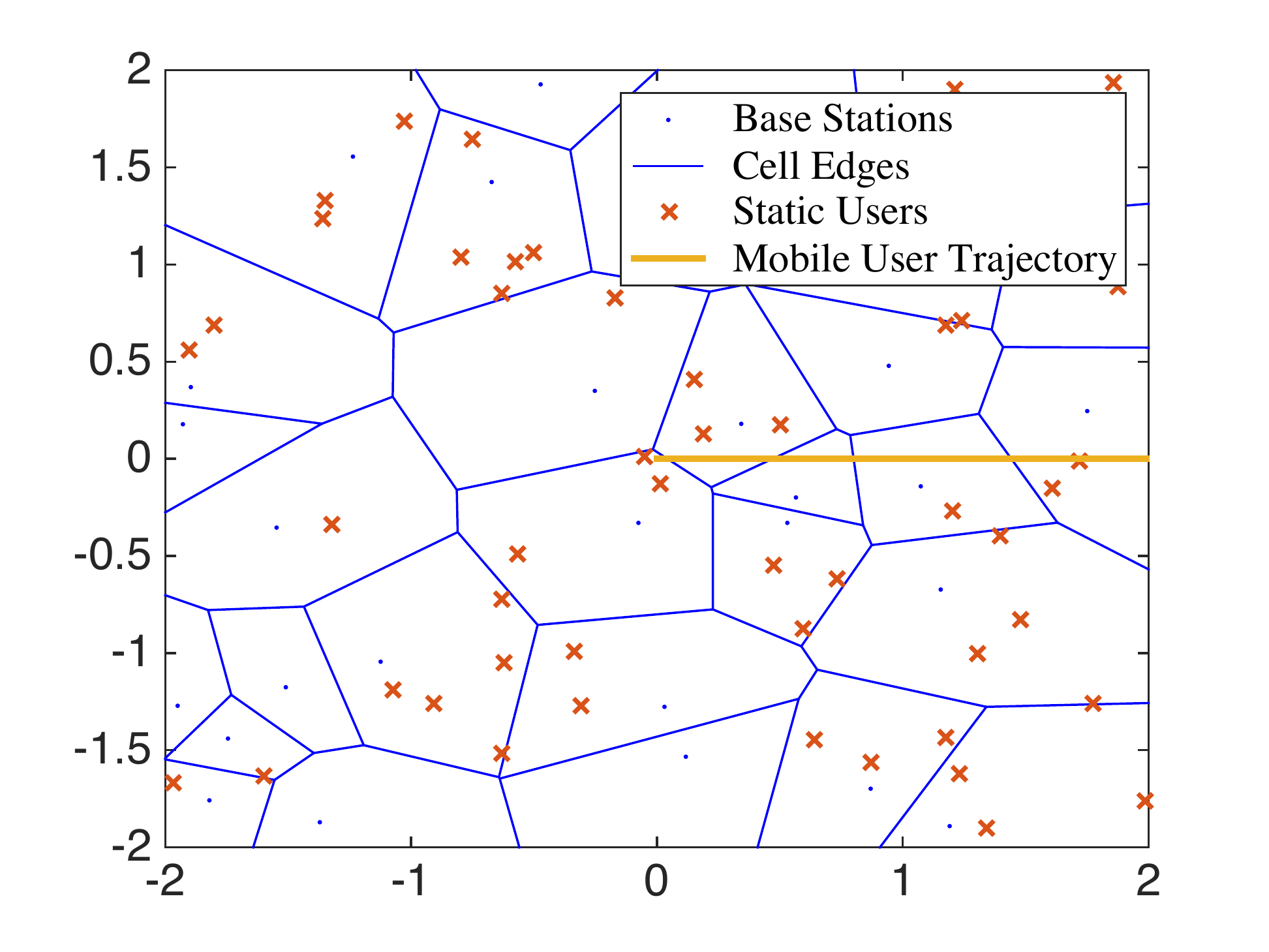}

\caption{Mobile user motion in a sample of the Poisson cellular network.}
\label{fig1}
\end{figure}

We consider downlink transmissions and assume that all nodes transmit at a fixed power $p$. Based on the classical power law path loss model and in the absence of fading and shadowing, the \textit{Signal-to-Noise Ratio} (SNR) \textit{process} of the mobile user is
\begin{equation}
\label{eq:snr}
\mbox{SNR(t)} = \frac{p L(t)^ {-\beta}}{w}, ~~ \mbox{for} ~ ~\beta > 2, ~~ t \geq 0,
\end{equation}
where $w$ denotes the noise power.

We will assume that the users are only \emph{served} if their SNR exceeds a given threshold $\gamma$. It follows from \eqref{eq:snr} that a user is served if it is within a distance $r_{\gamma} = (\frac{p}{w \gamma})^{1/\beta}$ from the closest node. We refer to $r_{\gamma}$ as the \textit{radius of coverage}. Note that a user may hence associate with a node but not be served.

The \textit{Shannon rate process}, $(R^{(\gamma)}(t), t \geq 0)$, seen by the mobile user is directly determined by the SNR process through the relation:
\begin{equation}
\label{eq:shannon}
R^{(\gamma)}(t) = 
\begin{cases}
a \log\left(1 + {\mathrm {SNR}}(t)\right) & \text{if}
~~ L(t) \leq r_{\gamma}, \\
0~~ &\text{otherwise},
\end{cases}
\end{equation}
where $a$ is a constant depending on the available bandwidth.

We assume that each node shares its resources equally among the users
it serves (e.g. through some time sharing scheme).
We define the  \textit{sharing number process} $(N^{(\gamma)}(t), t \geq 0)$,
where $N^{(\gamma)}(t)$ is the number of static users the node
associated with the mobile user serves provided that it is itself served,
and 0 if the mobile user is not served.
In other words, if the mobile user is served by its closest node
$X(t)$ at time $t$, it shares the resources with $N^{(\gamma)}(t)$
static users. Directly determined by the sharing number, the
\textit{sharing factor} $(F^{(\gamma)}(t), t \geq 0)$, is defined by:
\begin{equation}
F^{(\gamma)}(t) = \frac{1}{1 + N^{(\gamma)}(t)}.
\end{equation}

Finally, the \textit{shared rate process} seen by the mobile
user $(S^{(\gamma)}(t), t \geq 0)$, is given by
\begin{equation}
\label{eq:rate}
S^{(\gamma)}(t) = R^{(\gamma)}(t) \times F^{(\gamma)}(t).
\end{equation}

Our aim is to characterize this process. In the next two
sections, we first study the two underlying processes namely:
(1) the SNR process $(\mbox{SNR}(t), t \geq 0) $ and its level
sets, and (2) the Sharing number process $(N^{(\gamma)}(t), t \geq 0 )$. 

In the sequel, for any stationary random process say for e.g.,
$(S^{(\gamma)}(t) ,t \geq 0)$, $S^{(\gamma)}$ represents a random
variable with distribution equal to the stationary distribution of this process.

A summary of the key notation is provided in the Table \RN{1}.




\begin{table*}
\label{table:def}
\normalsize
\begin{center}
\begin{tabular}{ |c|c| } 
\hline
\textbf{Symbol} & \textbf{Definition} \\
\hline
$\Phi , \lambda $ & Poisson point process of nodes and its intensity  \\
\hline
$\xi$ & Intensity of mobile users\\
\hline
$p$ & Constant power transmitted by nodes \\
\hline
$(X(t) ,t > 0)$ & Random process denoting the closest node to the mobile \\ 
\hline
$(L(t) ,t > 0)$ & Random process denoting the distance to the closest node \\ 
\hline
$(\mbox{SNR}(t),t > 0)$ & Signal-to-Noise ratio random process\\ 
\hline
$(C^{(\gamma)}(t),t > 0)$ &  Signal-to-Noise ratio level crossing process\\
\hline
$(R^{(\gamma)}(t),t > 0)$ &  Shannon rate random process\\
\hline
$(N^{(\gamma)}(t),t > 0)$ & Random process denoting the number of users sharing \\ 
\hline
$(S^{(\gamma)}(t),t > 0) $ &  Shared rate random process\\ 
\hline
$\gamma$ & Threshold on signal-to-noise ratio\\
\hline
$r_{\gamma}$ & Radius of coverage for threshold $\gamma$\\
\hline
\end{tabular}
\end{center}
\caption{Table of Notation.}

\end{table*}


\section{Characterization of the SNR process}

This section is structured as follows:
we start by considering the on-off coverage structure of the SNR process and
then study in more detail the characteristics of its fluctuations.
We then analyze the scale of the inter-occurrence times of
certain rare events.

\subsection{Analysis of the SNR Level Crossing Process}

A first order question is whether the mobile is covered or not. To that end we define the SNR level crossing process as follows:
\begin{definition}
Given an SNR threshold $\gamma$, the SNR level crossing process $(C^{(\gamma)}(t), t \geq 0 )$ is defined as $C^{(\gamma)}(t) = \bm{1}(\mbox{SNR}(t) \geq \gamma)$ as shown in Fig \ref{on-off process}.
\end{definition}
Clearly this is an on-off process, where the on and off periods correspond to the coverage and outage periods respectively. The process alternates between \q{on} intervals of length $(B^{(\gamma)}_n, n \geq 1) $ and \q{off} intervals of length $( I^{(\gamma)}_n, n \geq 1) $ as depicted in Fig.~\ref{on-off process}. We also define the sequence of SNR up-crossing times $(T^{(\gamma)}_n , n \geq 1)$ for a given threshold $\gamma$. Let $ V^{(\gamma)} \sim T^{(\gamma)}_n -T^{(\gamma)}_{n-1}$ be a random variable whose distribution is that associated with up-crossing inter arrivals as illustrated in Fig. \ref{on-off process}.

\begin{figure}[!t]
\centering
\includegraphics[scale=0.38]{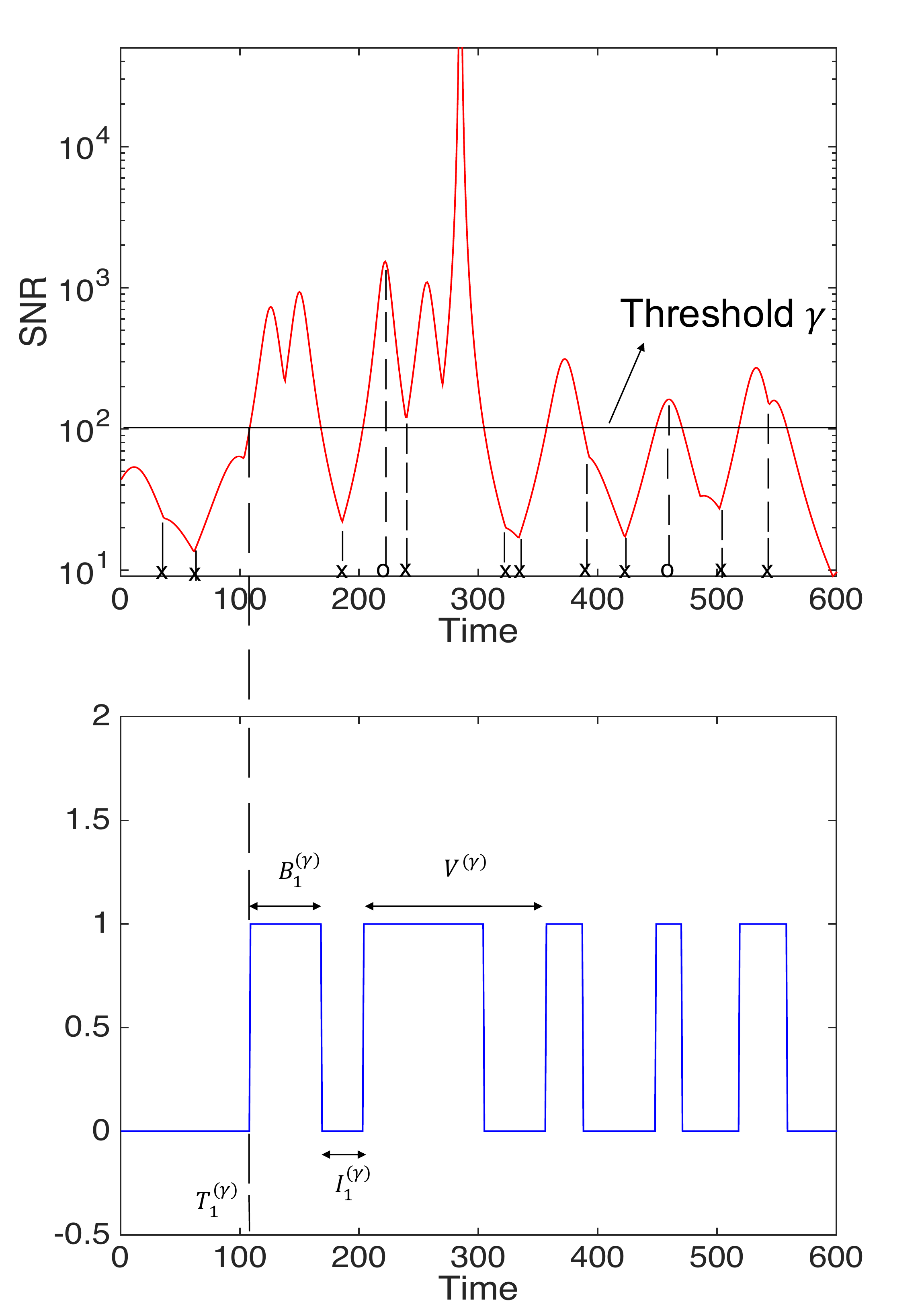}

\caption{Level crossings of the SNR process as an On-Off process (bottom) and
the point process of its local maxima $\chi$, 
denoted by ``o'' on the $x$ axis, together with edge crossings, denoted
by ``$\times$'' (top).}
\label{on-off process}
\end{figure}

In order to characterize the SNR level crossing process,
we establish a connection between the
time-varying geometry seen by the mobile user and an
associated queueing process.

Let $D^{(\gamma)}(t)$ denote the closed disc of radius $r_{\gamma}$ centered on the mobile user's location at time $t$. This closed disc follows the mobile user's motion along the straight line. 
Let $K^{(\gamma)}(t) = |\Phi \cap D^{(\gamma)}(t)|$ denote the number of nodes in the disc.

The following theorem provides a simple characterization
of $(K^{(\gamma)}(t), t \geq 0 )$ which in turn will help studying
the SNR level crossing process.

\begin{theorem}
\label{queueing model}
The process $ (K^{(\gamma)}(t), t \geq 0 )$ is equivalent to that modeling the number of customers in an $M/GI/\infty $ queue with arrival rate $\lambda^{({\gamma})} = 2r_{\gamma}v \lambda$ and i.i.d. service times with density
\end{theorem}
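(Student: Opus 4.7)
The plan is to make a change of reference frame and then apply the marking/displacement theorem for Poisson point processes. In the reference frame co-moving with the tagged user, the disc $D^{(\gamma)}$ is stationary and the PPP $\Phi$ translates at velocity $-v$. Hence each point $X_i=(x_i,y_i)\in\Phi$ with $|y_i|\le r_\gamma$ transits through $D^{(\gamma)}$, entering at a well-defined time and leaving after a well-defined sojourn. Calling a point's sojourn its ``service time'' and its entry its ``arrival'', the process $(K^{(\gamma)}(t),t\in\mathbb{R})$ is literally the occupancy count of a queue with infinitely many servers, so the only thing to check is that arrivals form a Poisson process with the claimed rate and that service times are i.i.d. and independent of the arrival process.

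Concretely, I would place the mobile at the origin at $t=0$ moving in the $+x$ direction, so the disc at time $t$ is centred at $(vt,0)$. A point $(x,y)\in\Phi$ with $|y|\le r_\gamma$ enters the disc at time $T_{\mathrm{in}}=(x-\sqrt{r_\gamma^{\,2}-y^2})/v$ and exits at $T_{\mathrm{out}}=(x+\sqrt{r_\gamma^{\,2}-y^2})/v$, so its sojourn is
\begin{equation}
S \;=\; \frac{2\sqrt{r_\gamma^{\,2}-y^2}}{v}.
\end{equation}
The map $(x,y)\mapsto(T_{\mathrm{in}},y)$ is a $C^1$-bijection on the strip $\{|y|\le r_\gamma\}$ with Jacobian determinant $1/v$, so by the mapping theorem the image of $\Phi$ restricted to this strip is a PPP on $\mathbb{R}\times[-r_\gamma,r_\gamma]$ of intensity $\lambda v$. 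Projecting on the time coordinate and using the marking theorem gives that the arrival epochs $\{T_{\mathrm{in}}\}$ form a Poisson process on $\mathbb{R}$ with intensity $\lambda v\cdot 2r_\gamma = \lambda^{(\gamma)}$, and, conditionally on the epochs, the marks $y$ are i.i.d.\ uniform on $[-r_\gamma,r_\gamma]$ and independent of the epochs.

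Since the sojourn $S$ is a deterministic function of the mark $y$ only, it follows that the sojourn times are i.i.d.\ and independent of the Poisson arrival process. This is exactly an $M/GI/\infty$ queue. A direct change of variables yields the density of $S$: writing $Y$ uniform on $[-r_\gamma,r_\gamma]$ and inverting $s=2\sqrt{r_\gamma^{\,2}-y^2}/v$ on $[0,2r_\gamma/v]$ produces
\begin{equation}
f_S(s) \;=\; \frac{s\,v^2/(2r_\gamma^{\,2})}{\sqrt{1-s^2 v^2/(4r_\gamma^{\,2})}}, \qquad s\in\bigl[0,\,2r_\gamma/v\bigr],
\end{equation}
which should match the density stated in the theorem.

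The main conceptual point, rather than an obstacle, is recognising that the equivalence must be at the level of the whole process, not just the one-dimensional marginal distribution of $K^{(\gamma)}(t)$: one has to check that arrivals are Poisson \emph{and} that service times are mutually independent and independent of the arrival process. Both follow cleanly from the independence properties of the PPP via the mapping and marking theorems; the only small care needed is in verifying the Jacobian calculation and handling the degenerate set $|y|=r_\gamma$, which has measure zero and can be safely ignored.
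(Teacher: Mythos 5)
Your argument is essentially the paper's: identify an arrival with a node entering the moving disc, a service time with the chord-traversal time, deduce Poisson arrivals at rate $2r_\gamma v\lambda$ and service times determined by the uniform $y$-coordinate of the entry point. The one difference is in how you justify the Poisson/independence structure: the paper uses an infinitesimal-increment argument (probability of an arrival in the next $\epsilon$ seconds equals the probability of a node in a region of area $2r_\gamma v\epsilon$, plus stationary independent increments), whereas you change to the co-moving frame and invoke the mapping theorem for the bijection $(x,y)\mapsto(T_{\mathrm{in}},y)$ together with the marking theorem. Your route is arguably cleaner, since it makes explicit what the paper leaves implicit, namely that the sojourn times are i.i.d.\ and independent of the arrival epochs (being deterministic functions of i.i.d.\ marks independent of the epochs).

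One slip: the density you wrote does not match the theorem. With $Y$ uniform on $[-r_\gamma,r_\gamma]$ and $S=2\sqrt{r_\gamma^2-Y^2}/v$, the correct density is
\begin{equation*}
f_S(s)=\frac{v^2 s}{2r_\gamma\sqrt{4r_\gamma^2-v^2s^2}},\qquad s\in[0,2r_\gamma/v],
\end{equation*}
whereas your expression $\bigl(s v^2/(2r_\gamma^2)\bigr)\big/\sqrt{1-s^2v^2/(4r_\gamma^2)}$ simplifies to $v^2 s\big/\bigl(r_\gamma\sqrt{4r_\gamma^2-v^2s^2}\bigr)$, i.e.\ twice the correct value (it integrates to $2$, not $1$); the numerator should read $s v^2/(4r_\gamma^2)$. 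This is a minor algebra error in the change of variables, not a gap in the argument.
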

\begin{equation}
\label{eq:service density}
f_{W^{(\gamma)}}(s) = 
\begin{cases}
\frac{v^2s}{2r_{\gamma}\sqrt{4r_{\gamma}^2 - v^2s^2}} ~ ~ & \text{for} ~ s \in [0,2r_{\gamma}/v],\\
0 ~~ & \text{otherwise}.
\end{cases}
\end{equation}

\begin{proof}
The entry of a node into the closed disc $D^{(\gamma)}(t)$ can be viewed as an arrival to the queue. The amount of time spent by the node in $D^{(\gamma)}(t)$ corresponds to its service time and thus the exit from $D^{(\gamma)}(t)$ its departure from the queue. 

We first show that the arrival process to the disc and thus the queue, is Poisson. We begin by proving that the arrival process has independent and stationary increments.The probability that there is an arrival in the next $\epsilon$ seconds is the probability that there is a node in the area 
$A^{(\gamma)}_{\epsilon}={2r_{\gamma}v \epsilon}$ as depicted in Fig. \ref{disc}. Since nodes are distributed according to a homogeneous Poisson point process of intensity $\lambda$, the number of nodes in any closed set of area $b$ follows the Poisson distribution with parameter $\lambda b$. Thus, for any $\epsilon > 0$, the increments in the arrival process have the same distribution. Also, the number of nodes in any two disjoint closed sets are independent. Thus, the arrival process has independent stationary Poisson increments.
For a small value of $\epsilon$, the probability that there is a single arrival is given by $\lambda A^{(\gamma)}_{\epsilon} + o(\epsilon) $.  Thus, the arrival process is Poisson with rate $2r_{\gamma}v \lambda$.
%
%

Every node that enters the moving closed ball stays in it for a time that depends on its entry locations and is proportional to the chord length as shown in Fig. \ref{disc}. This corresponds to its service time in the queue.

\begin{figure}[!t]
\centering
\includegraphics[scale = 0.26]{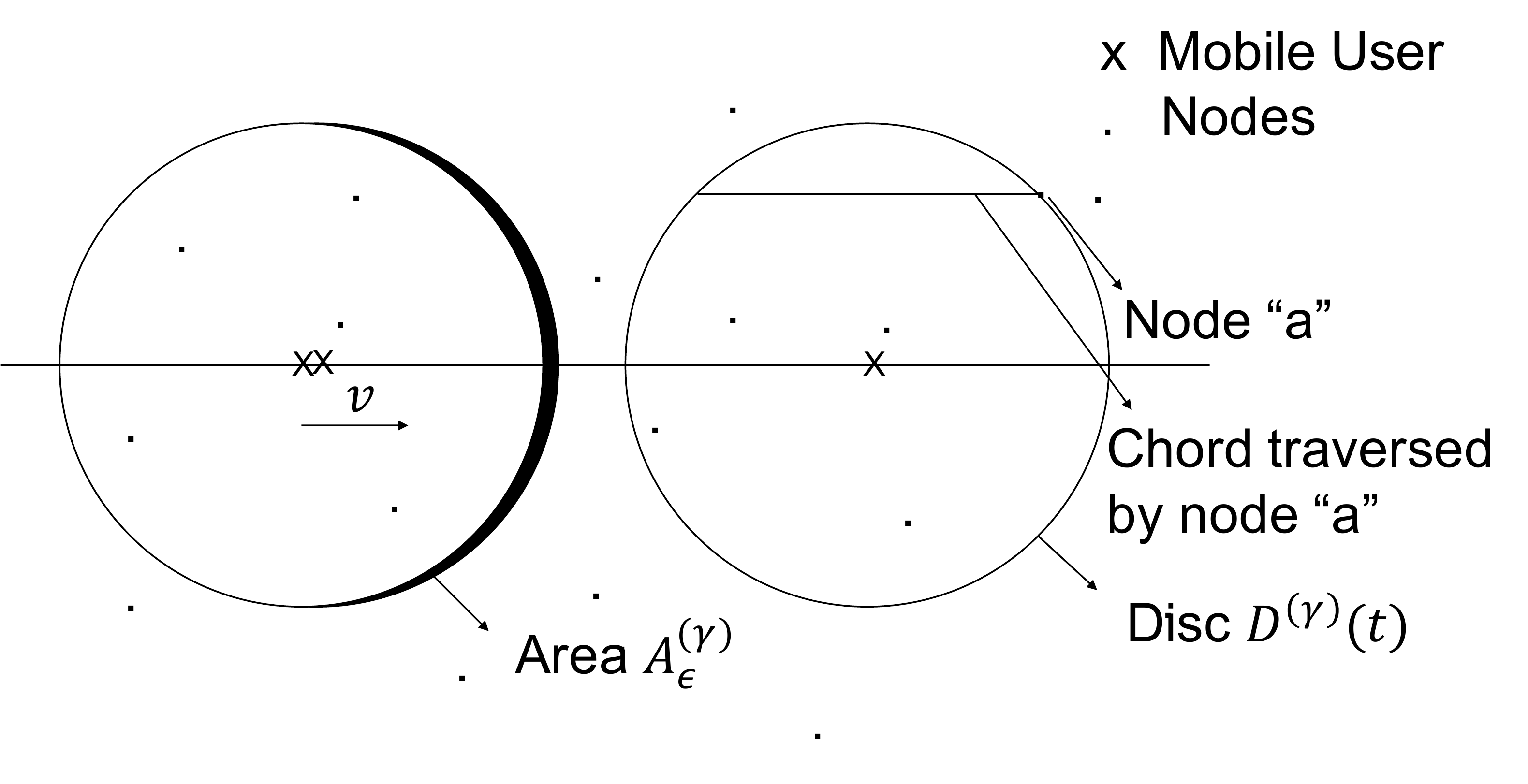}
\caption{The disc around the mobile user with area  $A^{(\gamma)}_{\epsilon}$  and the chord along which the node moves.}
\label{disc}
\end{figure}

Since the mobile moves at constant velocity, the distribution of the service times can be derived from the distribution of the chord lengths. Without loss of generality suppose the mobile moves along the x-axis, then clearly the $y$ coordinate of a typical node entering the disc,  $Y^{(\gamma)}$ is uniform on [$-r_{\gamma} , r_{\gamma}$]. The random variable $W^{(\gamma)} $ representing the service time is then given by:
$$
W^{(\gamma)} = \frac{2 \sqrt{r_{\gamma}^2 - (Y^{(\gamma)})^2}}{v}.
$$
The density of $W^{(\gamma)}$ is \eqref{eq:service density}, and the mean service time is $E[W^{(\gamma)}] = \frac{\pi r_{\gamma}}{2v}$.

Thus, the process $(K^{(\gamma)}(t), t \geq 0)$ capturing the number of nodes in the moving disc follows the dynamics of the number of customers in an $M/GI/\infty$ queue with arrival rate $\lambda^{({\gamma})}= 2r_{\gamma} v \lambda$ and general independent service times following the distribution of $W^{(\gamma)}$. It follows that the stationary distribution for $K^{(\gamma)}(t)$ is Poisson with mean $\pi r_{\gamma}^2 \lambda$ .
\end{proof}

Given the connection to an $M/GI/\infty$ queueing model, the SNR level crossing process is an alternating renewal process defined as follows:
\begin{definition}
\label{alternating renewal process}
A process alternating between successive on and off intervals is an alternating renewal process if the sequences of on period
$(B^{(\gamma)}_n , n\geq 1) $ and off period $(I^{(\gamma)}_n , n\geq 1)$ are independent sequences of i.i.d. non-negative random variables.
\end{definition}

\begin{theorem}
\label{level crossing characterization}
For all $\gamma>0$, the SNR level crossing process, $(C^{(\gamma)}(t), t \geq0)$, is an alternating-renewal process. Further, its typical on period, $B^{(\gamma)}$, and off period, $I^{(\gamma)}$, are distributed as the busy and idle periods of an $M/GI/\infty$ queue with arrival rate $\lambda^{({\gamma})} = 2r_{\gamma}v \lambda$ and i.i.d. service times with distribution given in \eqref{eq:service density}.
Thus, $I^{(\gamma)} \sim \exp(2 \lambda v r_{\gamma})$ and the busy period distribution can be explicitly characterized as in ~\cite{Makowski}. Also, in the stationary regime, the probability that the SNR level crossing process is \q{on} is $1 - e^{-\lambda\pi r_{\gamma}^2}$.
\end{theorem}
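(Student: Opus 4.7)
The plan is to reduce everything to the $M/GI/\infty$ queue representation already secured in Theorem~\ref{queueing model}. The pivotal observation is that $\mathrm{SNR}(t) \geq \gamma$ is equivalent to $L(t) \leq r_\gamma$, which in turn is equivalent to the disc $D^{(\gamma)}(t)$ containing at least one node of $\Phi$, i.e., $K^{(\gamma)}(t) \geq 1$. Hence the on-off indicator $C^{(\gamma)}(t)$ is exactly $\mathbf{1}(K^{(\gamma)}(t) \geq 1)$, so the on (resp. off) periods of $C^{(\gamma)}$ coincide pathwise with the busy (resp. idle) periods of the $M/GI/\infty$ queue constructed in Theorem~\ref{queueing model}.

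With this identification in hand, the renewal structure is inherited from the queue. Concretely, the epochs at which the queue transitions from idle to busy are regeneration points: because arrivals form a Poisson process of rate $\lambda^{(\gamma)} = 2 r_\gamma v \lambda$ and service times are i.i.d.\ and independent of the arrival process, the future after such an epoch is independent of the past and distributionally identical across epochs. This yields that the successive busy periods $(B_n^{(\gamma)})$ are i.i.d., the successive idle periods $(I_n^{(\gamma)})$ are i.i.d., and the two sequences are independent, matching Definition~\ref{alternating renewal process}. The exponential law of the idle period with parameter $\lambda^{(\gamma)} = 2\lambda v r_\gamma$ then follows from the memoryless property of the Poisson arrival stream: an idle period lasts until the next arrival of the chord-entry point process. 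The explicit busy-period distribution for the infinite-server queue with general service times is then quoted from \cite{Makowski}.

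For the stationary coverage probability, I would invoke the fact, stated at the end of the proof of Theorem~\ref{queueing model}, that the stationary distribution of $K^{(\gamma)}$ is Poisson with mean $\pi r_\gamma^2 \lambda$. Since being ``on'' means $K^{(\gamma)} \geq 1$, the stationary probability of coverage is $1 - e^{-\lambda \pi r_\gamma^2}$. One can also obtain this directly without the queueing picture, by noting that coverage at time $t$ is equivalent to $\Phi \cap D^{(\gamma)}(t) \neq \emptyset$ and applying the void probability of the Poisson process on a disc of area $\pi r_\gamma^2$; the two answers agree, providing a useful consistency check.

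The step likely to need the most care is justifying that the busy/idle alternation really forms a renewal process, since $(K^{(\gamma)}(t))$ itself is not Markov when service times are non-exponential. The clean way around this is to argue regeneration at idle-to-busy epochs rather than Markov property: at such epochs the number in system is $0$ and, conditional on this, the residual arrival time is exponential (Poisson input) and there are no in-service customers, so the whole future is a fresh independent copy of the process. This regeneration argument is what supplies both the independence across cycles and the identical distributions, and it is the only non-routine ingredient beyond Theorem~\ref{queueing model}.
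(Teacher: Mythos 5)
Your proposal is correct and follows essentially the same route as the paper: identify the on/off periods of $C^{(\gamma)}$ with the busy/idle periods of the $M/GI/\infty$ queue of Theorem~\ref{queueing model}, use the memoryless Poisson arrivals for the exponential idle period and the independence/regeneration at the start of each busy cycle for the alternating-renewal structure, and quote \cite{Makowski} for the busy-period law. Your derivation of the stationary ``on'' probability directly from the Poisson void probability (equivalently the Poisson stationary law of $K^{(\gamma)}$) is a slightly more direct path than the paper's use of the mean busy- and idle-period formulas, but it rests on the same queueing identification and yields the same result.
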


\begin{proof}
Let us consider the $M/GI/\infty$ queue defined in Theorem \ref{queueing model}. The queue being empty implies that there are no nodes closer than distance $r_{\gamma}$ from the tagged user. Thus, an off period of the associated SNR level crossing process is equivalent to an idle period of the queue. Similarly, the busy period is equivalent to an on period. 

Since the arrival process is Poisson, the inter arrival times are exponential with parameter $\lambda^{(\gamma)}= 2 \lambda v r_{\gamma}$. Because of the memoryless property, all the idle periods obey the same distribution.

Let $\hat{B}^{(\gamma)}$ denote a random variable representing the forward recurrence time associated with the on time defined as
\begin{equation}
\label{eq:forward recurrence}
 \mathbb{P}(\hat{B}^{(\gamma)} > x) = \frac{1}{\mathbb{E}[B^{(\gamma)}]} \int_x^{\infty} \mathbb{P} (B^{(\gamma)} > z) dz.
\end{equation}
Using the correspondence of the SNR on times with the busy period of the $M/GI/\infty$ queue considered, its distribution is same as that given in Theorem \ref{makowski} in the Appendix with service time $S \sim W^{(\gamma)}$ and
\begin{equation}
\label{eq: rho with r}
\rho^{(\gamma)} = 2r_{\gamma}v \lambda \mathbb{E}[W^{(\gamma)}] = \lambda\pi r_{\gamma}^2 ~\mbox , ~ ~ \nu^{(\gamma)} = 1 - e^{-\rho^{(\gamma)}}.
\end{equation}

Also, the busy period $B^{(\gamma)}$ depends upon the arrivals and service times of customers arriving after the customer initiating the busy period which are independent of the past arrivals. Thus the busy period $B^{(\gamma)}$ and idle period $I^{(\gamma)}$ are independent. Hence, the SNR level crossing process form an alternating-renewal process.

The mean of the busy period of the queue is given by \cite{Makowski}:
\begin{equation}
\label{eq:mean busy}
\mathbb{E}[B^{({\gamma})}] = \frac{\nu^{({\gamma})}}{2\lambda vr_{\gamma} (1-\nu^{({\gamma})})}.
\end{equation}

Thus, the probability that the SNR level crossing process is \q{on} is $1 - e^{-\lambda\pi r_{\gamma}^2}$.

\end{proof}

It is easy to see that all path loss functions which are
monotonic lead to an analogue of Theorem
\ref{level crossing characterization}.

\subsection{Fluctuations of the SNR Process}

The SNR process is pathwise continuous and has continuous derivatives
except at a countable set of times
corresponding to Voronoi cell edge crossings where the mobile sees
a hand-off and where the first derivative of the SNR process
is discontinuous.

Since the mobile is moving with a fixed velocity $v$,
the intensity of the cell-edge crossings is given by
$4 v \sqrt{\lambda} / \pi$ \cite{Miles1988}, \cite{okabe}.


As the mobile traverses a particular cell, there is a time where it
is the closest to the associated node. The SNR process has a local maximum
at these times. 
Perhaps surprisingly such maxima may happen at the edge of the 
cell. Let us denote the point process of \textit{interior} maxima by
$\chi$. The (time) intensity of $\chi$ is given in the following theorem.


\begin{theorem}
\label{th:thm1}
The intensity of the point process $\chi$ of the SNR maxima occurring in the interior of cells is $ v \sqrt{\lambda} $.
\end{theorem}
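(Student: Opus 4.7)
The goal is to identify the interior maxima of the SNR process with a thinning of $\Phi$ and then apply the Campbell--Mecke formula. I would proceed in the following steps.

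\textbf{Step 1: Geometric characterization of interior maxima.} Since $\mathrm{SNR}(t)$ is a strictly decreasing function of $L(t)$, a local maximum of $\mathrm{SNR}$ in the interior of a Voronoi cell corresponds to a local minimum of the distance from the mobile to its (constant on that cell) serving nucleus. Taking the trajectory to be the $x$-axis and writing $X_i = (x_i,y_i)$, the function $t \mapsto \|(vt,0) - X_i\|$ is minimized at $t_i := x_i/v$, with minimal value $|y_i|$. Hence $t_i$ is a time of interior maximum of the SNR process if and only if the foot of the perpendicular $(x_i,0)$ lies in the interior of the Voronoi cell $C(X_i)$, i.e.\ if and only if
\begin{equation*}
\|X_j - (x_i,0)\| > |y_i| \quad \text{for every } X_j \in \Phi \setminus \{X_i\}.
\end{equation*}
Equivalently, the open disc $B((x_i,0),|y_i|)$ contains no other point of $\Phi$.

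\textbf{Step 2: Expectation via Campbell--Mecke / Slivnyak.} Let $N_\chi([0,T])$ denote the number of interior maxima in $[0,T]$. From Step~1,
\begin{equation*}
N_\chi([0,T]) = \sum_{X_i \in \Phi} \mathbf{1}\{x_i \in [0,vT]\} \, \mathbf{1}\{\Phi(B((x_i,0),|y_i|)) = 1\}.
\end{equation*}
By the Campbell--Mecke formula and Slivnyak's theorem (the reduced Palm distribution of a Poisson process equals its law), the expectation becomes
\begin{equation*}
\mathbb{E}[N_\chi([0,T])] = \lambda \int_0^{vT} \!\! \int_{-\infty}^{\infty} \mathbb{P}\bigl(\Phi(B((x,0),|y|)) = 0\bigr)\, dy\, dx = \lambda \int_0^{vT} \!\! \int_{-\infty}^{\infty} e^{-\lambda \pi y^2}\, dy\, dx.
\end{equation*}

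\textbf{Step 3: Evaluation.} A Gaussian integral gives $\int_{-\infty}^{\infty} e^{-\lambda \pi y^2}\, dy = 1/\sqrt{\lambda}$, so $\mathbb{E}[N_\chi([0,T])] = vT\sqrt{\lambda}$, and dividing by $T$ yields intensity $v\sqrt{\lambda}$. Stationarity of $\chi$ (inherited from the stationarity of $\Phi$ along the $x$-axis) then allows one to identify this ratio as the time intensity.

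\textbf{Expected obstacle.} The calculation itself is a one-line Campbell computation; the only subtle point is the clean geometric translation in Step~1, in particular verifying that the complement event (a local SNR maximum coinciding with a Voronoi edge crossing) has zero intensity and therefore does not contribute to $\chi$. This follows because such coincidences require $(x_i,0)$ to lie exactly on the perpendicular bisector of $X_i$ and some other $X_j$, an event of Lebesgue measure zero under the Poisson law of $\Phi$.
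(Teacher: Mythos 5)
Your proposal is correct and follows essentially the same route as the paper: it characterizes an interior SNR maximum by the event that the disc centered at the node's projection onto the trajectory, with radius equal to the node's height, contains no other point of $\Phi$, and then applies the Mecke/Slivnyak formula with the Gaussian integral $\int_{-\infty}^{\infty} e^{-\lambda\pi y^2}\,dy = 1/\sqrt{\lambda}$ to get intensity $v\sqrt{\lambda}$. Your version is if anything slightly cleaner, integrating directly over all heights rather than the paper's limit over a rectangle of height $2x$.
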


Proof is given in Appendix A.

Thus, the fraction of SNR maxima that happen in the interior of the cell is given by $\pi/ 4$. In other words, $1-\frac{\pi}{4}$ of the SNR maxima seen by the mobile occur at the cell edges.

\subsection{{Rare Events}}


In this subsection we consider certain rare events such as the occurrence of large SNR values.
We show that the
\q{rarity hence exponentiality} principle \cite{Kei,Ald}
applies here. We also give the scales at which the inter-arrival
time of high SNR is close to an exponential.
As we shall see in Section \RN{7}, these asymptotic
results can also be used for moderate values of SNR for the parameters
typically found in wireless networks.

The following theorem gives a characterization of the asymptotic
behavior of the distribution of the random variable $V^{(\gamma)}$
corresponding to the up-crossings as $\gamma \to \infty$ and as
$\gamma \to 0$, which can be seen as \q{good} and \q{bad}
events respectively.

\begin{theorem}
\label{asymptotic}
For all $\gamma >0$, the up-crossings of the SNR level crossing process, $(T_{n}^{(\gamma)} , n \geq 1),$ constitute a renewal process. Let $V^{(\gamma)}$ be the typical inter arrival for this  process. Let $f(\gamma) =2 \lambda vr_{\gamma} $,
then 
$$ \lim_{\gamma \to \infty} f(\gamma) V^{(\gamma)}  \xrightarrow{d} \exp(1).  $$ 
Let $g(\gamma) = 2 \lambda vr_{\gamma} e^{-\lambda \pi r_{\gamma}^2}$ , then
$$ \lim_{\gamma \to 0} g(\gamma) V^{(\gamma)}  \xrightarrow{d} 
\exp(1).$$
\end{theorem}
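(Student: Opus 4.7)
The plan is to build everything on the alternating-renewal representation from Theorem~\ref{level crossing characterization}. That theorem gives $V^{(\gamma)} \stackrel{d}{=} B^{(\gamma)} + I^{(\gamma)}$ with $B^{(\gamma)}$ and $I^{(\gamma)}$ independent, and it makes the successive cycles i.i.d.; hence the up-crossing sequence $(T_n^{(\gamma)})$ is the renewal process of the cycle lengths, and the first claim is immediate. The two asymptotic statements then reduce to studying the on and off parts separately under their respective normalisations.

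For $\gamma \to \infty$ we have $r_\gamma \to 0$ and $\rho^{(\gamma)} = \lambda \pi r_\gamma^2 \to 0$: the $M/GI/\infty$ queue is in a vanishing-traffic regime, so the off period dominates. Since $I^{(\gamma)} \sim \exp(f(\gamma))$ exactly, $f(\gamma) I^{(\gamma)} \sim \exp(1)$ for every $\gamma$. For the on period, \eqref{eq:mean busy} gives $f(\gamma) \mathbb{E}[B^{(\gamma)}] = \nu^{(\gamma)}/(1 - \nu^{(\gamma)}) = e^{\rho^{(\gamma)}} - 1 \to 0$, whence Markov's inequality yields $f(\gamma) B^{(\gamma)} \to 0$ in probability, and Slutsky's theorem closes the case.

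For $\gamma \to 0$ the regime is opposite: $r_\gamma \to \infty$, $\rho^{(\gamma)} \to \infty$, the queue is heavily loaded, and the busy period dominates. The off contribution vanishes in the new scale since $\mathbb{E}[g(\gamma) I^{(\gamma)}] = e^{-\rho^{(\gamma)}} \to 0$, so $g(\gamma) I^{(\gamma)} \to 0$ in probability. The first moment of the on part is already correctly rescaled, $g(\gamma) \mathbb{E}[B^{(\gamma)}] = 1 - e^{-\rho^{(\gamma)}} \to 1$, but lifting this to convergence of the distribution to $\exp(1)$ is the main obstacle. The route I would follow is the \q{rarity hence exponentiality} principle of \cite{Kei}, applied to the regenerative epochs at which the $M/GI/\infty$ queue empties: their stationary occurrence rate is $g(\gamma)(1+o(1))$ and therefore tends to zero, which is exactly the hypothesis under which the normalised inter-regeneration time converges in distribution to $\exp(1)$. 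A more hands-on alternative is to exploit the explicit busy-period distribution of Theorem~\ref{makowski} in the Appendix and verify that the Laplace transform of $g(\gamma) B^{(\gamma)}$ converges to $s \mapsto 1/(1+s)$; the delicate step there is the asymptotic analysis of the service-time transform of $W^{(\gamma)}$ under the rescaling $s \mapsto s g(\gamma)$ in the regime $\rho^{(\gamma)} \to \infty$. Slutsky then combines $g(\gamma) B^{(\gamma)} \xrightarrow{d} \exp(1)$ with $g(\gamma) I^{(\gamma)} \to 0$ to give $g(\gamma) V^{(\gamma)} \xrightarrow{d} \exp(1)$.
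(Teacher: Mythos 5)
Your renewal-process claim and your treatment of the $\gamma \to \infty$ case are fine: using $V^{(\gamma)} \stackrel{d}{=} B^{(\gamma)} + I^{(\gamma)}$ with independence from Theorem~\ref{level crossing characterization}, noting $f(\gamma) I^{(\gamma)} \sim \exp(1)$ exactly, and killing the busy part via $f(\gamma)\mathbb{E}[B^{(\gamma)}] = e^{\rho^{(\gamma)}}-1 \to 0$ (from \eqref{eq:mean busy}) plus Markov and Slutsky is correct, and in fact more elementary than the paper's Laplace-transform computation for this case. The idle part of the $\gamma \to 0$ case is likewise fine. The genuine gap is the core of the theorem: the claim $g(\gamma) B^{(\gamma)} \xrightarrow{d} \exp(1)$ as $r_\gamma \to \infty$ is asserted but not proved. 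Your route (a) misapplies the rarity-hence-exponentiality principle: the natural regeneration epochs of the $M/GI/\infty$ queue are precisely the emptiness epochs, so the ``rare event'' occurs exactly once per regeneration cycle and the per-cycle rarity hypothesis of the Keilson/Aldous framework fails; the fact that the stationary rate $g(\gamma)$ of these epochs tends to zero is, by itself, not a sufficient hypothesis, and the quantity whose exponentiality you would be invoking (the inter-regeneration time) is exactly the quantity to be proved exponential, so the argument as stated is circular. Some additional structure (a block/mixing argument, or an explicit computation) is needed.

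Your route (b) is the paper's actual proof, but you leave its substance undone. What the paper has to do there is: (i) use Makowski's representation (Theorem~\ref{makowski}) of the forward recurrence time $\hat B^{(\gamma)}$ as a geometric sum of i.i.d.\ variables $U^{(\gamma)}_i$ with success probability $1-\nu^{(\gamma)} = e^{-\rho^{(\gamma)}}$; (ii) control the rescaled transform $\mathbb{E}[e^{-s g(\gamma) U^{(\gamma)}}]$ using the support bound $U^{(\gamma)} \le 2r_\gamma/v$ together with the nontrivial estimate $2\lambda v r_\gamma \mathbb{E}[U^{(\gamma)}] \to 1$ (Lemma~\ref{equality}, itself a delicate Laplace-type asymptotic analysis of an integral), which is what makes the geometric-sum limit come out as $1/(1+s)$ — note a naive appeal to R\'enyi's geometric-sum theorem does not apply directly because the summand distribution varies with $\gamma$; and (iii) transfer the limit from $\hat B^{(\gamma)}$ to $B^{(\gamma)}$ via $\mathbb{E}[e^{-s g(\gamma) B^{(\gamma)}}] = 1 - s\, g(\gamma)\mathbb{E}[B^{(\gamma)}]\,\mathbb{E}[e^{-s g(\gamma) \hat B^{(\gamma)}}]$ with \eqref{eq:mean busy}. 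Flagging this computation as ``the delicate step'' without carrying it out leaves the hard half of the theorem unproved; everything else in your proposal is the easy part.
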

\begin{proof}
The proof is given in the Appendix.\\
Notice that as $\gamma \to \infty$, the radius of coverage $r_{\gamma} \to 0$ and vice versa. The scale $f(\gamma)^{-1}$
of the inter-event times of \q{good events}  
is sub-linear in $r_{\gamma}$ when $r_{\gamma}$ tends to 0,
whereas the scale $g(\gamma)^{-1}$ for \q{bad events}
is exponential in the variable $r_{\gamma}^2$
when $r_{\gamma}$ tends to infinity. Thus, the good events happen in a sense more often than the bad events.

For bad events, as $r_{\gamma}$ tends to $\infty$, a disc of radius $r_{\gamma}$ should be empty whose probability is given by $e^{-\lambda \pi r_{\gamma}^2} $. Conversely, for good events, as $r_{\gamma}$ tends to 0, there should at least be a node in the disc of radius $r_{\gamma}$, an event whose probability is given by $ 1 - e^{-\lambda \pi r_{\gamma}^2} $. Thus, the difference in scale is due to the fact that the probability of occurrence of a bad event goes to zero with $r_{\gamma}$ tending to infinity, faster than the probability of occurrence of good event with $r_{\gamma}$ tending to 0.
\end{proof}

\section{Sharing Number Process}

In contrast to the SNR or the Shannon rate process, which take their values
in the continuum, the sharing number process takes a discrete set of
values and is piece-wise constant.
We shall start by evaluating the frequency of discontinuities and then introduce
an upper-bound process to be used in the sequel. We conclude this section with a study of rare events.

\subsection{Discontinuities}
In the sequel, we will use the following simplified notion of a Johnson-Mehl cell, see \cite{Miles1988}.

\begin{definition}
Consider $\Phi = \{X_i\}$ the Poisson point process of nodes on $\mathbb{R}^2 $. For any threshold $\gamma$,
the Johnson--Mehl cell $J^{(\gamma)}(X_i)$ associated with $X_i$ is the 
intersection of the Voronoi cell of $X_i$ w.r.t. $\Phi$
with a disc of radius $r_{\gamma}$ centered at $X_i$, see Fig.\ref{johnson}.
\end{definition}
Thus for a fixed threshold $\gamma$, the mobile user is covered/served at time $t$  if and only if 
it is within the Johnson--Mehl cell of its associated node $X(t)$ for the radius
$r_{\gamma}$.
We recall that:
\begin{definition}
The process $(N^{(\gamma)} (t), t \geq 0)$ is defined as the number of static users present in the Johnson-Mehl cell $J^{(\gamma)}(X(t))$ if the mobile is in $J^{(\gamma)}(X(t))$ and 0 otherwise.
\end{definition}


\begin{figure}[!t]
\centering
\includegraphics[scale = 0.45]{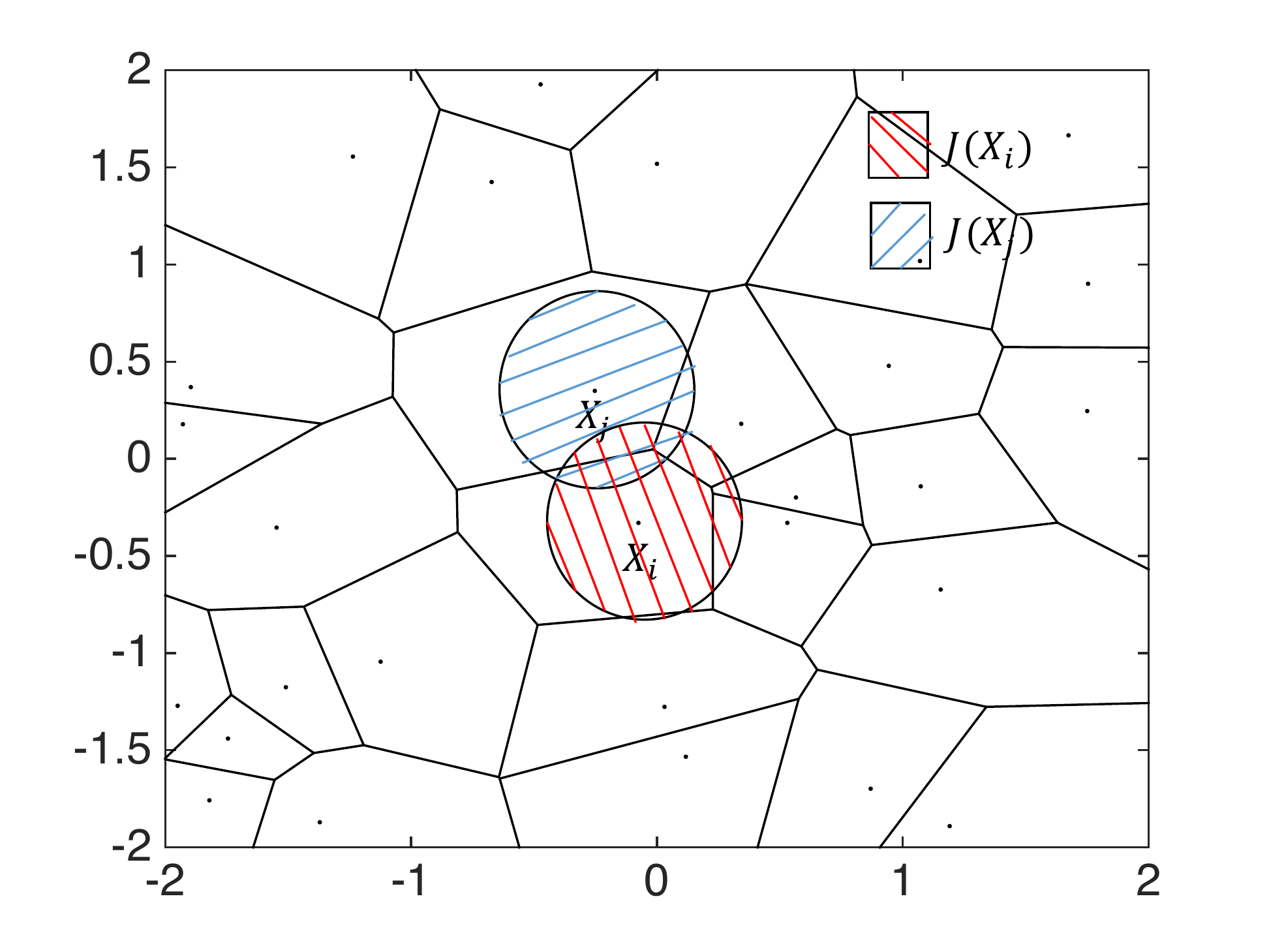}
\caption{Johnson-Mehl cells.}
\label{johnson}
\end{figure}

The process 
$(N^{(\gamma)} (t), t \geq 0)$
is an ${\mathbb N}$-valued
piece-wise constant process, with jumps 
at certain Johnson--Mehl cell edge crossings as illustrated in Figure \ref{sharingprocess}.
The value it assumes upon entering a cell is a conditionally Poisson
random variable with a parameter that depends on the area of the cell
in question. The following theorem provides an upper bound for the
jump intensity:

\begin{figure}[!t]
\centering
\includegraphics[scale= 0.4]{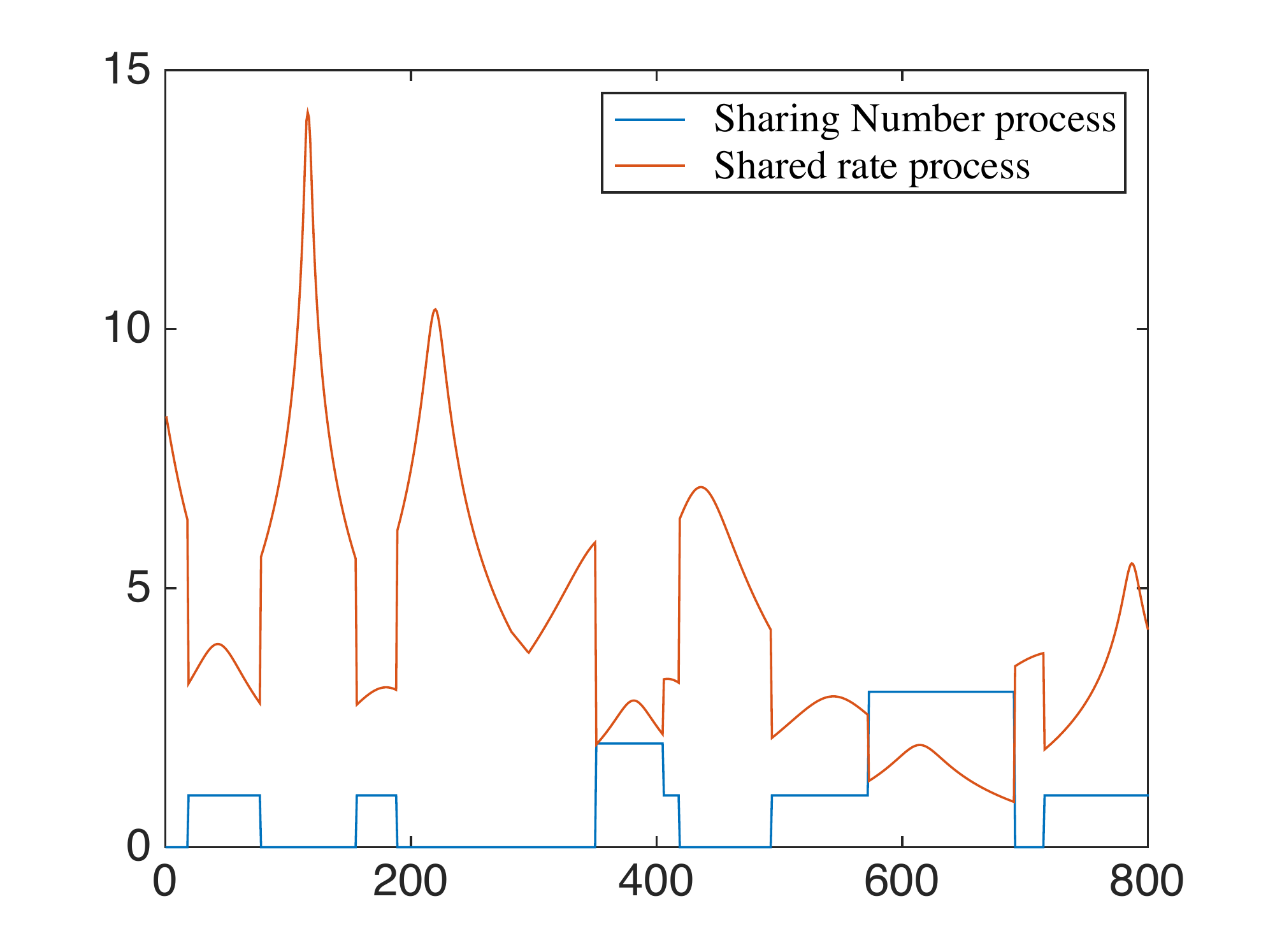}
\caption{Trace of the Sharing number and Shared rate processes}
\label{sharingprocess}
\end{figure}

\begin{theorem}
\label{thdissn}
An upper bound for the intensity of discontinuities of the sharing
number process is the intensity of the Johnson-Mehl cell edge crossings,
which is given by:
\end{theorem}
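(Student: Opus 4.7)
The plan is to prove the bound in two steps: first, to argue that any discontinuity of $(N^{(\gamma)}(t), t \geq 0)$ must occur at a time when the mobile crosses the boundary of some Johnson--Mehl cell, and second, to compute the intensity of such crossings explicitly.

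For the first step, I would observe that, by definition, on any open time interval during which the mobile stays inside a fixed Johnson--Mehl cell $J^{(\gamma)}(X_i)$, the value of $N^{(\gamma)}(t)$ equals the (fixed) number of static users in $J^{(\gamma)}(X_i)$, because the static users form an independent Poisson point process that does not evolve in time. Similarly, on any open interval during which the mobile lies outside every $J^{(\gamma)}(X_i)$ (i.e., is not served), $N^{(\gamma)}(t) \equiv 0$ by the definition given above. Hence $N^{(\gamma)}$ is piece-wise constant on the connected components of the complement of the set of times at which the mobile's trajectory hits $\bigcup_i \partial J^{(\gamma)}(X_i)$. Consequently, the point process of discontinuity times is dominated (as a simple point process) by the point process of Johnson--Mehl cell edge crossings, and so its intensity is bounded above by the intensity of the latter.

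For the second step, I would decompose $\partial J^{(\gamma)}(X_i)$ into two parts: (a) the portions of the Voronoi edges adjacent to $X_i$ that lie inside the closed disc of radius $r_{\gamma}$ around $X_i$; and (b) the circular arcs of that disc that lie inside the Voronoi cell of $X_i$. Crossings of type (a) are, in particular, crossings of Voronoi edges of $\Phi$, and by the Miles formula recalled above, the intensity of Voronoi edge crossings by the mobile is $\tfrac{4 v \sqrt{\lambda}}{\pi}$. Crossings of type (b) are entries or exits of the mobile into some disc $B(X_i, r_{\gamma})$. Equivalently (by translating to the mobile's frame), they are arrivals to, or departures from, the moving disc $D^{(\gamma)}(t)$. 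By Theorem~\ref{queueing model}, arrivals occur at rate $\lambda^{(\gamma)} = 2 r_{\gamma} v \lambda$, and by stationarity of the $M/GI/\infty$ queue the departure rate equals the arrival rate, so the total intensity of type (b) crossings is bounded by $4 r_{\gamma} v \lambda$. Adding the two contributions yields the upper bound
\begin{equation*}
\frac{4 v \sqrt{\lambda}}{\pi} + 4 r_{\gamma} v \lambda.
\end{equation*}

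The mildly delicate point, and what I expect to be the main conceptual obstacle, is that the decomposition in step two overcounts Johnson--Mehl edge crossings (a Voronoi edge crossing that happens outside both adjacent coverage discs is not a $J^{(\gamma)}$-edge crossing, and a circle crossing that happens outside the corresponding Voronoi cell is not one either), and that even at a genuine $J^{(\gamma)}$-edge crossing the sharing number may fail to jump (for example, the mobile may exit coverage from a cell containing zero static users, or may enter a new cell whose static population happens to equal the current value). Both effects go in the right direction for the inequality, so the upper bound is sound; making it an equality would require a more careful Palm/Campbell argument to intersect the two contributions with the appropriate indicator events, which is unnecessary for the statement as given.
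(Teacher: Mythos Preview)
Your first step is exactly right and is essentially the whole of the paper's argument for the inequality: the paper simply observes that not every Johnson--Mehl edge crossing produces a jump (two adjacent cells can carry the same number of static users), so the crossing intensity dominates the jump intensity.

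The gap is in your second step. The theorem does not merely ask for \emph{some} upper bound; it asserts that the bound is the intensity of Johnson--Mehl cell-edge crossings and states that intensity explicitly. Your decomposition does not compute that quantity: it counts \emph{all} Voronoi-edge crossings (not only those occurring within distance $r_\gamma$ of the adjacent nuclei) and \emph{all} entries/exits of the moving disc (not only those occurring inside the corresponding Voronoi cell), so your expression $\tfrac{4v\sqrt{\lambda}}{\pi}+4r_\gamma v\lambda$ is strictly larger than the Johnson--Mehl crossing rate and does not coincide with the stated formula. For instance, as $r_\gamma\to 0$ your expression tends to the positive Voronoi rate $4v\sqrt{\lambda}/\pi$, whereas the Johnson--Mehl cells degenerate and their boundary-crossing rate tends to $0$; as $r_\gamma\to\infty$ your expression diverges while the Johnson--Mehl rate tends to the finite Voronoi rate. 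You flag this overcounting yourself and then dismiss the refinement as ``unnecessary for the statement as given,'' but that is a misreading of what is being claimed: the displayed formula is part of the theorem.

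The paper does not carry out the refined computation either; it obtains the explicit Johnson--Mehl edge-crossing intensity by citing M{\o}ller's results on Johnson--Mehl tessellations. If you wanted to derive it from scratch along the lines you sketch, you would have to insert into each of your two contributions the indicator that the crossing point actually lies on $\partial J^{(\gamma)}(X_i)$ (restricting the Voronoi-edge term to points within distance $r_\gamma$ of the nuclei and the circle term to points inside the corresponding Voronoi cell) and evaluate via a Campbell/Palm argument---precisely the step you set aside.
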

$$\frac{4v\sqrt{\lambda}}{\pi} (\mbox{erf}(\sqrt{\lambda \pi} r_{\gamma} - 2\sqrt{\lambda}r_{\gamma} e^{-\lambda \pi r_{\gamma}^2}) .$$

\begin{proof}
The intensity of cell edge crossings in a Johnson-Mehl tessellation is given in \cite{moller1992random}. Not all the cell edge crossings lead to jumps as the number of static users can be the same across two
adjacent Johnson-Mehl cells, thus the given intensity is an upper bound.
\end{proof}

Given that the value of $N^{(\gamma)}(t)$ is a non-zero constant, the {\em time of constancy} is defined as the typical amount of
time it remains at that same constant. This time of constancy
is lower bounded by the distribution of the chord length of
the motion line's intersection with a typical Johnson--Mehl cell
\cite{moller1992random}. It is a lower bound only since
the number of static users can be the same across two
adjacent Johnson-Mehl cells.

\subsection{Upper Bound for the Sharing Number Process}
Note that the two underlying processes that are used to define the shared
rate process, namely the sharing number process $(N^{(\gamma)}(t), t \geq 0)$
and the SNR process $(\mbox{SNR}(t), t \geq 0 )$, are not independent
as easily seen in the case where $\gamma=0$.
For example, assume that the mobile user experiences a low SNR, i.e.,
it is far from its associated node, Then the area of
the Voronoi cell is likely to be large. This in turn implies that the mobile is more
likely to share its node with a large number of users.

Fortunately, the sharing number process admits a simple upper
bound which is independent
of the SNR process. This upper bound is defined as follows:
\begin{definition}
Let $D(x,r_{\gamma})$ denote a disc of radius $r_{\gamma}$ centered at location $x$. Let $\hat{N}^{(\gamma)}(t) , t \geq  0, $
be the number of static users which are in the
disc $D(X(t),r_{\gamma})$ if the mobile is in $J^{(\gamma)}(X(t))$, and 0 otherwise.
\end{definition}

The stochastic process $(\hat{N}^{(\gamma)}(t) , t \geq0 )$
enjoys most of the structural
properties of $({N^{(\gamma)}}(t) , t \geq0 )$; in particular,
it is piece-wise constant and one can derive
natural bounds on the intensity of its jumps.
In addition and more importantly: 
\begin{itemize}
\item For all $t\ge 0$,
$\mathbb{P(}\hat{N}^{(\gamma)}(t) \geq N^{(\gamma)}(t)) =1 $, i.e., 
$\hat{N}^{(\gamma)}(t)$ is an upper bound for the sharing number;
this bound is tight in the regime where the area of the typical 
Voronoi cell ($\frac{1}{\lambda}) $ is large compared to the area
of the disc of radius $r_{\gamma}$;
\item When the mobile user is covered, the sharing number
$\hat{N}^{(\gamma)}(t)$
is Poisson with parameter $\xi \pi r_{\gamma}^2$; 
\item The stochastic processes $(\hat{N}^{(\gamma)}(t) , t \geq0 )$ and
$(\mbox{SNR}(t), t \geq 0 )$ are independent.
\end{itemize}

\section{Shared Rate process }

We recall that the {\em shared rate} is
\begin{equation}
S^{(\gamma)}(t) =R^{(\gamma)}(t) F^{(\gamma)}(t) = R^{(\gamma)}(t) \frac{1}{N^{(\gamma)}(t)+1},
\end{equation}
with  $R^{(\gamma)}(t)$ the Shannon rate defined in \eqref{eq:shannon} and $F^{(\gamma)}(t)$
the sharing factor at time $t$. A realization of this process is illustrated in Figure \ref{sharingprocess}.
In the sequel, we use our upper bound process on $N^{(\gamma)}(t)$ to obtain a lower bound on the shared rate: 
\begin{equation}
\label{eq:sharedrateb}
\hat{S}^{(\gamma)}(t) =R^{(\gamma)}(t) \hat{F}^{(\gamma)}(t) = R^{(\gamma)}(t)\frac{1}{\hat N^{(\gamma)}(t)+1},
\end{equation}
which is now a product of two {\em independent} random variables. 

\subsection{Shared Rate Variability}

The shared rate process is smooth for Lebesgue almost every $t$,
with a countable set of discontinuities 
and a countable set of points of discontinuities of its first derivative.
It is equal to zero when the mobile is not covered.
The point process of discontinuities
is upper bounded by the Johnson-Mehl cell edge crossings (see Theorem \ref{thdissn}).
The point process of its local maxima where the derivative is zero
is the same as that of the SNR process as given in Theorem \ref{th:thm1}.

As can be seen from \eqref{eq:sharedrateb}, the variability in the mobile user's shared rate is driven by two processes: the Shannon rate and the sharing factor process. It is of interest to understand their relative contributions. To answer this question, let us consider
the variance of $\hat S^{(\gamma)}$  which can be written using the conditional variance formula as:
\begin{equation}
\mbox{var}(\hat S^{(\gamma)}) =  \mbox{var}(R^{(\gamma)}) \mathbb{E}[\hat{F}^{(\gamma)}]^2 +\mbox{var}(\hat{F}^{(\gamma)})(\mathbb{E}[(R^{(\gamma)})^2]).
\end{equation}

Note that if the density of users increases, the variance of the
shared factor decreases whereas the variance of the Shannon rate remains constant. Thus, the variance of the shared
rate varies approximately linearly with a slope equal to
the second moment of the Shannon rate. By contrast if the density of nodes increases, the variance of both the sharing factor and Shannon rate vary making their relative contributions more complex.

Let us see empirically what are the contributions of the user and node density to the variability of shared rate. For a given node density $ \lambda = 25/\pi$ and radius of coverage $ r_{\gamma} = 500 m$, by increasing the density of static users,
the variance of the shared rate decreases linearly with 
the variance of the sharing factor.
Table \RN{2} gives numerical values evaluated by simulations.
For a given static user density $\xi = 50$ and radius of coverage of $200m$, by increasing the density of nodes the variance of both the sharing factor and Shannon rate varies. Table \RN{3} gives numerical values 
evaluated by simulations.

\begin{table}
\label{table:2}
\normalsize
\begin{center}
\begin{tabular}{ |c|c|c| } 
\hline
\textbf{$\xi$} & \textbf{var($S^{(\gamma)}$)} & \textbf{var($\hat{F}^{(\gamma)}$)} \\
\hline
5 & 0.6527 & 0.0303 \\
25 & 0.0674 & 0.0028 \\
50 & 0.012 & 0.0016 \\
75 & 0.0047 & 0.0006 \\
100 & 0.0023 & 0.0001 \\
\hline

\end{tabular}
\end{center}
\caption{Variance values when increasing user density.}
\end{table}

\begin{table}
\label{table:3}
\normalsize
\begin{center}
\begin{tabular}{ |c|c|c|c| } 
\hline
\textbf{$\lambda$} & \textbf{var($S^{(\gamma)}$)} & \textbf{var($\hat{F}^{(\gamma)}$)} & \textbf{var($R^{(\gamma)}$)}\\
\hline
5 & 0.0438 & 0.0049 & 0.3325  \\
25 & 5.1292 & 0.0576 & 03.3944  \\
50 & 0.6351 & 0.0642 & 0.9607  \\
75 & 01.2626 & 0.0723 & 0.6926 \\
100 & 0.9682 & 0.0756 & 0.2252  \\
\hline

\end{tabular}
\end{center}
\caption{Variance values when increasing node density.}
\end{table}

These results confirm that whereas the impact of increasing the other users density on mobile user's shared rate variance is clear, the result of increasing the density of nodes is more subtle.

\subsection{Rare Events}
Since the mobile user's shared rate depends on two components it is of interest to understand their relative contributions towards the events of high/low shared rate.
The proofs of the following theorems are given in Appendix.

\begin{theorem}
\label{thhsr}
The likelihood of the rare events associated with high shared rates 
is the same (up to logarithmic equivalence)
as that for the high SNR in the sense that
\begin{eqnarray}
& & \hspace{-1cm} \lim_{s\to \infty} - \frac 1 s \log\left(
\mathbb{P}(S^{(\gamma)} > s)\right)\nonumber \\
& = &\lim_{s\to \infty} - \frac 1 s \log\left(
\mathbb{P}(\log( 1+{\mathrm{SNR}})> s)\right)\nonumber =  \frac 2 \beta.
\end{eqnarray}
\end{theorem}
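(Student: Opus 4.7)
The plan is to sandwich $\mathbb{P}(S^{(\gamma)}>s)$ between $\mathbb{P}(R^{(\gamma)}>s)$ and a constant multiple of it, using the independent upper bound process $\hat N^{(\gamma)}$ on the sharing number introduced in Section IV, and then to compute the Shannon/SNR tail directly using the nearest-neighbor distribution for a planar Poisson point process.

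First I would compute the right-hand side limit explicitly. The event $\{\log(1+\mathrm{SNR})>s\}$ coincides with $\{L< r(s)\}$, where $r(s):=(p/(w(e^s-1)))^{1/\beta}$. Using the classical nearest-neighbor distribution $\mathbb{P}(L<r)=1-e^{-\lambda \pi r^2}$, we obtain $\mathbb{P}(\log(1+\mathrm{SNR})>s)=1-\exp\bigl(-\lambda\pi (p/w)^{2/\beta}(e^s-1)^{-2/\beta}\bigr)$. For $s\to\infty$ this is asymptotically equivalent to $\lambda\pi(p/w)^{2/\beta}(e^s-1)^{-2/\beta}$, whose $-\log$ divided by $s$ tends to $2/\beta$.

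For the shared rate, the upper sandwich is immediate: since $F^{(\gamma)}\in(0,1]$ on coverage and $S^{(\gamma)}=0$ otherwise, $S^{(\gamma)}\le R^{(\gamma)}$ almost surely, giving $\liminf_s-s^{-1}\log\mathbb{P}(S^{(\gamma)}>s)\ge 2/\beta$. For the matching lower sandwich, I would invoke the inequality $\hat S^{(\gamma)}\le S^{(\gamma)}$ (coming from $\hat N^{(\gamma)}\ge N^{(\gamma)}$) and then bound $\mathbb{P}(\hat S^{(\gamma)}>s)\ge\mathbb{P}(R^{(\gamma)}>s,\hat N^{(\gamma)}=0)$. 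The decisive property from Section IV is that $\hat N^{(\gamma)}$ is independent of the SNR process and, conditional on coverage, is Poisson with mean $\xi\pi r_\gamma^2$. Since $\{R^{(\gamma)}>s\}$ entails coverage for $s$ large, the factorization yields $\mathbb{P}(S^{(\gamma)}>s)\ge e^{-\xi\pi r_\gamma^2}\mathbb{P}(R^{(\gamma)}>s)$, and the constant factor is washed out at the $-s^{-1}\log$ scale.

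The main obstacle the plan must work around is that the true sharing number $N^{(\gamma)}$ is \emph{not} independent of the SNR process: conditional on a very low SNR, the Voronoi cell of the serving node is typically large, which inflates $N^{(\gamma)}$. Substituting the coarser but independent surrogate $\hat N^{(\gamma)}$ is precisely what allows the clean factorization above, at the cost of a multiplicative factor $e^{-\xi \pi r_\gamma^2}$ which is harmless at the logarithmic-equivalence scale. A minor bookkeeping point: if the bandwidth constant $a$ in $R^{(\gamma)}=a\log(1+\mathrm{SNR})$ is not normalized to $1$, the same argument produces $2/(a\beta)$ on the shared-rate side; in keeping with the theorem's statement, I would either absorb $a$ into the rescaling of $s$ or take $a=1$ at the outset.
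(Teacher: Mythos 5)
Your proposal is correct and follows essentially the same route as the paper: both sandwich $S^{(\gamma)}$ between the independent lower-bound process $\hat S^{(\gamma)}$ and $\log(1+\mathrm{SNR})$, compute the SNR tail exactly from the void probability $\mathbb{P}(L<r)=1-e^{-\lambda\pi r^2}$, and exploit the independence of $\hat N^{(\gamma)}$ from the serving distance. The only (harmless) difference is in the lower bound, where you restrict to the event $\{\hat N^{(\gamma)}=0\}$ to get the constant factor $e^{-\xi\pi r_{\gamma}^2}$, whereas the paper keeps the full expectation over the Poisson variable $\hat N^{(\gamma)}$ via its Laplace transform; both prefactors disappear at the $-\tfrac1s\log$ scale, and your handling of the bandwidth constant $a$ matches the paper's implicit normalization.
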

Notice that as a direct corollary of the last theorem,
\begin{eqnarray}
& & \hspace{-1cm} \lim_{s\to \infty} - \frac 1 s \log\left(
\mathbb{P}(S^{(\gamma)} > s)\right)\nonumber \\
& = &\lim_{s\to \infty} - \frac 1 s \log\left(
\mathbb{P}(S^{(\gamma)} >  s | N^{(\gamma)} = 0)\right)\nonumber\\
& = &\lim_{s\to \infty} - \frac 1 s \log\left(
\mathbb{P}(S^{(\gamma)} >  s , N^{(\gamma)} = 0)\right)\nonumber =  \frac 2 \beta.
\end{eqnarray}
For the rare events associated with a low shared rate, we will consider
the lower bound process $(\hat S^{(\gamma)}(t), t\ge 0)$:
\begin{theorem}
\label{thlsr}
The rare events of low shared rates are predominantly the same
as the rare events of high sharing number given that the mobile is at the
covering cell edge, in the sense that for some sequence $\{s_n\}$
such that $s_n \log (1 +Kr_{\gamma}^{-\beta}) \in \mathbb{Z} ~\forall n$
and $\lim_{n \to \infty} s_n \to \infty$: 
\begin{eqnarray}
& & \hspace{-1cm} \lim_{n\to \infty} - \frac 1 {s_n \log(s_n)}
\log\left( \mathbb{P}\left(\hat S^{(\gamma)} < \frac 1 s_n\right)\right)\nonumber \\
& = &\lim_{n \to \infty} - \frac 1 {s_n \log(s_n)} \log\left(
\mathbb{P}\left( 
\frac{\log (1 +Kr_{\gamma}^{-\beta})}
{\hat N^{(\gamma)} +1} < \frac 1 s_n 
 \right)\right)\nonumber\\
& = & \log (1 +Kr_{\gamma}^{-\beta}).
\end{eqnarray}

\end{theorem}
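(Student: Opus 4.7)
The plan is to reduce the low shared rate event $\{\hat S^{(\gamma)} < 1/s_n\}$ to a Poisson tail event for $\hat N^{(\gamma)}$ by exploiting two facts from Section IV.B: first, on coverage $\hat N^{(\gamma)}$ is Poisson with parameter $\xi \pi r_{\gamma}^2$ and is independent of the SNR (hence of $R^{(\gamma)}$); second, on the coverage event $\{L \le r_{\gamma}\}$ the rate $R^{(\gamma)} = \log(1 + K L^{-\beta})$ takes values in $[\log(1+Kr_{\gamma}^{-\beta}), \infty)$, with the minimum attained at $L = r_\gamma$. I restrict attention to the coverage event (the only region supporting a nontrivial small positive shared rate, since off coverage both $\hat N^{(\gamma)}$ and $\hat S^{(\gamma)}$ are $0$), so that the question reduces to balancing $R^{(\gamma)}$ being close to its floor against $\hat N^{(\gamma)}+1$ exceeding $s_n R^{(\gamma)}$.

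The key input is the Poisson tail asymptotic: for $Y \sim \mathrm{Poisson}(\mu)$ and any integer sequence $k_n \to \infty$, Stirling's formula yields
\begin{equation*}
-\log \mathbb{P}(Y \ge k_n) = k_n \log k_n - k_n(1 + \log \mu) + O(\log k_n),
\end{equation*}
so that dividing by $k_n \log k_n$ gives a limit of $1$. When $k_n$ is linear in $s_n$, dividing by $s_n \log s_n$ instead recovers the leading coefficient of $k_n$, which in our application will be exactly $\log(1+Kr_\gamma^{-\beta})$.

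For the upper bound on the probability (the lower bound on the rate), the inequality $R^{(\gamma)} \ge \log(1+Kr_\gamma^{-\beta})$ on coverage, together with the integrality of $s_n \log(1+Kr_\gamma^{-\beta})$, gives the clean containment
\begin{equation*}
\{\hat S^{(\gamma)} < 1/s_n\} \;\subset\; \{\hat N^{(\gamma)} \ge s_n \log(1+Kr_\gamma^{-\beta})\},
\end{equation*}
and the Poisson tail above delivers the required inequality. For the matching lower bound I fix $\delta > 0$ and consider the subset event $\{L \in [r_\gamma - \delta,\, r_\gamma]\} \cap \{\hat N^{(\gamma)} \ge \lceil s_n \log(1+K(r_\gamma-\delta)^{-\beta})\rceil\}$: on this set $R^{(\gamma)} \le \log(1+K(r_\gamma-\delta)^{-\beta})$, forcing $\hat S^{(\gamma)} < 1/s_n$. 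Using the independence of $L$ and $\hat N^{(\gamma)}$, the probability factorizes into a positive constant $\mathbb{P}(L \in [r_\gamma-\delta, r_\gamma]) > 0$ (since $L$ has positive density near $r_\gamma$) times a Poisson tail of rate $\log(1+K(r_\gamma-\delta)^{-\beta})$; sending $\delta \downarrow 0$ closes the sandwich. The second equality in the theorem is obtained identically, since the event $\{\log(1+Kr_\gamma^{-\beta})/(\hat N^{(\gamma)}+1) < 1/s_n\}$ is precisely $\{\hat N^{(\gamma)} \ge s_n \log(1+Kr_\gamma^{-\beta})\}$ by integrality, so that the Poisson tail constant is again $\log(1+Kr_\gamma^{-\beta})$; this also justifies the interpretation of the rare event as a high-sharing-number event at the cell edge.

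The main obstacle I anticipate is coordinating the $n \to \infty$ and $\delta \downarrow 0$ limits without losing logarithmic exactness, and absorbing the ceiling correction in the lower-bound subset into a lower-order term of the Poisson exponent. The integrality hypothesis $s_n \log(1+Kr_\gamma^{-\beta}) \in \mathbb{Z}$ is the feature that makes the upper-bound containment exact; without it an $O(1)$ correction would appear in $k_n$ but would not affect the logarithmic rate, so the hypothesis is a convenience rather than a necessity.
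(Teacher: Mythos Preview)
Your proposal is correct and follows essentially the same approach as the paper: both reduce the event $\{\hat S^{(\gamma)}<1/s_n\}$ on coverage to a Poisson tail for $\hat N^{(\gamma)}$ via the monotonicity bound $R^{(\gamma)}\ge \log(1+Kr_\gamma^{-\beta})$ for the upper estimate, and both obtain the matching lower estimate by localizing $D$ near $r_\gamma$ and invoking independence plus Stirling on the Poisson tail. The only cosmetic differences are that the paper uses Chernoff for the Poisson upper tail and picks a single term $l=k=s_n\log(1+Kr_\gamma^{-\beta})$ in the sum over $\hat N^{(\gamma)}$ for the lower bound, whereas you use Stirling throughout and a fixed-$\delta$ shell $\{L\in[r_\gamma-\delta,r_\gamma]\}$ followed by $\delta\downarrow 0$; your version is arguably cleaner since the shell probability is a positive constant in $n$.
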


\begin{theorem}
Conditioned on a very good or bad shared rate, the relative contribution of the mobile user's location and the network congestion is as follows:
\begin{equation}
\lim_{s \to \infty}  \mathbb{P}(R^{(\gamma)} > s , N^{(\gamma)} = 0 | S^{(\gamma)} > s) = 1.
\end{equation}
\begin{equation}
\mathbb{P}(N^{(\gamma)} > as \log(1+\gamma) - 1 | 0 < S^{(\gamma)} <1/s) =  1.
\end{equation}

\end{theorem}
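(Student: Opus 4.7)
The plan is to treat the two assertions by rather different means: the first is a ratio-of-tails statement requiring sharp estimates on the joint behaviour of $R^{(\gamma)}$ and $N^{(\gamma)}$, whereas the second is essentially a deterministic pointwise containment.

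For the first assertion, I would first observe that on the event $\{N^{(\gamma)}=0\}$ one has $S^{(\gamma)}=R^{(\gamma)}$, so $\{R^{(\gamma)}>s,\,N^{(\gamma)}=0\}$ coincides with $\{S^{(\gamma)}>s,\,N^{(\gamma)}=0\}$ and the conditional probability reduces to $\mathbb{P}(S^{(\gamma)}>s,\,N^{(\gamma)}=0)/\mathbb{P}(S^{(\gamma)}>s)$. To push this ratio to one it suffices to show that $\mathbb{P}(S^{(\gamma)}>s,\,N^{(\gamma)}\ge 1)$ is of strictly smaller order than $\mathbb{P}(S^{(\gamma)}>s,\,N^{(\gamma)}=0)$. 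For the former I would partition over $\{N^{(\gamma)}=k\}$ and use the trivial inclusion $\{S^{(\gamma)}>s,\,N^{(\gamma)}=k\}\subseteq\{R^{(\gamma)}>(k+1)s\}$; the explicit distribution of the distance $L$ to the nearest node gives $\mathbb{P}(R^{(\gamma)}>u)\sim c\,e^{-2u/(a\beta)}$, which is exactly the decay rate underlying Theorem \ref{thhsr}, so summing the resulting geometric series over $k\ge 1$ yields an $O(e^{-4s/(a\beta)})$ bound. For a matching lower bound I would invoke the independent dominator $\hat N^{(\gamma)}$ from Section \RN{4}: since $\hat N^{(\gamma)}\ge N^{(\gamma)}$, the inclusion $\{R^{(\gamma)}>s,\,\hat N^{(\gamma)}=0\}\subseteq\{R^{(\gamma)}>s,\,N^{(\gamma)}=0\}$ holds, and independence of $\hat N^{(\gamma)}$ from the SNR process yields $\mathbb{P}(R^{(\gamma)}>s,\,\hat N^{(\gamma)}=0)=e^{-\xi\pi r_\gamma^2}\,\mathbb{P}(R^{(\gamma)}>s)$, which is of order $e^{-2s/(a\beta)}$. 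Dividing, the quotient is $O(e^{-2s/(a\beta)})\to 0$ and the first claim follows.

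For the second assertion the bound is purely deterministic. The positivity $S^{(\gamma)}>0$ forces the mobile to be covered, so $\mathrm{SNR}\ge\gamma$ and hence $R^{(\gamma)}\ge a\log(1+\gamma)$; combined with $S^{(\gamma)}=R^{(\gamma)}/(N^{(\gamma)}+1)<1/s$ this gives $N^{(\gamma)}+1>s\,R^{(\gamma)}\ge a s\log(1+\gamma)$, so the set inclusion $\{0<S^{(\gamma)}<1/s\}\subseteq\{N^{(\gamma)}>as\log(1+\gamma)-1\}$ holds pointwise and the conditional probability is exactly one. The main obstacle lies in the first assertion: $N^{(\gamma)}$ and $R^{(\gamma)}$ are not independent (a mobile far from its associated node is typically in a large Voronoi cell that carries more static users), so a direct factorisation of the joint probability is unavailable; the independent upper bound $\hat N^{(\gamma)}$ of Section \RN{4} is precisely the tool that lets us sidestep this dependence when lower-bounding $\mathbb{P}(S^{(\gamma)}>s,\,N^{(\gamma)}=0)$, while the tail estimate on $R^{(\gamma)}$ follows routinely from the null-probability of a Poisson disc.
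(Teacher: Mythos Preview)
The paper does not in fact supply a standalone proof of this theorem: the appendix contains proofs of Theorems \ref{thhsr} and \ref{thlsr} only, and the closest material is the corollary after Theorem \ref{thhsr} stating that $\mathbb{P}(S^{(\gamma)}>s)$ and $\mathbb{P}(S^{(\gamma)}>s,\,N^{(\gamma)}=0)$ share the logarithmic decay rate $2/\beta$. That log-equivalence is necessary but \emph{not} sufficient for the ratio to tend to $1$, so your argument actually fills a gap rather than merely reproducing the paper.

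Your proof of the first assertion is sound. The inclusion $\{S^{(\gamma)}>s,\,N^{(\gamma)}=k\}\subseteq\{R^{(\gamma)}>(k+1)s\}$ is exact, and the Poisson void formula gives the uniform bound $\mathbb{P}(R^{(\gamma)}>u)\le \lambda\pi K^{2/\beta}(e^{u/a}-1)^{-2/\beta}\le C\,e^{-2u/(a\beta)}$ (via $1-e^{-x}\le x$), so the geometric sum over $k\ge 1$ is indeed $O(e^{-4s/(a\beta)})$. For the lower bound you correctly invoke the dominator $\hat N^{(\gamma)}$ from Section~IV: $\hat N^{(\gamma)}\ge N^{(\gamma)}$ on the coverage event gives $\{R^{(\gamma)}>s,\hat N^{(\gamma)}=0\}\subseteq\{R^{(\gamma)}>s,N^{(\gamma)}=0\}$, and the independence of $\hat N^{(\gamma)}$ from the SNR process (stated explicitly in Section~IV.B) yields the factorization $e^{-\xi\pi r_\gamma^2}\,\mathbb{P}(R^{(\gamma)}>s)\sim c\,e^{-2s/(a\beta)}$. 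The quotient is $O(e^{-2s/(a\beta)})\to 0$, and since $\{R^{(\gamma)}>s,N^{(\gamma)}=0\}=\{S^{(\gamma)}>s,N^{(\gamma)}=0\}$ the first claim follows. The second assertion is, as you say, a pointwise set inclusion: coverage forces $R^{(\gamma)}\ge a\log(1+\gamma)$, hence $N^{(\gamma)}+1>sR^{(\gamma)}\ge as\log(1+\gamma)$, and the conditional probability is exactly one. Both parts are correct; your treatment is sharper than what the paper records, which stops at the large-deviation scale.
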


Conditioned on a very high shared rate, the mobile user has to be close to the associated node with no other static users sharing its resources. On the contrary, conditioned on the mobile being served and experiencing a very low shared rate, it has to share its associated node with number of users that is inversely proportional to the desired shared rate.

From the previous theorem, we know that high shared rates are predominantly the same as the event where the SNR is high and the sharing number is zero. Thus, for a given threshold $\delta$ for the shared rate process, we can study the asymptotic behavior of the distribution of the inter arrival time of shared rate up-crossings as $\delta \to \infty$.

\begin{corollary}
Let $\hat{Z_{\delta}}$ denote the inter arrival time for the 
good events of the lower bound process $(\hat S^{(\gamma)}(t), t\ge 0)$.
Then 
$$
\lim_{\delta \to \infty} f(r_{\delta}) e^{-\xi \pi r_{\gamma}^2} \hat{Z_{\delta}} \to \exp(1),$$
where $f(r_{\delta}) = 2 \lambda v r_{\delta}.$
\end{corollary}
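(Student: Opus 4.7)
The plan is to exhibit the up-crossings of $(\hat S^{(\gamma)}(t),t\ge 0)$ at level $\delta$ as an independent $q$-thinning of the SNR up-crossings at a level corresponding to $r_\delta$, and then to combine Theorem \ref{asymptotic} with a Rényi-type argument on geometric sums. By Theorem \ref{thhsr} and its corollary, as $\delta\to\infty$ the rare event $\{\hat S^{(\gamma)}>\delta\}$ is, up to a vanishing error, the same as $\{R^{(\gamma)}>\delta\}\cap\{\hat N^{(\gamma)}=0\}$. The first of these two events is equivalent to the mobile being within distance $r_\delta$ of its serving node, where $r_\delta$ is the coverage radius associated with the SNR threshold $e^{\delta/a}-1$; hence the shared-rate up-crossing times asymptotically coincide with those SNR up-crossings at level $r_\delta$ at which additionally $\hat N^{(\gamma)}=0$.

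Since $(\hat N^{(\gamma)}(t),t\ge 0)$ is by construction independent of the SNR process, and at every covered time $\hat N^{(\gamma)}(t)$ is Poisson of mean $\xi\pi r_\gamma^2$, the indicators $M_n:=\mathbf 1\{\hat N^{(\gamma)}(T_n^{(\gamma_\delta)})=0\}$ at successive SNR up-crossings $T_n^{(\gamma_\delta)}$ form (to leading order, cf.\ the obstacle below) an i.i.d.\ Bernoulli sequence with parameter $q:=e^{-\xi\pi r_\gamma^2}$, independent of the up-crossing renewal process. Consequently,
\begin{equation*}
\hat Z_\delta \stackrel{d}{=} \sum_{i=1}^{G} V_i^{(\gamma_\delta)},
\end{equation*}
where $G\sim\mathrm{Geom}(q)$ is independent of an i.i.d.\ sequence $\{V_i^{(\gamma_\delta)}\}$ distributed as $V^{(\gamma_\delta)}$. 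Theorem \ref{asymptotic} gives $\phi_\delta(s):=\mathbb{E}[e^{-sf(r_\delta)V^{(\gamma_\delta)}}]\to 1/(1+s)$ as $\delta\to\infty$, and the Laplace transform of $f(r_\delta)\,q\,\hat Z_\delta$ equals $q\phi_\delta(sq)/(1-(1-q)\phi_\delta(sq))$. Passing to the limit and simplifying yields $1/(1+s)$, the Laplace transform of $\exp(1)$, which is the stated conclusion.

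The main obstacle is the rigorous justification that the marks $M_n$ can be treated as i.i.d.\ and independent of the SNR crossing renewal process. The marginal Bernoulli law of each $M_n$ is immediate from the independence of the static-user Poisson process from $\Phi$, but successive marks share geometric information through the associated nodes $X(T_n^{(\gamma_\delta)})$, whose coverage discs of radius $r_\gamma$ can in principle overlap. I would resolve this by noting that in the regime $\delta\to\infty$ the typical up-crossing spacing grows like $1/f(r_\delta)\to\infty$, so consecutive associated nodes are with high probability separated by distances much larger than $r_\gamma$; the disjointness of their coverage discs then makes the corresponding static-user counts exactly independent, and the asymptotic independence of the marks follows.
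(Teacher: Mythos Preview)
Your approach is essentially identical to the paper's: both argue that $\hat Z_\delta$ is a geometric sum $\sum_{i=1}^{G} V_i^{(\gamma_\delta)}$ with $G\sim\mathrm{Geom}(e^{-\xi\pi r_\gamma^2})$ independent of the i.i.d.\ SNR inter-upcrossing times, and then invoke Theorem~\ref{asymptotic}. In fact the paper's proof is a single sentence asserting this representation, so your version is more complete---you supply the Laplace-transform computation and, more importantly, you flag and address the asymptotic independence of the marks $M_n$, which the paper takes for granted.
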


\begin{proof}
The result follows from the fact that $\hat{Z_{\delta}} = \sum_{i=1}^{X} V^{r_{\delta}}$, where $X$ is geometric random variable with parameter $e^{-\xi \pi r_{\gamma}^2}$ and $ V^{(r_{\delta})}$ the typical interval time of the renewal process of up-crossings associated with the SNR process of threshold $\delta$.
\end{proof}

\section{Simulation and Validation}

In this section we evaluate when our 
mathematical model and associated asymptotic results 
are valid in more realistic settings. 
We use simulation to study the temporal variations 
of the SNR process experienced by a mobile user under 
various scenarios which are are not captured by
our analytical framework.
The model is challenged in various complementary 
ways: e.g., by adding fading and accounting for interference from other
base stations. 
In each case the objective is to determine for what parameter
values of the additional feature our simplified mathematical model 
is still approximately valid, providing robust engineering 
rules of thumb to predict what mobile users will see.

In particular, we will answer the following questions:
\begin{itemize}
\item How quickly do the SNR level up crossings converge to exponential asymptotics
as a function of the associated thresholds? 
\item Are the results obtained robust to the addition of fast fading? 
\item Are there regimes where the temporal characteristics of the SNR process are a
good proxy for the SINR process, e.g., high path loss?
\end{itemize}

We begin by introducing our simulation methodology and the
default parameters used throughout this section.
\subsection{Simulation Methodology}
We consider a user moving on a straight line (road) at a fixed velocity of 16 m/s. 
The base stations are randomly placed according to a Poisson point process 
with intensity $\lambda$ such that the mean coverage area per base station 
is that of a disc with radius 200m. Unless otherwise specified, 
we consider the path loss function given by Eq. \eqref{eq:snr}
with exponent  $\beta =4$ 
and assume that all base stations transmit with equal power of $p = 2 \mbox{W}$. 
The signal strength received by the mobile user is recomputed
every $10^{-2}$ seconds. 

We calibrate the thermal noise power to the {\em cell-edge} user. 
Let $D$ be a random variable denoting the distance from a typical 
user to the closest base station. 
Define $d_{{edge}}$ by the relation $P(D \leq d_{{edge}}) =0.9$. 
Since in our simulation setting  $P(D \leq d) = 1 - \exp(\lambda \pi d^2)$, 
we have $d_{{edge}} = \sqrt{\frac{- \ln(0.1)}{\pi \lambda}}$. 
If we fix the desired SNR at the cell edge to be $\mbox{SNR}_{{edge}}$ this
then determines the noise power to be $w = \frac{pd_{{edge}}^{-\beta}}{\mbox{SNR}_{{edge}}}$.

In the sequel we evaluate how quickly the convergence to exponential
studied in Theorem \ref{asymptotic} arises. 
To that end we compare the renormalized distributions obtained via simulation
to the reference exponential distribution with parameter 1, using
the Kolmogorov-Smirnov (K-S) test. The K-S test 
finds the greatest discrepancy between the observed 
and expected cumulative frequencies-- called the \q{D-statistic}.  
This is compared against the critical D-statistic for that sample size 
with 5\% significance level.  
If the calculated D-statistic is less than the critical one, we conclude 
that the distribution is of the expected form, see e.g. \cite{KS}. 

\subsection{Convergence of Level-crossing Asymptotics}

Theorem \ref{asymptotic} indicates that as the SNR threshold $\gamma$ in dB increases, the rescaled
distribution for up-crossings of the SNR process becomes exponential. 
The question is how large $\gamma$ needs to be for this result to hold.
To that end we simulated the level crossing process for various 
$\gamma$ and computed the D-statistic mentioned above. 
The empirical CDF for up-crossing inter-arrivals 
rescaled by $f(r_{\gamma})$ as introduced in the theorem can be seen
in e.g., Fig.~\ref{snrupnofad}. As expected we found that
as the threshold increases, the distribution 
becomes exponential, and for a threshold value of $\gamma = 50$ or
more, it is exponential with unit mean.

In practice SNR of 50 dB is not realistic for wireless users. However, as seen from Figure \ref{snrupnofad}, for moderate values of  $\gamma$ such as $0.1, 1$, the up-crossing inter-arrivals can be approximated by an exponential with parameter $1/ f(r_{\gamma})$. For $ \gamma = 1$, the empirical mean for the inter-arrival time for up-crossings is $72.4s $ and the asymptotic approximated mean i.e., $f(r_{\gamma})^{-1}$ is $63.07s$.  

\begin{figure}[!t]
\centering
\includegraphics[scale=0.45]{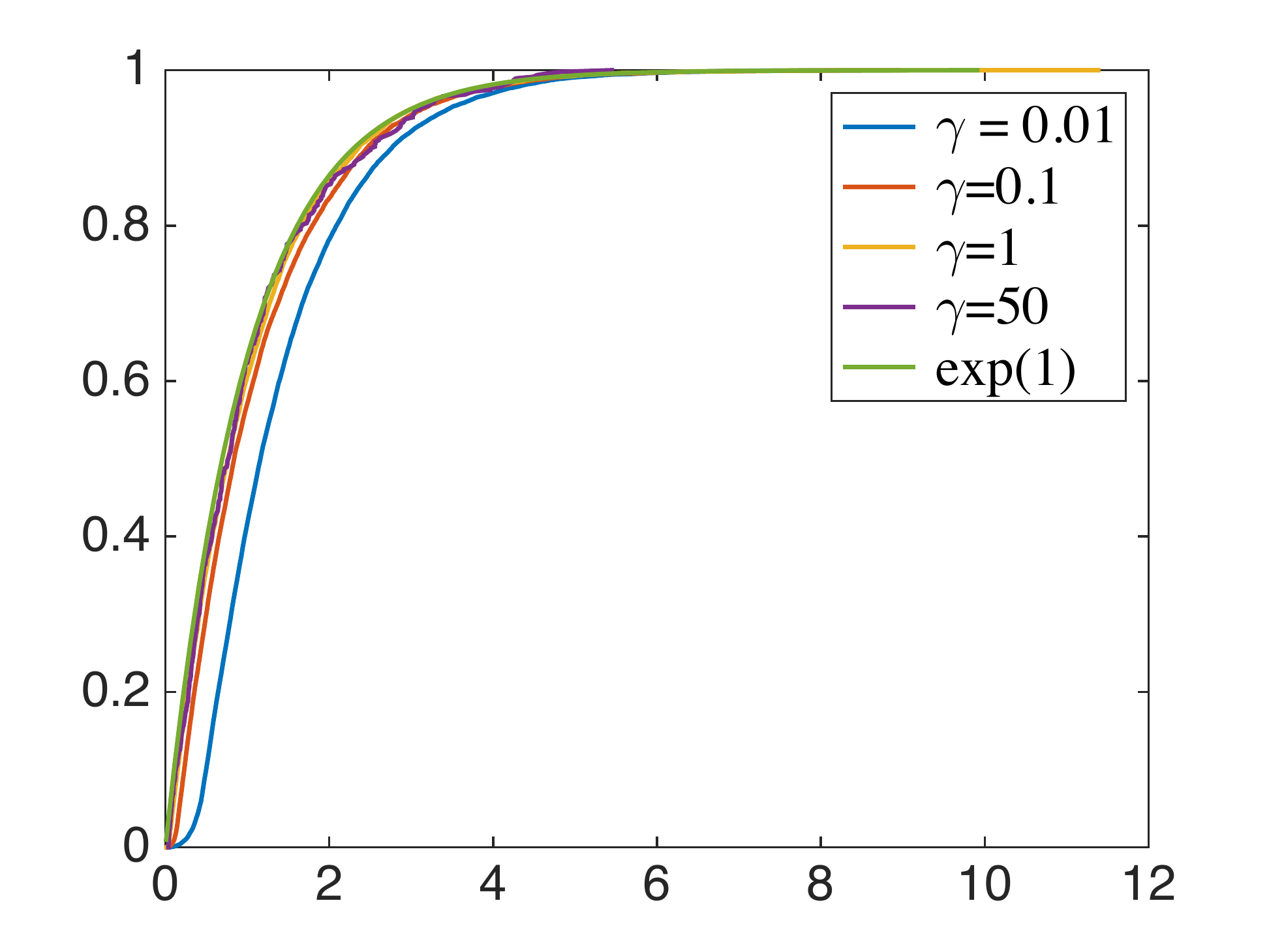}
\caption{CDF of interarrival of up-crossings for various thresholds.}
\label{snrupnofad}
\end{figure}

\subsection{Robustness of Level-crossing Asymptotics to Fading}

Next we study the effect that channel fading might have 
on the level-crossing asymptotics. 
We consider channels with Rayleigh fading with unit mean, so that the
SNR experienced by the tagged mobile user at a distance $d$ 
from the base station is given by $H p d^{-\beta} /w$, 
where $H$ is a fading random variable which is exponential 
with unit mean. The coherence time is set to $t_c = 0.423/f_d$, 
where $f_d$ is the Doppler shift given by $f_d = \frac{v}{c} f_o$ 
where $v$ is the vehicle velocity, $c$ is the speed of lights 
and $f_o = 900\mbox{MHz}$ is the operating frequency. 
This gives a coherence time $t_c =0.007 s$. Thus, fading (power) changes 
every 0.007 seconds. The SNR process with fading is illustrated in Fig.~\ref{snrfad-1}. 


Clearly when we incorporate channel fading in the SNR process, 
even when one fixes a high SNR threshold,
the level-crossing process has additional fluctuations before it goes
down again for some time, see Fig.~\ref{snrfad-1}. 
Thus to exhibit the on-off structure and asymptotics we 
consider a modified process defined as follows. 
After the first up-crossing, we suppress all subsequent up crossings (if any)
for an appropriate time period, and then look for the next up-crossing
taking place after this time. We take a 
time period for the suppression of up-crossings equal to twice the expected on time of $2E[B^{(r_{\gamma})}]$ ~\cite{Makowski}.

In order to vary the variance while keeping the mean of the fading at one, we now consider fading which is a mixture of exponentials.
For this process, we would expect that for fading with mean one,
if variance is small,  the appropriately rescaled 
inter-arrival distribution for up-crossings which
are not suppressed would once again asymptotically 
become exponential with parameter 1. In other words
we expect geometric variations associated with base station locations
to dominate channel variations. Whereas, if the fading variance is high,
one might expect the SNR threshold required to obtain convergence to 
an exponentiality to increase. Fig.~\ref{varsnr} plots such thresholds
as a function of the fading variance. 
As can be seen for fading variances exceeding $8$,
the channel variations dominate the geometric
variations and thus up-crossing asymptotics differ from Theorem \ref{asymptotic}.

 \begin{figure}[!t]
\centering
\includegraphics[width=3in]{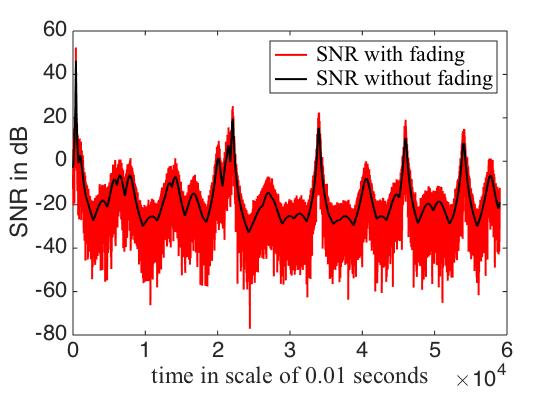}
\caption{SNR process in presence of fading with mean 1.}
\label{snrfad-1}
\end{figure}

 \begin{figure}[!t]
\centering
\includegraphics[scale=0.4]{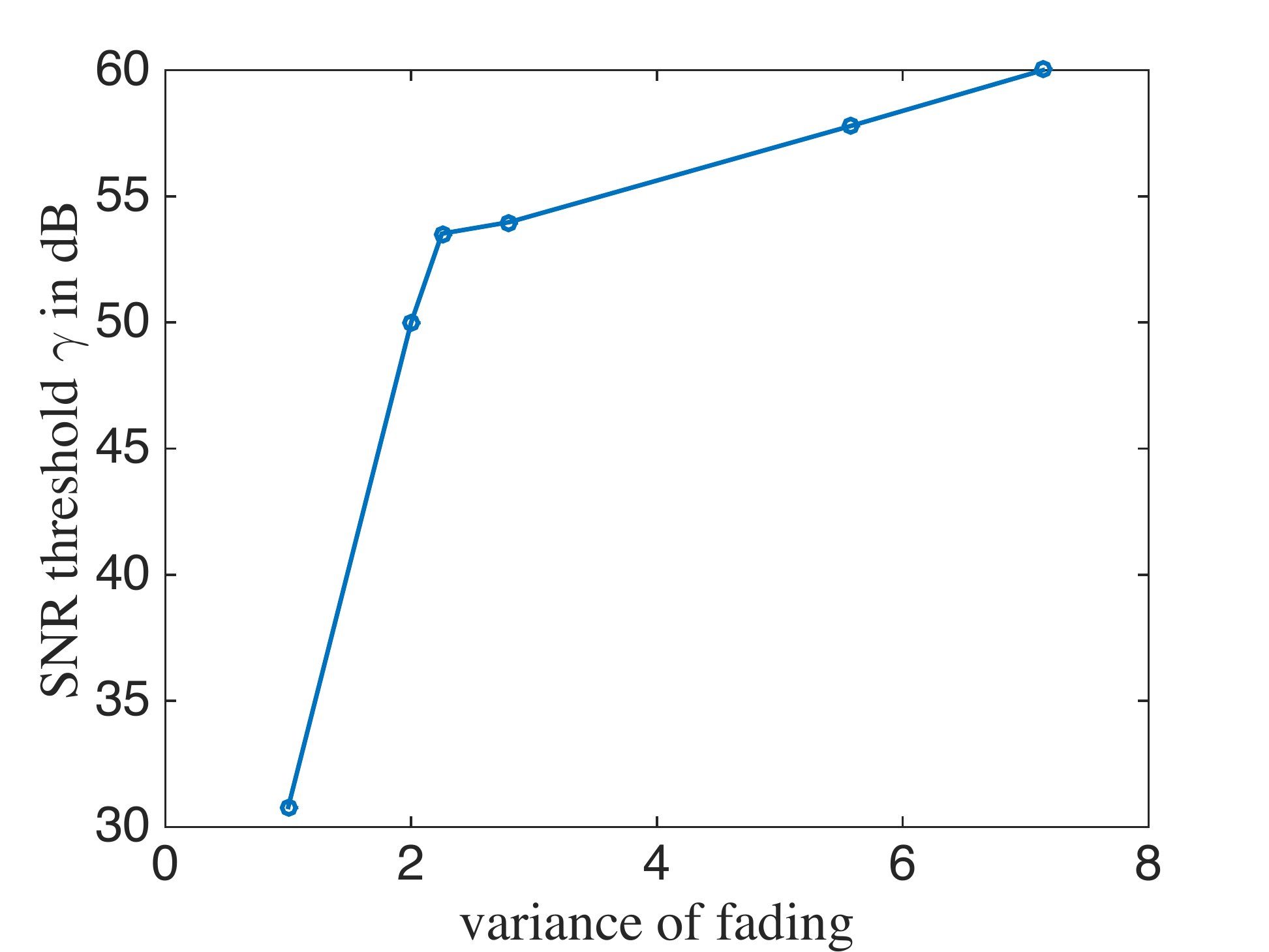}
\caption{Threshold above which the inter-arrival of up-crossing converge to exponential with parameter 1 for different variance of fading.}
\label{varsnr}
\end{figure}

\subsection{Robustness of Level-crossing Asymptotics to Interference}

So far we have focused on the SNR process. One might 
ask to what degree the Signal-to-Interference-plus-Noise Ratio (SINR)
process, shares similar characteristics. 

We first simulated the SINR process for a setting with a high path 
loss exponent  of $\beta = 4$  and found once again that 
the rescaled distribution for the up-crossing inter-arrivals 
converges to an exponential with parameter 1. The test requires a threshold $\gamma = 31.7$dB. However, as seen above, this asymptotic is already useful for moderate values of $\gamma$.
We then evaluated, for different path loss $\beta$, what
threshold values were needed to obtain a similar convergence.
As shown by Fig.~\ref{betasnr}, the threshold in question
increases as $\beta$ decreases. Further we found
that for $\beta < 3.5$, we no longer have the desired convergence
property. 
In summary, for high path-loss exponents $\beta =3.5-4$, the up-crossing 
asymptotics for the SNR and SINR processes are similar.

\begin{figure}[!t]
\centering
\includegraphics[scale=0.4]{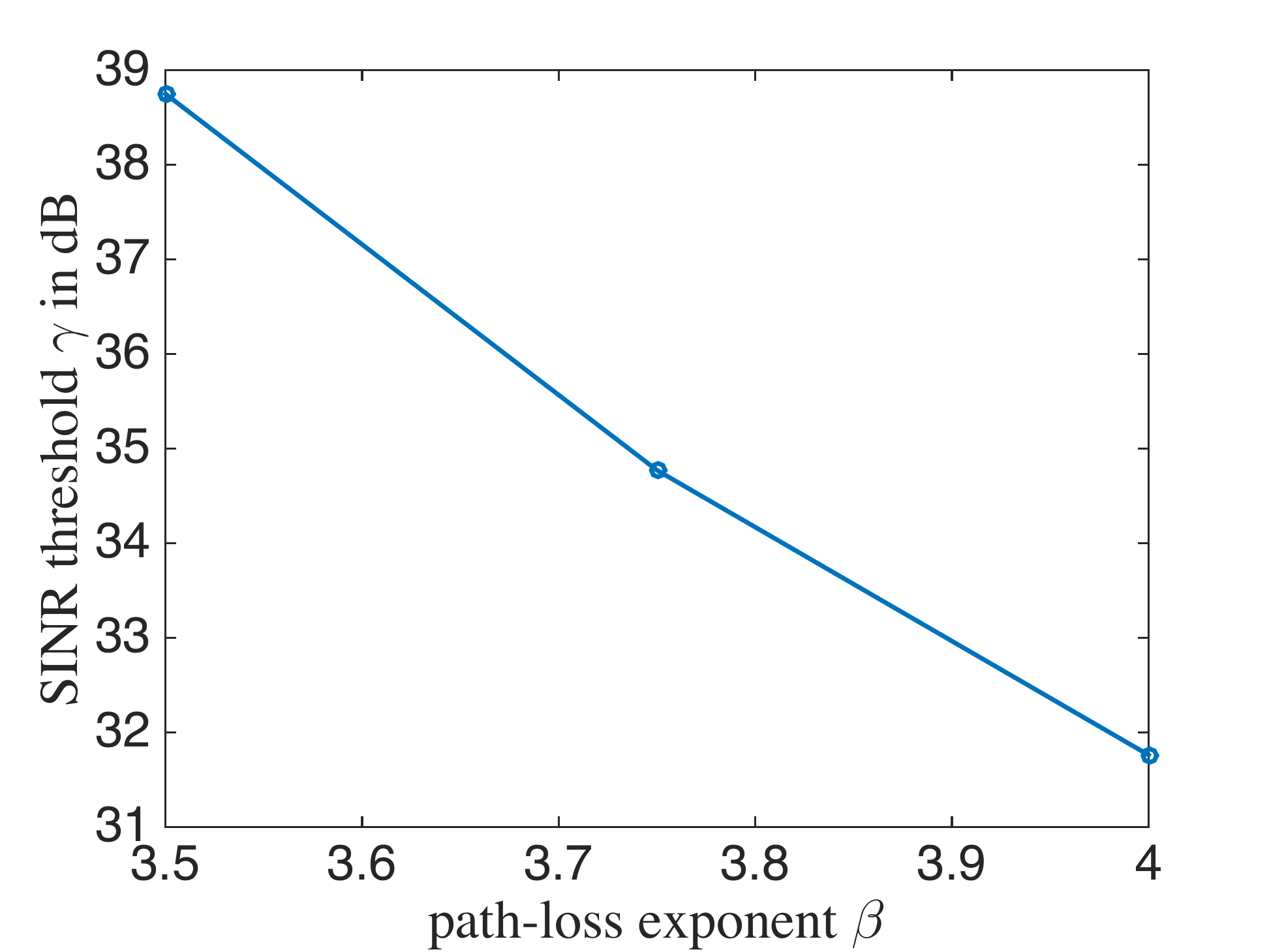}

\caption{Threshold above which the inter-arrival of up-crossing converge to exponential with parameter 1 for different path loss exponent.}
\label{betasnr}
\end{figure}

\section{Applications}
In this section we consider our models to evaluate application-level performance of mobiles in a shared wireless network.
In particular we consider two very different wireless scenarios: (1) video streaming to mobiles sharing a cellular or Wifi network and (2) large file downloads using Wifi.

\subsection{Stored Video Streaming to Mobile Users}

Let us consider a scenario where mobile users are viewing stored videos which are streamed over a sequence of wireless downlinks. The users are distributed according to a Poisson point process of density $\xi$ and are moving independently of each other. Consider a policy where a mobile user is served by a node only if the SNR experienced is greater than a threshold $\gamma$. Thus, the nodes serve the mobile users within the radius of coverage $r_{\gamma} =(p/(\gamma w))^{1/\beta}$.  A lower threshold $\gamma$ corresponds to a lower transmission rate (when served), a higher probability of coverage and sharing with a large number of other mobile users. Conversely a higher threshold implies higher transmission rate and sharing with fewer other mobile users. For simplicity we consider rebuffering as the primary metric for user's video quality of experience \cite{ZhD14}.

The playback buffer state of the tagged mobile user moving at a fixed velocity $v$ on a straight line, can be modeled as a fluid queue. The arrival rate to which alternates between
an average ergodic rate, $h(\gamma) = \mathbb{E}[R^{(\gamma)} | \mbox{SNR} > \gamma]$ and zero depending on whether the mobile is being served or not. Let $\eta$ denote the video playback rate in bits. Hence, as long as the buffer is non-empty,
the fluid depletion rate of the queue is $\eta$. Re-buffering of the video is directly linked to the proportion of time the playback
buffer is empty, which is given by $1-\rho(\gamma,\xi)$, where the {\em {load factor}}, $\rho(\gamma,\xi)$, \cite{BB3} of the queue is:

\begin{equation}
\label{eqrx}
\rho(\gamma,\xi) = \frac{ \nu^{(\gamma)}  h(\gamma)}{\eta}  \mathbb{E}\bigg[\frac{1}{N_p^{(\gamma)}+1}\bigg].
\end{equation}
where
$\nu^{(\gamma)}$ is the probability that the alternating renewal process associated with the arrival rate to the fluid queue is \q{on}.
 

The first natural question one can ask is whether there is a choice of $\gamma$
such that the fluid queue is unstable, thus ensuring no rebuffering in
the long term. In other words, does there exist a $\gamma>0$ such that
$\rho(\gamma,\xi)>1$? Given our policy and the metric for quality of experience, the network provider has liberty to choose a lower value of threshold,$\gamma$, as long as the typical mobile user in the long run experiences no rebuffering.

Now, for simplicity let us consider a constant rate $\kappa = a \log\left(1 + \gamma \right) \mathbb{E}\bigg[\frac{1}{N_p^{(\gamma)}+1}\bigg]$ instead of the average ergodic rate. This constant bit rate is when the network does not rely on adaptive coding/decoding. For additional motivation for this scenario, see \cite{ZhD14}. The load factor $\rho(\gamma ,\xi)$ is given by


 \begin{equation}
 \begin{split}
\label{eqrx2}
\rho(\gamma,\xi) =
\frac {a
\log(1+\gamma) \left(1-e^{\frac{-b}{\gamma^{2/\beta}}}
\right)}
{\eta} \mathbb{E}\bigg[\frac{1}{N_p^{(\gamma)}+1}\bigg],
\end{split}
\end{equation}
where
$b= \lambda \pi \left( \frac{p} w\right)^{\frac 2\beta}$ and $\mathbb{E}[1/(N_p^{(\gamma)}+1)]$ can be calculated by numerical integration as described in previous section.

It is easy to check that the function
$\rho(\gamma,\xi)$
has a unique maximum $\gamma^*$ on $(0,\infty)$. A plot of (\ref{eqrx2}) and the value of $\gamma^*$
are illustrated in Fig. \ref{Myfig}. Notice that for these parameters,
the value of $\gamma^*$ increases with $\xi$.

\begin{figure}[h]
\centering
\includegraphics[scale=0.4]{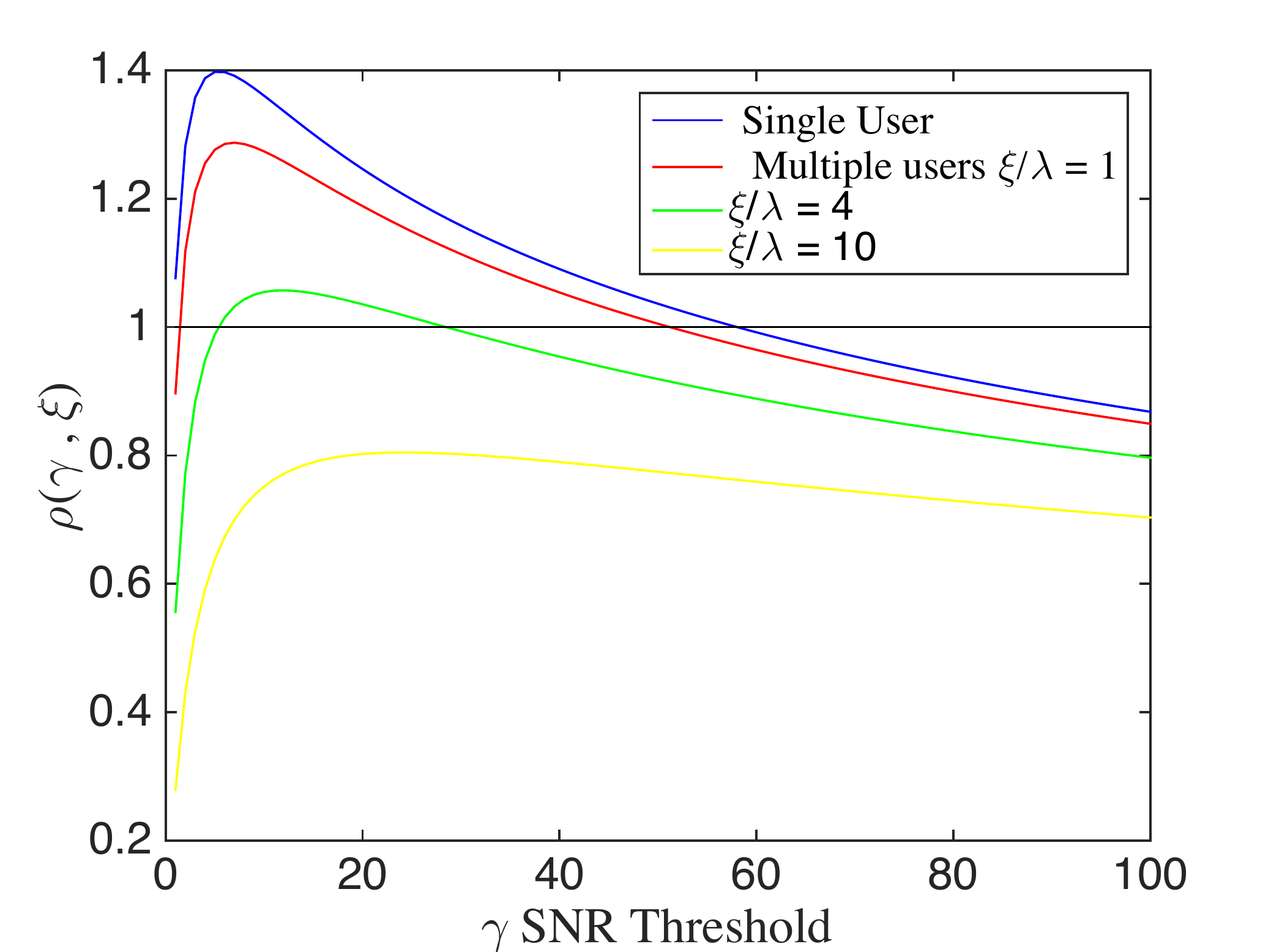}
\caption{Load factor of the fluid queue for a single user (top curve) 
and for a positive density of users as a function of $\gamma$ (curves below the
top curve).
For the latter curves, the number of users per base station are $\xi/\lambda=1,4$ and 10 from top to bottom.
Here $b=1$. All functions can be multiplied by an arbitrary
positive constant when playing with $a$ and $\eta$.
}
\label{Myfig}
\end{figure}

For a given base station density $\lambda$ and density of users $\xi$, one can evaluate the SNR threshold value $\gamma^*$ for which the load factor $\rho$ is maximum. Fig.\ref{gamma_xi} illustrates the level set curve of $\rho(\gamma^* , \xi) =1$ for various values of $\lambda$ and $\xi$. In this setup, given the video consumption rate $\eta$, it is possible to answer questions like what is the minimum density of base stations required to serve a certain density of users such that the video streaming is uninterrupted for all the users.

\begin{figure}[h]
\centering
\includegraphics[scale =0.45]{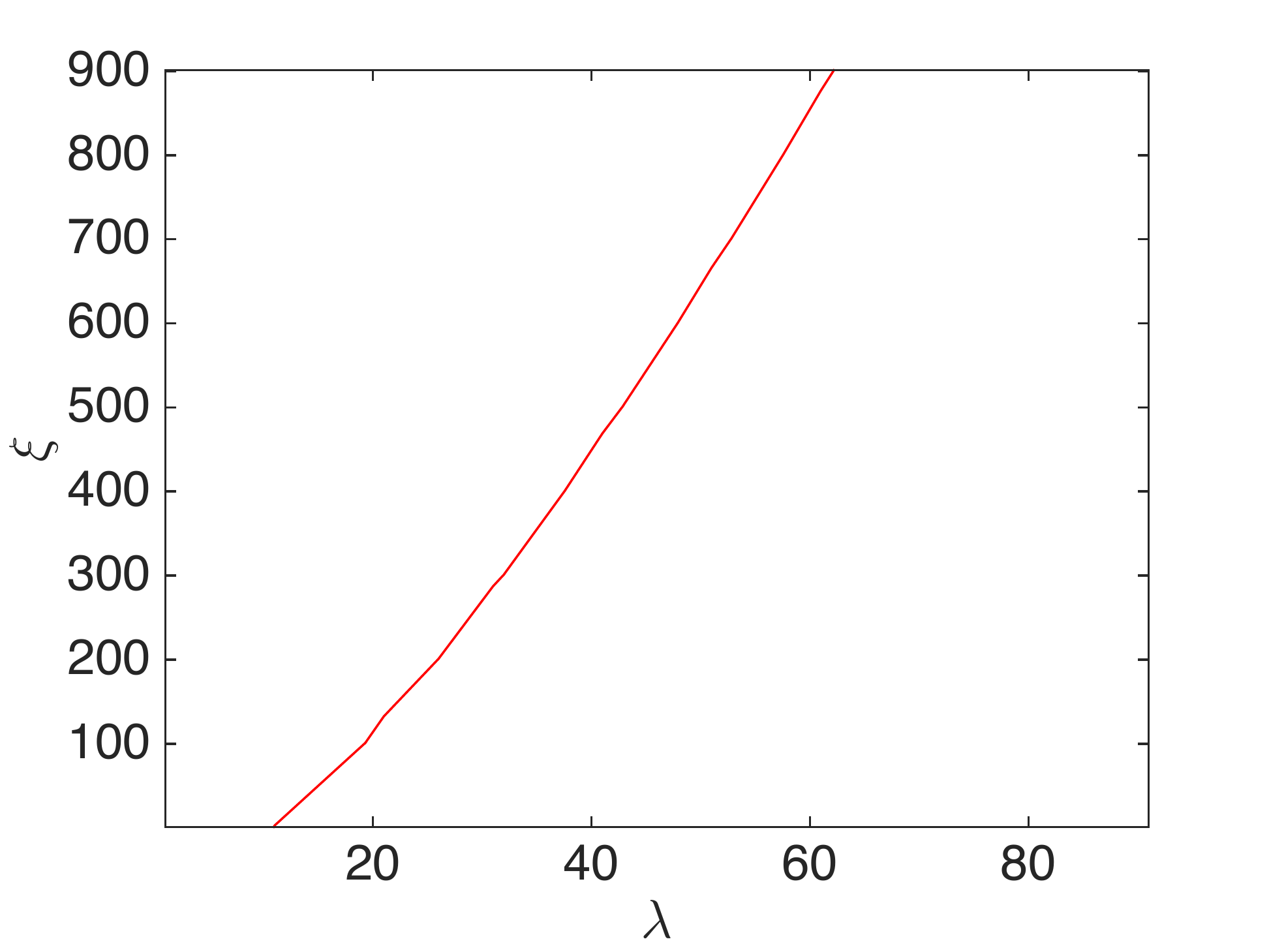}
\caption{Level set curve of $\rho(\gamma^*,\xi) =1$ for an arbitrary
positive constants $a$ and $\eta$.
}
\label{gamma_xi}
\end{figure}

\begin{remark}
In the case where there exists no threshold $\gamma$ for
which the condition for no long term rebuffering
is satisfied and $\kappa > \eta$, the fluid queue
alternates between busy and idle
period representing the periods when the
video is un-interrupted or frozen, respectively.
The distribution of the on periods $B^{(\gamma)}$ and
that of the off periods $I^{(\gamma)}$ of 
of the $M/GI/\infty$ queue discussed in Section
determine the distribution of the busy period $B_f$
and that of the idle period $I_f$ of the fluid queue.
When denoting by $\kappa$ the constant input rate
during on period and by $\eta$ the  constant output rate
when the queue is non-empty, 
the Laplace transform of $B_f$ is given by \cite{gupta2008fluid}
\begin{equation}
\mathcal{L}_{B_f}(s) = \mathcal{L}_{B^{(\gamma)}}(s \sigma + \lambda(\sigma -1)(1-\mathcal{L}_{B_f}(s))),
\end{equation}
where $\sigma = \kappa/ \eta > 1$, and the idle period
$I_f$ has an $\mbox{exp}(\lambda^{(\gamma)})$ distribution.
\end{remark}

\subsection{Wifi Offloading}
WiFi offloading helps to improve spectrum efficiency and reduce cellular network congestion. One version of this scheme is to have mobile users opportunistically obtain data through WiFi rather than through the cellular network. Offloading traffic through WiFi has been shown to be an effective way to reduce the traffic on the cellular network. WiFi is faster and uses less energy to transmit data when there is a connection. 

Let us consider a scenario where the mobile users download a large file from the service provider,
relying on Wifi hotspots, distributed according to a Poisson point process of intensity
$\lambda$, rather than on cellular base stations. The users are distributed according to a Poisson point process of density $\xi$ and are moving independently of each other. Assume that the Wifi hotspots have a fixed coverage area i.e., the mobile user connects to Wifi only if it
is within a certain distance $r$ from the hotspot. Thus, higher the density of hotspots, $\lambda$, deployed by the provider the better the performance experienced by mobile users that rely only on them. We consider the time it takes to complete the
download as the primary metric for user's quality of experience.

Consider again the case without adaptive coding/decoding and the tagged mobile user moving on a straight line at constant velocity $v$.Then, 
the shared rate experienced by the mobile user is the constant $\kappa = a \log\left(1 + \gamma \right) \mathbb{E}\bigg[\frac{1}{N_p^{(\gamma)}+1}\bigg]$ as
defined above. In addition, the mobile user experiences an alternating on and off 
process, as characterized in Theorem \ref{level crossing characterization}.

Below, for the sake of mathematical simplicity, we assume
that the file size $F$ is exponential with parameter $\delta$
and that the mobile user starts to download the file at
the beginning of an on period.
Let $T$ be a random variable denoting the time taken to download the file.
Consider the event $J = \{F >\kappa B^{(\lambda)}\}$ and let
\begin{equation}
\alpha =\mathbb{P}(J) = \mathbb{P}(F > \kappa B^{(\lambda)}) = {\cal L}_{B^{(\lambda)}}(\delta \kappa).
\end{equation}
Here and below, ${\cal L}_{X}(s)$ denotes the Laplace transform
of the non-negative random variable $X$ at point $s$,
and $B^{(\lambda)}$ is the random variable representing the length of a
typical on interval seen by the mobile (see Theorem \ref{level crossing characterization}). 

Now, define the non-negative random variables $X$ and $Y$ by their c.d.f.s
\begin{eqnarray}
\mathbb{P} (X < x) & = & \frac{1}{\alpha}  \int_0^{x} e^{-\delta \kappa z} f_{B^{(\lambda)}}(z) dz
\\
\mathbb{P} (Y < y) & = & \frac{1}{1- \alpha}  \int_0^{y} (1-e^{-\delta \kappa z}) f_{B^{(\lambda)}}(z) dz.
\end{eqnarray}
Notice that 
\begin{eqnarray}
{\cal L}_{X}(s) & =  & \frac 1 \alpha {\cal L}_{B^{(\lambda)}} (s+ \delta \kappa )
\\
{\cal L}_{Y}(s) & =  &  \frac{1}{1- \alpha}
\left(1-{\cal L}_{B^{(\lambda)}} (s+ \delta \kappa )\right).
\end{eqnarray}
The following representation of the Laplace transform of $T$
is an immediate corollary of the on-off structure:
\begin{theorem}
Consider a network of Wifi hotspots distributed according to a Poisson point process of intensity $\lambda$ and radius of coverage $r$ shared by mobile users distributed independently by a Poisson point process of intensity $\xi$. Assuming that the tagged mobile user starts to download the file at
the beginning of an on period, the Laplace transform of the time taken to download a file of size $F \sim \exp(\delta)$ is given by
\begin{equation}
\label{equationl}
{\cal L}_{T}(s)
= \frac{(1-\alpha){\cal L}_{Y}(s)}{1 -\alpha{\cal L}_{X}(s) \frac {2\lambda v r}{2\lambda v r+ s}}.
\end{equation}

\end{theorem}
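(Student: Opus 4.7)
The plan is to exploit the alternating renewal structure of the coverage process together with the memoryless property of $F \sim \exp(\delta)$ to derive a one-step distributional recursion for $T$, and then to take Laplace transforms. The decomposition conditions on the event $J = \{F > \kappa B^{(\lambda)}\}$ that the file is not fully transferred during the first on-period; its complement $J^c$ corresponds to the download terminating inside the first on-period.

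First I would verify that $\alpha = \mathbb{P}(J) = \mathbb{E}[e^{-\delta \kappa B^{(\lambda)}}] = \mathcal{L}_{B^{(\lambda)}}(\delta \kappa)$ by integrating the tail $\mathbb{P}(F > \kappa b) = e^{-\delta \kappa b}$ against the density of $B^{(\lambda)}$, and check directly from the densities that the random variables $X$ and $Y$ defined in the theorem are exactly $B^{(\lambda)}$ conditioned on $J$ and on $J^c$ respectively. Combined with Theorem~\ref{level crossing characterization}, which gives i.i.d.\ on-periods distributed as $B^{(\lambda)}$ and independent off-periods distributed as $\exp(2\lambda v r)$, and the memoryless property of $F$ (which ensures that on $J$ the residual file size at the start of the second on-period is again $\exp(\delta)$ and independent of everything that precedes it), one obtains the distributional identity
\begin{equation}
T \stackrel{d}{=} \mathbf{1}_J \bigl( B^{(\lambda)} + I^{(\lambda)} + T' \bigr) + \mathbf{1}_{J^c} \, Y,
\end{equation}
where $T'$ is an independent copy of $T$, $B^{(\lambda)} \mid J$ has the law of $X$, $I^{(\lambda)} \sim \exp(2\lambda v r)$ is independent of $B^{(\lambda)}$, and all factors on each event are mutually independent.

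Taking Laplace transforms in this identity and using the independence structure yields
\begin{equation}
\mathcal{L}_T(s) = \alpha \, \mathcal{L}_X(s) \cdot \frac{2\lambda v r}{2\lambda v r + s} \cdot \mathcal{L}_T(s) + (1-\alpha) \, \mathcal{L}_Y(s),
\end{equation}
and solving this linear equation for $\mathcal{L}_T(s)$ produces the claimed expression~\eqref{equationl}.

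The main obstacle is rigorously justifying the independence structure behind the recursion, specifically that on $J$ the continuation $T'$ is independent of the first cycle $(B^{(\lambda)}, I^{(\lambda)})$ with the same marginal law as $T$. This is precisely where the memorylessness of $F$ enters: conditionally on $J$ and on the realized value of $B^{(\lambda)}$, the residual file $F - \kappa B^{(\lambda)}$ is again $\exp(\delta)$ and independent of the past, so by the i.i.d.\ on/off structure furnished by Theorem~\ref{level crossing characterization}, the continuation of the download starting at the beginning of the second on-period is distributionally identical to the original problem and independent of the first cycle.
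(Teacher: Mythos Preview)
Your proposal is correct and is essentially the same argument as the paper's: the paper represents $T$ as a geometric random sum $T=\sum_{i=1}^{Z}(X_i+I_i^{(\gamma)})+Y$ with $Z$ geometric of parameter $1-\alpha$, and then sums the resulting geometric series in Laplace transforms, whereas you write the equivalent one-step distributional fixed-point equation and solve the resulting linear equation for $\mathcal L_T(s)$. Unrolling your recursion recovers exactly the paper's geometric-sum representation, and your identifications of $X$ and $Y$ as $B^{(\lambda)}$ conditioned on $J$ and $J^c$, together with the memorylessness justification for the independence of $T'$ from the first cycle, are precisely the ingredients the paper uses.
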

It is remarkable that the Laplace transform of $T$ admits a quite simple expression in terms of that of $B^{(\lambda)}$. Other and more general file distributions can be handled as well when using classical tools of Laplace transform theory. Note that this setting also leads to interesting optimization questions such as the optimal density of Wifi hotspots needed to be deployed for lower expected download times.

\section{Variants}
In this section, we consider two  generalizations of the previous
framework: (1) we move from sharing with static users to sharing with mobile users and (2) we move from homogeneous to heterogeneous infrastructures

\subsection{Sharing the Network with Mobile Users}

Until here we have considered a tagged mobile user sharing the network with static users. We now consider two scenarios: (1) that where the other users are mobile and are initially distributed according to a homogeneous Poisson point process and (2) that where they are mobile but restricted to a random road network, i.e., form a Cox process.

\subsubsection{Homogeneous Poisson case}

Consider the case where other users sharing the network are initially located according to a homogeneous Poisson point process of intensity $\xi$, and subsequently exhibit arbitrary independent motion. This is a possible model for pedestrian motion. It follows from the displacement theorem for Poisson point processes \cite{BB} that the other users at any time instant will remain a Poisson point process of intensity $\xi$.

As considered before, the users are served only if they are at a distance less than $r_{\gamma}$ from the closest node. Consider a tagged mobile user moving at a fixed velocity along a straight line. Let us define the mobile sharing number process $(N_p^{(\gamma)}(t) , t \geq 0)  $ as the number of users sharing the node associated with the tagged user when it is served and zero otherwise. Thus, at any given time $t$, the tagged user shares its resources with a random number of users $N_p^{(\gamma)}(t)$  which is Poisson with a parameter depending on the area of the Johnson-Mehl cell.

For simplicity, we define the shared rate process $(S_c^{(\gamma)}(t) , t \geq 0)$ as
 \begin{equation}
\label{eqrx}
S_c^{(\gamma)}(t) = 
\begin{cases}
a \log\left(1 + \gamma \right) \mathbb{E}\bigg[\frac{1}{N_p^{(\gamma)}+1}\bigg] & \text{if}
~~ L(t) \leq r_{\gamma}, \\
0~~ &\text{otherwise}.
\end{cases}
\end{equation}

Since the distribution of the area of the Voronoi cell is unknown, we approximate $N_p^{(\gamma)}$ to be Poisson with parameter $\xi \mathbb{E}[\hat{J}^{(\gamma)}]$, where $\hat{J}^{(\gamma)}$ denotes the area of the Johnson-Mehl cell of radius $r_{\gamma}$, conditioned that the tagged user is within a distance $r_{\gamma}$ from the associated node, which introduces an additional bias. We can compute the expectation with the help of integral geometry as shown in Appendix F. 

We found the value of the expectation using numerical integration and compared the mean number of users $\mathbb{E}[N_p^{(\gamma)}] = \xi \mathbb{E}[\hat{J}^{(\gamma)}] $ with the sample mean obtained from simulation for various parameters $\lambda ,r_{\gamma}$. To validate the approximation of $N_p^{(\gamma)}$ by a Poisson random variable with parameter $\xi \mathbb{E}[\hat{J}^{(\gamma)}]$, we compared the value of $\mathbb{E}[1/(N_p^{(\gamma)}+1)]$ calculated using numerical integration with that of simulations. We found that  the calculated value of the expectation is within the 95\% confidence interval of the simulated mean.

\subsubsection{Cox process}

Let us consider a population model where roads are distributed according to a Poisson line process of intensity $\lambda_r$  on $\mathbb{R}^2$ \cite{kingman}. Then independently on each road, we consider users distributed according to a stationary Poisson point process of intensity $\lambda_t$. This is known as Cox process and we denote it by $\Phi_u$ \cite{kingman}. This model can be used to represent a car motion in a road network.

For a line $L$ of the line process, let us denote the orthogonal projection of the origin $O$ on $L$ by $(\theta,r)$ in polar coordinates. For $\theta \in [0,\pi)$ and $r \in \mathbb{R}$,  $(\theta,r)$ is unique. Thus, a Poisson line process with intensity $\lambda_r$ is the image of a Poisson point process with the same intensity on half-cylinder $[0,\pi) \times \mathbb{R}$.

Suppose all users on the roads are moving arbitrarily but independently from each other. Thus, at any instant the distribution of users on a given road remains Poisson. Consider a tagged user moving along a given road, then we have the following theorem by \cite{morlot2012population}.

\begin{theorem}
$\Phi_u$ is stationary, isotropic, with intensity $ \pi \lambda_r \lambda_t$. From the point of view of the tagged user i.e., under the Palm distribution the point process is the union of three counting measures: (1) the atom at $O$, (2) an independent $\lambda_t$-Poisson point process on a line through $O$ with a uniform independent angle and (3) the stationary counting measure $\Phi_u$.
\end{theorem}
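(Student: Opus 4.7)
The plan is to establish the three assertions of the theorem in turn, using standard Palm-theoretic tools for Poisson and Cox processes, exploiting the cylindrical representation of the line process.

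First, I would prove stationarity and isotropy of $\Phi_u$. The Poisson line process $\Psi$ on $[0,\pi) \times \mathbb{R}$ of intensity $\lambda_r$ generates a line pattern in $\mathbb{R}^2$ that is invariant in distribution under translations and rotations (this is a standard fact, cf.\ \cite{kingman}). Conditionally on $\Psi$, each line carries an independent homogeneous $\lambda_t$-Poisson point process, which is invariant under translations along that line. Any rigid motion of $\mathbb{R}^2$ permutes the lines of $\Psi$ and shifts the one-dimensional Poisson processes along the new lines, so the joint distribution of $\Phi_u$ is preserved. Hence $\Phi_u$ is stationary and isotropic.

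Next, I would compute the intensity by Campbell's formula. For any bounded Borel set $A \subset \mathbb{R}^2$, conditioning on $\Psi$ and using that the conditional mean number of users on $L \cap A$ is $\lambda_t |L \cap A|$ gives
\begin{equation}
\mathbb{E}[\Phi_u(A)] = \lambda_t\, \mathbb{E}\Bigl[\sum_{L\in\Psi} |L\cap A|\Bigr] = \lambda_t \lambda_r \int_0^\pi \int_{\mathbb{R}} |L_{\theta,r}\cap A|\, dr\, d\theta.
\end{equation}
For each fixed direction $\theta$, a Fubini argument (integrating the chord length of $L_{\theta,r}$ over all offsets $r$) yields $|A|$, so the double integral is $\pi |A|$ and the intensity equals $\pi \lambda_r \lambda_t$.

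The main part, and the principal obstacle, is the Palm characterization. I would apply Slivnyak's theorem twice, first to the line process and then to the one-dimensional Poisson process on the distinguished line. Condition first on the line $L_O$ of $\Psi$ that passes through $O$; this forces the point of $\Psi$ in the cylindrical representation to have $r = 0$, i.e., to lie on the $\theta$-axis. By Slivnyak applied to the Poisson process on $[0,\pi)\times\mathbb{R}$, the reduced Palm distribution at any point of the $\theta$-axis coincides with the original law of $\Psi$, and the angle of the distinguished line is distributed as the Palm angle at $O$, which by isotropy is uniform on $[0,\pi)$ and independent of the rest. Conditionally on $L_O$, the user process on $L_O$ is a one-dimensional $\lambda_t$-Poisson process containing a point at $O$; Slivnyak on this line returns the atom at $O$ together with an independent $\lambda_t$-Poisson process on $L_O$. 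Finally, on all other lines, the users form, conditionally on $\Psi \setminus \{L_O\}$, independent $\lambda_t$-Poisson processes, which together with the independent copy of $\Psi$ given by the first application of Slivnyak reconstitute a stationary copy of $\Phi_u$. Combining the three independent ingredients yields the claimed decomposition.

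The subtle step is the first application of Slivnyak: the line process is not a point process on $\mathbb{R}^2$ but on the cylinder, so one must formalize ``conditioning on a line through $O$'' via the bijection between lines through $O$ and points $(\theta,0)$ on the $\theta$-axis, and then invoke Slivnyak for the Poisson process on the cylinder. Isotropy of $\Psi$ then delivers the uniform angle, completing the argument.
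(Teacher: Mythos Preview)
The paper does not prove this theorem at all: it is quoted verbatim from \cite{morlot2012population} and invoked as a black box, so there is no ``paper's own proof'' to compare against. Your proposal is a correct and complete outline of the standard argument for this result---stationarity/isotropy from the invariance of the Poisson line process, the intensity via Campbell plus Fubini on the chord-length integral, and the Palm decomposition via two nested applications of Slivnyak (once on the cylinder to extract the line through $O$ with uniform angle, once on that line to extract the atom at $O$)---and this is essentially the route taken in the cited reference. The only point worth tightening is the passage from the Palm distribution of the Cox process $\Phi_u$ at $O$ to the Palm distribution of its driving random measure (the length measure on the line pattern): this is cleanest done via the general Palm formula for Cox processes rather than by ``conditioning first on the line'', but your cylindrical-Slivnyak workaround is equivalent and your identification of it as the subtle step is accurate.
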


Following our previous framework, let us define an another sharing number process $(N_d^{(\gamma)}(t) , t \geq 0)  $ as the number of users sharing the node associated with the tagged user when it is served and zero otherwise. 

\textit{Evaluation of the mean of $N_d^{(\gamma)}$}. Suppose the tagged user is at the origin $O$. Let the associated node $X(t)$ be at a distance $x$ from the origin. Let $d(0,\theta)$ be a line through the origin with $\theta$ uniform from $[0,\pi)$. From the aforementioned theorem, the number of sharing users $N_d^{(\gamma)}$ can be split into two terms: $N_s$ denoting the number of sharing users from stationary $\Phi_u$ and $N_l$ denoting the number of sharing users on the line $d(0,\theta)$.

Given any convex body $Z$,  $\mathbb{E}[N_s(Z)]$ is given by $\pi \lambda_r \lambda_t \mbox{area}(Z)$ \cite{baccelli1997stochastic}. Let $ \hat{J}^{(\gamma)}$ denote the area as defined before. Thus,
$$
\mathbb{E}[N_s] = \pi \lambda_r \lambda_t \mathbb{E}[\hat{J}^{(\gamma)}],
$$
where, $\mathbb{E}[\hat{J}^{(\gamma)}]$ is evaluated using integral geometry (see Appendix F).

Now, let $l(0,\theta)$ denote the length of the line $d(0,\theta)$ in the area $\hat{J}^{(\gamma)}$. Then, $N_l$, the number of sharing users on the line $d(0,\theta)$, is Poisson with parameter $\lambda_t l(0,\theta)$. Thus, $\mathbb{E}[N_l] = \lambda_t \mathbb{E}[l(0,\theta)] $.

One can evaluate $ \mathbb{E}[l(0,\theta)] $ using integral geometry (see Appendix G).
We have,
 \begin{equation}
\mathbb{E}[N_d^{(\gamma)}] = \mathbb{E}[N_s]+ \mathbb{E}[N_l] = \pi \lambda_r \lambda_t \mathbb{E}[\hat{J}^{(\gamma)}] + \lambda_t \mathbb{E}[l(0,\theta)].
 \end{equation}
Thus, the mean number of users sharing the tagged user's association node is larger when the users are distributed according to a Cox process than when the users are distributed according to a Poisson point process, assuming that both have the same mean spatial intensity.

\subsection{Mixture of Pedestrian and Road Network}

Suppose now we have two types of users : drivers who stay on roads, and pedestrians which are unconstrained. If pedestrians are supposed to follow a Poisson point process, from their point of view, the number of sharing users corresponds to the sum of a stationary Poisson point process and a stationary Poisson line process. On the other hand, from a driver's point of view, the number of sharing users corresponds to the same sum, but in addition with a Poisson point process on a road passing through the driver.  Thus, the mean number of sharing users is always greater for drivers. Thus, pedestrians are likely to share its node with fewer users than drivers.

\subsection{Heterogeneous Networks}

Let us consider a deployment of micro-base stations $\hat{\Phi} = \{\hat{X}_1,\hat{X}_2, ..\} $  distributed according to some homogeneous Poisson process of intensity $\hat{\lambda}$ independent of the existing macro-base stations $\Phi =  \{X_1,X_2, ..\} $. Assume that all micro-BS transmit at a fixed power $\hat{p}$. 

Let us consider a mobile user moving at a fixed velocity $v$ along a straight line. For a given SNR threshold $\gamma$, the mobile is served by a micro-base station if its distance from its closest micro-BS is less than $\hat{r}_{\gamma} = (\hat{p}/w\gamma)^{1/\beta}$. Otherwise it is served by a macro-base station provided its distance from the closest macro-BS is less than $r_{\gamma}$.

Note that the SNR level crossing process as previously defined is again an alternating renewal process which now depends on the heterogeneous resource deployment.

\begin{theorem}
\label{hetnet}
For heterogeneous networks with preferential association to micro base stations, the probability that the stationary SNR level crossing process seen by a tagged user is \q{on} is $1-e^{-\pi (\lambda r_{\gamma}^2+ \hat{\lambda} \hat{r}_{\gamma}^2)}$. Also, the mean time for which the process is \q{on} is given by 
\begin{equation}
 \frac{e^{\pi (\lambda r_{\gamma}^2+\hat{\lambda} \hat{r}_{\gamma}^2)}-1}{2v (\lambda r_{\gamma} + \hat{\lambda}\hat{r}_{\gamma})}.
\end{equation}
\end{theorem}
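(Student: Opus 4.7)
The plan is to extend the queueing reformulation from Theorem \ref{queueing model} to the heterogeneous setting. Since a user is served whenever it is within distance $r_\gamma$ of some macro base station in $\Phi$ \emph{or} within distance $\hat r_\gamma$ of some micro base station in $\hat\Phi$, I would associate two independent moving discs to the mobile: one of radius $r_\gamma$ tracking $\Phi$, and one of radius $\hat r_\gamma$ tracking $\hat\Phi$. Applying Theorem \ref{queueing model} separately to each, the number of macro BSs in the first disc behaves as an $M/GI/\infty$ queue with arrival rate $\lambda^{(\gamma)}=2v\lambda r_\gamma$ and mean service time $\pi r_\gamma/(2v)$; similarly the number of micro BSs in the second disc is an independent $M/GI/\infty$ queue with arrival rate $\hat\lambda^{(\gamma)}=2v\hat\lambda\hat r_\gamma$ and mean service time $\pi\hat r_\gamma/(2v)$. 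Independence here comes directly from the independence of $\Phi$ and $\hat\Phi$.

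Next I would identify the SNR level crossing process as the indicator that at least one of the two queues is non-empty. In stationary regime, the number in an $M/GI/\infty$ queue is Poisson with mean equal to the traffic intensity, giving emptiness probabilities $e^{-\lambda\pi r_\gamma^2}$ and $e^{-\hat\lambda\pi\hat r_\gamma^2}$ respectively. By independence,
\begin{equation}
\mathbb{P}(\text{on}) \;=\; 1 - e^{-\lambda\pi r_\gamma^2}\,e^{-\hat\lambda\pi\hat r_\gamma^2} \;=\; 1 - e^{-\pi(\lambda r_\gamma^2 + \hat\lambda\hat r_\gamma^2)},
\end{equation}
which yields the first claim.

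For the mean on-time, I would argue as in Theorem \ref{level crossing characterization} that the level crossing process is an alternating renewal process. The key observation is that an off period ends at the first arrival to either queue after both have emptied; by the memoryless property and the superposition of two independent Poisson arrival streams of rates $2v\lambda r_\gamma$ and $2v\hat\lambda\hat r_\gamma$, the typical off period $I^{(\gamma)}$ is exponential with rate $2v(\lambda r_\gamma + \hat\lambda\hat r_\gamma)$, hence $\mathbb{E}[I^{(\gamma)}] = 1/[2v(\lambda r_\gamma + \hat\lambda\hat r_\gamma)]$. Using the renewal-reward identity
\begin{equation}
\mathbb{P}(\text{on}) = \frac{\mathbb{E}[B^{(\gamma)}]}{\mathbb{E}[B^{(\gamma)}]+\mathbb{E}[I^{(\gamma)}]},
\end{equation}
I can solve for $\mathbb{E}[B^{(\gamma)}]$ and obtain
\begin{equation}
\mathbb{E}[B^{(\gamma)}] = \frac{\mathbb{P}(\text{on})}{1-\mathbb{P}(\text{on})}\,\mathbb{E}[I^{(\gamma)}] = \frac{e^{\pi(\lambda r_\gamma^2 + \hat\lambda\hat r_\gamma^2)} - 1}{2v(\lambda r_\gamma + \hat\lambda\hat r_\gamma)}.
\end{equation}

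The main conceptual obstacle, as I see it, is to justify rigorously that the two moving-disc queues can be treated as a single alternating renewal system whose off periods are exactly exponential at the summed rate. This reduces to observing that the combined emptiness event is a stopping time for the joint $M/GI/\infty$ dynamics and that the memoryless property of the independent Poisson arrival processes survives the conjunction; everything else follows from the homogeneous result.
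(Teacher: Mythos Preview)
Your argument is correct. The two-queue picture, the superposition of the Poisson arrival streams to get the exponential off period at rate $2v(\lambda r_\gamma+\hat\lambda\hat r_\gamma)$, and the renewal-reward step are all sound; the alternating-renewal structure follows because at the instant both queues empty the future evolution depends only on the two independent Poisson arrival processes, which are memoryless.

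The paper, however, does not extend the $M/GI/\infty$ picture. Instead it recasts the heterogeneous coverage region as a Boolean model $\mathcal{E}=\bigl(\cup_i B(X_i,r_\gamma)\bigr)\cup\bigl(\cup_i B(\hat X_i,\hat r_\gamma)\bigr)$ and intersects it with the motion line. It then imports standard stochastic-geometry facts for Boolean models with convex grains: the linear contact distribution gives the off-period law (exponential with rate $2(\lambda+\hat\lambda)\mathbb{E}[R_h^{(\gamma)}]$, where $R_h^{(\gamma)}$ is the mixture radius), and the volume fraction $c=1-e^{-(\lambda+\hat\lambda)\bar V}$ gives $\mathbb{P}(\text{on})$. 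The mean on-time is then recovered from the same renewal identity you use. Your route is more self-contained, reusing only Theorems \ref{queueing model} and \ref{level crossing characterization}; the paper's route is less elementary here but points to a framework that would generalize immediately to more than two tiers or to non-disc grains.
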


\begin{proof}

In order to characterize the SNR level crossing process, we establish a connection to a Boolean model. Assume that the mobile user is moving with unit velocity. Let $B(X_i, r_{\gamma}) $ denote the closed ball of radius $r_{\gamma}$ centered at $X_i$ and $B(\hat{X}_i, \hat{r}_{\gamma}) $ a closed ball of radius $ \hat{r}_{\gamma}$ centered at $\hat{X}_i$. The union of all these closed balls forms a Boolean model  
\begin{equation}
{\cal E} = \bigg(\cup_{X_i \in \Phi} B(X_i, r_{\gamma})\bigg) \cup  \bigg( \cup_{\hat{X}_i \in \hat{\Phi}} B(\hat{X}_i, \hat{r}_{\gamma})\bigg).
\end{equation}

Now, assume that $\cal E$ is intersected by the directed line $\vec{l}$. Note that the Boolean model under consideration has independent convex grains and thus the intersection $\cal E$ $\cap$ $\vec{l}$ yields an alternating sequence of \q{on} and \q{off} periods which are independent. Let $B_h^{(\gamma)}$ and $I_h^{(\gamma)}$ be random variables denoting the length of a typical on on and off periods respectively. 

The distribution of the length of off period $I_h^{(\gamma)}$  is easy to establish using the contact distribution functions and is exponential with parameter $\lambda^*$ \cite{Miles1988}:
$$
f_{I_h^{(\gamma)}}(l) = \lambda^* e^{\big(- \lambda^* l\big)} , 
$$
where, $\lambda^*$ is $2 (\lambda + \hat{\lambda}) \mathbb{E}[R_h^{(\gamma)}]$. Here $R_h^{(\gamma)}$ is the random variable denoting the radius of the closed ball and is given by

\begin{equation}
\label{randomradius}
R_h^{(\gamma)} = 
\begin{cases}
& r_{\gamma} ~\mbox{w.p.} ~~\frac{\lambda}{\lambda + \hat{\lambda}} \\
& \hat{r}_{\gamma} ~ \mbox{w.p.} ~~\frac{\hat{\lambda}}{\lambda +\hat{\lambda}}. \\
\end{cases}
\end{equation}

Thus, the mean off period under the assumption of unit velocity is
$$
\mathbb{E}[I_h^{(\gamma)}] = \frac{1}{\lambda^*} = \frac{1}{2(\lambda r_{\gamma} + \hat{\lambda} \hat{r}_{\gamma})}.
$$

Now, in the stationary regime, the probability that the mobile user is either within a distance $\hat{r}_{\gamma}$ from a micro-BS or a distance $r_{\gamma}$ from a macro-BS i.e., it's SNR level crossing process is \q{on}, is given by the volume fraction\cite{Miles1988}:
\begin{equation}
\label{volumefraction}
c = 1 - e^{(-(\lambda + \hat{\lambda}) \bar{V})},
\end{equation}
where
\begin{equation}
\bar{V} = \mathbb{E}[v_d(\mbox{typical grain})]  = \pi \mathbb{E}[(R_h^{(\gamma)})^2] = \frac{ \pi (\lambda r_{\gamma}^2 + \hat{\lambda}\hat{r}_{\gamma}^2)}{\lambda +\hat{\lambda}}.
\end{equation}
Thus, $c = 1-e^{-\pi (\lambda r_{\gamma}^2+\hat{\lambda} \hat{r}_{\gamma}^2)}.$ The probability evaluated above does not depend on the velocity of the mobile user and thus holds for any constant velocity $v$.

The mean on period $\mathbb{E}[B_h^{(\gamma)}]$ can be evaluated using the following relation
$$
c = \frac{\mathbb{E}[B_h^{(\gamma)}]}{\mathbb{E}[B_h^{(\gamma)}] + \mathbb{E}[I_h^{(\gamma)}]},
$$

which results in
$$
\mathbb{E}[B_h^{(\gamma)}] =  \frac{e^{\pi (\lambda r_{\gamma}^2+\hat{\lambda} \hat{r}_{\gamma}^2)}-1}{2 (\lambda r_{\gamma} + \hat{\lambda}\hat{r}_{\gamma})}.
$$

Since, the mobile user is moving with a fixed velocity $v$, the mean time for which the mobile user is \q{on} is given by $\frac{\mathbb{E}[B_h^{(\gamma)}]}{v}.$

\end{proof}

These results provide an analytical characterization of the impact of heterogeneous densification on the mobile user's temporal performance. Let us now compare the performance improvement seen by the mobile user in a heterogeneous network as compared to that of a homogeneous network. The graphs in Fig. \ref{fhmeanontime} and Fig. \ref{fhvolumefraction} illustrate the difference in the expected on-times and volume fraction of the networks respectively.

\begin{figure}[h]
\centering
\includegraphics[width=3.5in]{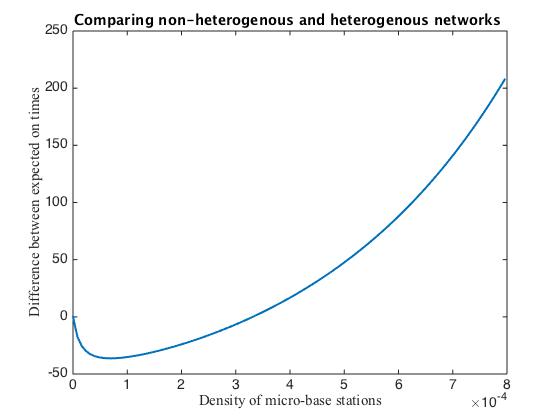}
\caption{Comparing mean-on time for heterogeneous and homogeneous networks.}
\label{fhmeanontime}
\end{figure}

\begin{figure}[h]
\centering
\includegraphics[width=3.5in]{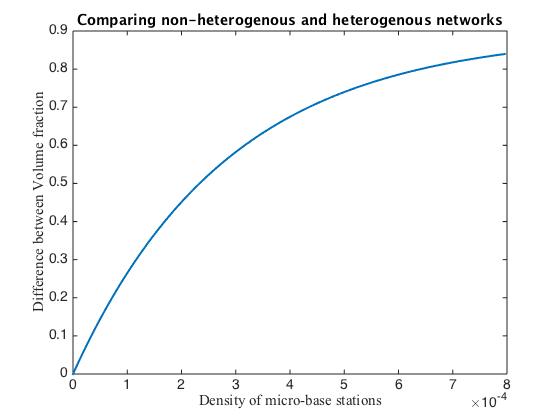}
\caption{Comparing volume fraction for heterogeneous and homogeneous networks.}
\label{fhvolumefraction}

\end{figure}

Notice that as micro-base stations are added, the expected on time decreases and later increases. The initial decrease is due to the inclusion of many relatively small length on-times resulting from the micro-base stations in the voids of the homogeneous network. However, the volume fraction increases monotonically with the addition of micro-base stations.

\subsection{Stored Video Streaming in Heterogeneous Networks }

Consider a scenario as discussed before where mobile users are viewing a video 
being streamed over a sequence of wireless downlinks, but now in heterogeneous network with micro and macro BS.
In this setting once again we consider a fluid queue representing the tagged mobile user's
playback buffer state similar to previous section and ask whether there is a choice of $\gamma$
such that the fluid queue is unstable, i.e., ensures no rebuffering in
the long term. Let

\begin{equation}
\label{eqrx1}
\rho(\gamma,\xi) =
\frac {a
\log(1+\gamma) \mathbb{P}(\mbox{on})} 
{\eta}\left(\mathbb{E}\bigg[\frac{1}{N_p^{(\gamma)}+1}| \mbox{on}\bigg]\right).
\end{equation}

Assuming our approximation is valid, we need to find $E[\frac{1}{N_p^{(\gamma)}+1}]$ in the case of heterogeneous network. Let events $G$ and $H$ be that the tagged user is served by micro BS and macro BS respectively.

\begin{equation}
\mathbb{E}\big[\frac{1}{N_p^{(\gamma)}+1}| \mbox{on}\big] = \frac{ \bigg(\mathbb{E}\big[\frac{1}{N_p^{(\gamma)}+1} | G\big]  \mathbb{P}(G)+\mathbb{E}\big[\frac{1}{N_p^{(\gamma)}+1}|H\big] \mathbb{P}(H)\bigg) }{ \mathbb{P}(\mbox{on})}.
\end{equation}

Since, the micro and macro base stations are distributed independently, the mobile user experiences two independent alternating renewal processes. Thus, in stationary regime, the probability that mobile is served by macro BS is the product of the probabilities that mobile is \q{on} period of alternating renewal process of macro BS and in \q{off} period of that of micro BS.

$$ \mathbb{P}(G) = 1- e^{-\hat{\lambda} \pi \hat{r}_{\gamma}^2},$$
$$ \mathbb{P}(H) = e^{-\hat{\lambda} \pi \hat{r}_{\gamma}^2} (1- e^{-\lambda \pi r_{\gamma}^2}).$$

Let $\hat{J}_1^{(\gamma)}$ and  $\hat{J}_2^{(\gamma)}$ denote the area similar to that of what we considered before.  Given the mobile is served by macro BS, we need to consider the users which are within the area $\hat{J}_1^{(\gamma)}$  excluding the area covered by micro BS in this area.  The area covered by the micro BS in $\hat{J}_1^{(\gamma)}$  is approximated to be $\nu^{(\hat{r}_{\gamma})} \times \mathbb{E}[\hat{J}_1^{\gamma}]$, where $\nu^{(\hat{r}_{\gamma})}$ is the volume fraction associated with micro-BS as given by (\ref{eq: rho with r}). Then

$$ \mathbb{E}\big[\frac{1}{N_p^{(\gamma)}+1} |G\big]  = \frac{1-e^{\xi \mathbb{E}[\hat{J}_2^{(\gamma)}]}}{\xi \mathbb{E}[\hat{J}_2^{(\gamma)}]}, $$
$$ \mathbb{E}\big[\frac{1}{N_p^{(\gamma)}+1} | H\big]  = \frac{1-e^{\xi \mathbb{E}[\hat{J}_1^{(\gamma)}](\hat{\lambda} \pi \hat{r}_{\gamma}^2 )}}{\xi \mathbb{E}[\hat{J}_1^{(\gamma)}]( \hat{\lambda} \pi \hat{r}_{\gamma}^2 )}. $$

For a given density of macro-BS $\lambda$, the density of micro-BS $\hat{\lambda}$ and the density of users $\xi$, we can evaluate the SNR value $\gamma^*$ for which the load factor $\rho(\gamma,\xi)$ given in (\ref{eqrx1}) is maximum. Fig.\ref{lambda2_gamma}  illustrates the optimal SNR value ($\gamma^*$) with varying density of micro BS $\hat{\lambda}$.  Notice that the optimal gamma value initially decreases with the density of micro-BS.

Since in our setting the micro BS have smaller transmission power, for a given threshold $\gamma$, the radius of coverage for micro BS is smaller than that of the macro BS .i.e., $\hat{r_{\gamma}} < r_{\gamma}$.  Since the micro-BS are given higher preference, with an increase in their density, the optimal threshold decreases in order to increase their coverage.  Also, the cost incurred by increasing coverage of macro-BS is compensated by the increased density and coverage of micro-BS till a certain density.

\begin{figure}[h]
\centering
\includegraphics[scale=0.5]{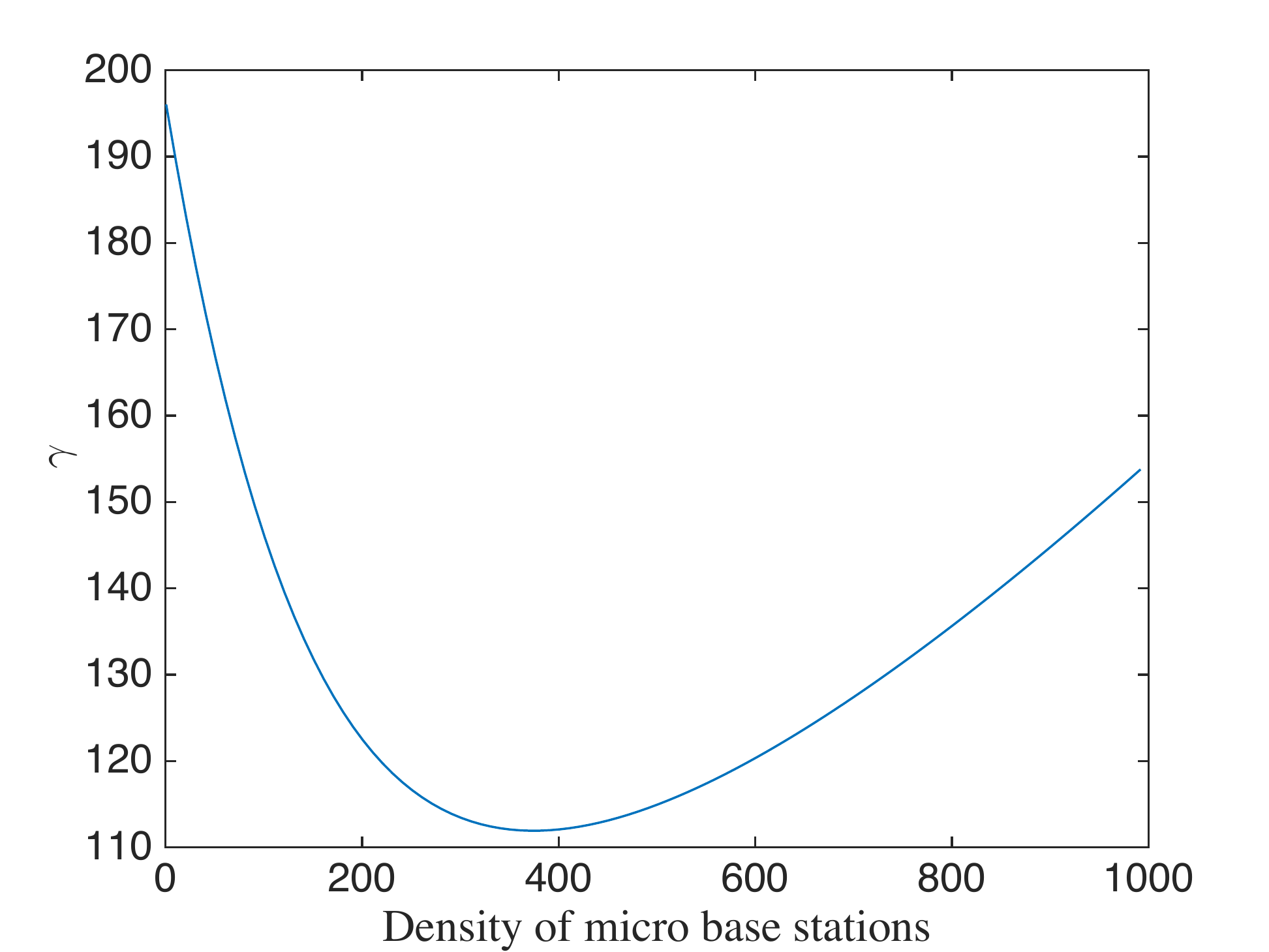}
\caption{ $\gamma^*$ with increase in micro- BS density for a certain fixed density of macro-BS for an arbitrary
positive constants $a$ and $\eta$ in heterogeneous case.
}
\label{lambda2_gamma}
\end{figure}

Fig.\ref{lambda2_xi} illustrates the level set curve of $\rho(\gamma^*,\xi) =1$ for various values of $\hat{\lambda}$ and $\xi$ for a given density of macro-BS $\lambda$.  Given this setup, it is possible to answer questions like what is the minimum density of micro-base stations that needs to be deployed by the operator to serve a certain density of users, given the density of macro-BS $\lambda$ such that the video streaming is uninterrupted for all the users. Thus, operators can learn the cost incurred to serve a higher density of users by deploying micro-BS and in case the cost incurred is higher, operator might be interested in other technologies.

\begin{figure}[h]
\centering
\includegraphics[scale=0.5]{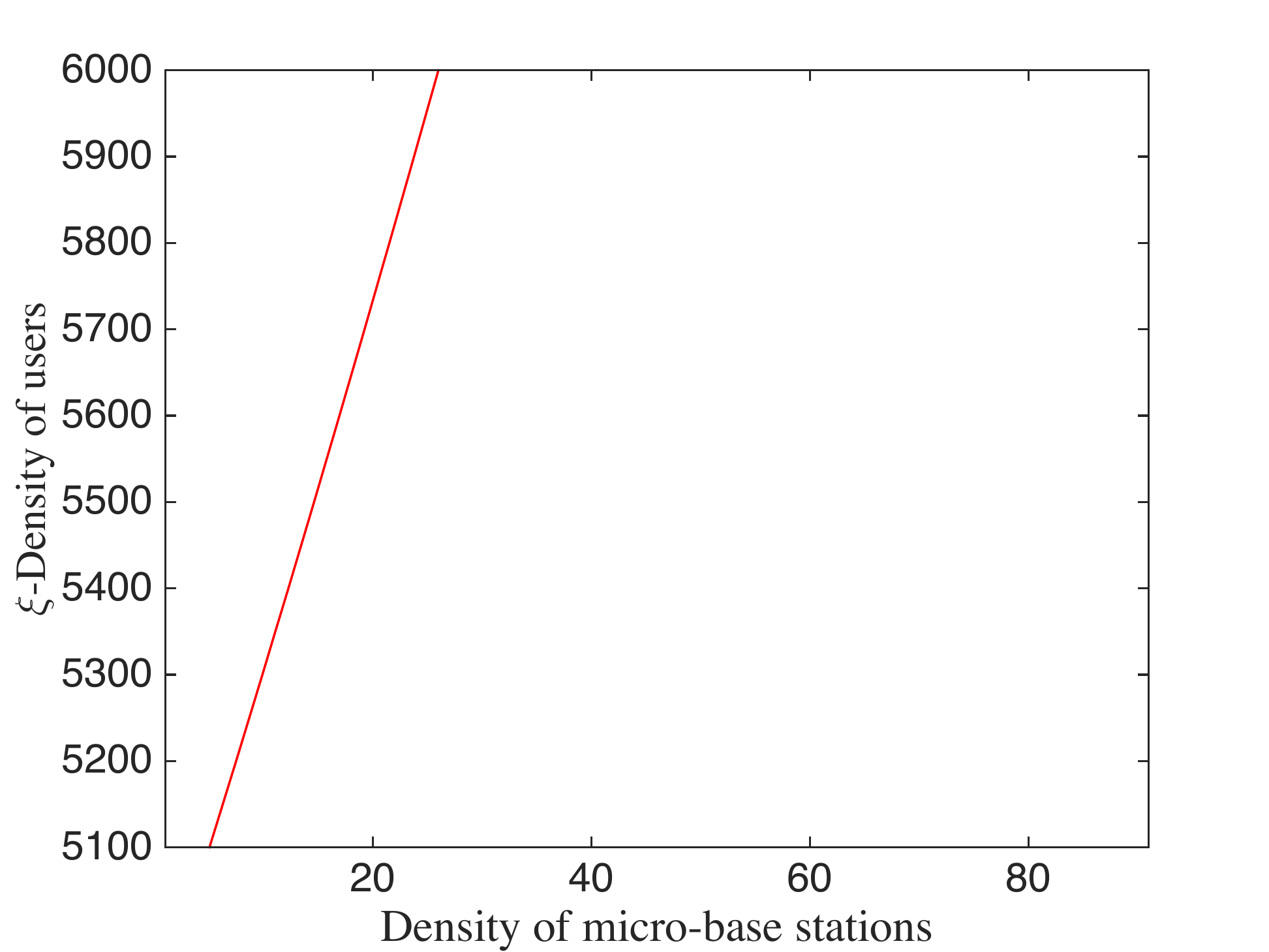}
\caption{Level set curve of $\rho(\gamma^*,\xi) =1$ for an arbitrary
positive constants $a$ and $\eta$ in heterogeneous case.
}
\label{lambda2_xi}
\end{figure}

\begin{remark}
In heterogeneous networks, the interference from macro-BS to micro-Bs is of major concern. One can introduce blanking, to reduce the effect of such interference i.e., assume that with certain probability $f$ micro-BS transmit and with probability $1-f$ macro BS transmit. \\

Also, by considering heterogeneous networks with cellular BS and Wi-Fi hotspots, there is no such problem of interference since both cellular BS and Wifi hotspots operate at different frequencies.

\end{remark}

\section{Conclusion}

As explained in the introduction, the analysis of temporal variations of the shared rate experienced by a mobile user requires the
characterization of both the functional distribution of a continuous
parameter stochastic process (rate process) constructed on a random spatial structure
(e.g. the Poisson Voronoi tessellation) and of another stochastic process (sharing number process) constructed on a random distribution of static users. This paper addressed the simplest question
of this type by focusing on the underlying SNR process and the sharing number process in the absence of fading.
This allowed us to derive an exact representation of the
level crossings of the stochastic process of interest as an
alternating renewal process with a full characterization
of the involved distributions and of the asymptotic behavior of rare events. 
The simplicity and the closed form nature of this mathematical picture
are probably the most important observations of the paper. We also showed by simulation that this very
simple model provides a good representation of the salient characteristics
which happen for more complex systems, like e.g. 
those with fading when the fading variance is small enough,
or those based on SINR rather than SNR when the path loss exponent
is large enough. This model is hence of potential practical use
as is, in addition to being a first glimpse at a set
of new research questions. The most challenging questions on the
mathematical side are probably
(1) the understanding of the tension between the randomness coming
from geometry (studied in the present paper), from sharing the network with other users (also studied in the present paper) and that coming from propagation
(only studied by simulation here): it would be nice to analytically
quantify when one dominates the other.
(2) the extension of the analysis to SINR processes, which are
our long term aim and will require
significantly more sophisticated mathematical tools (e.g. based on functional
distributions of shot noise fields), than those used so far.
On the practical side, the main future challenges are linked to the initial 
motivations of this work, namely in the prediction and optimization of the
user quality of experience. Many scenarios refining those studied can be considered. For instance, the stationary analysis of the
fluid queue representing video streaming should be completed by a transient
analysis and by a discrete time analysis.
This alone opens an interesting and apparently unexplored connection between
stochastic geometry and queuing theory with direct implications to
wireless quality of experience.

\section{Acknowledgments}

This work is supported in part by the National Science Foundation under Grant
No. NSF-CCF-1218338, NSF Grant CNS-1343383 and an award from the Simons Foundation (No.197982), all
to the University of Texas at Austin.

\IEEEpeerreviewmaketitle

\bibliographystyle{IEEEtran}
\bibliography{references.bib}{}

\section*{Appendix}

\subsection{Proof of Theorem \ref{th:thm1}}

The SNR process is closely related to the process tracking the
distance of the mobile user to the closest node (see Eq. \eqref{eq:snr}). Let us consider the projection $z$ of the mobile user's closest node $Z$ onto the straight line. Then, $z$ is a point of $\chi$ if $z$ lies within the Voronoi cell of $Z$. Thus, the point process can be characterized by the fact that given a node at a height $h$ from the straight line, the closed ball $B_h(z)$ of radius $h$ centered at the projection point is empty of nodes. Consider a rectangle of unit length and height $2x$ such that the straight line passes through the center. Thus, from Mecke's formula \cite{BB3}, the intensity of the point process $\chi$ is
\begin{equation}
\label{rateofextrema}
\begin{split}
\mu_{\chi} &= v \lim_{x \to \infty} \lambda 2x \bigg( \int_{-x}^{x} e^{-\lambda \pi h^2} \frac{1}{2x} dh \bigg)\\ \\
&= v \lim_{x \to \infty} \sqrt{\lambda} \mbox{erf}(\sqrt{\lambda \pi} x) = v \sqrt{\lambda}.
\end{split}
\end{equation}

\begin{figure}[!t]
\centering
\includegraphics[scale=0.4]{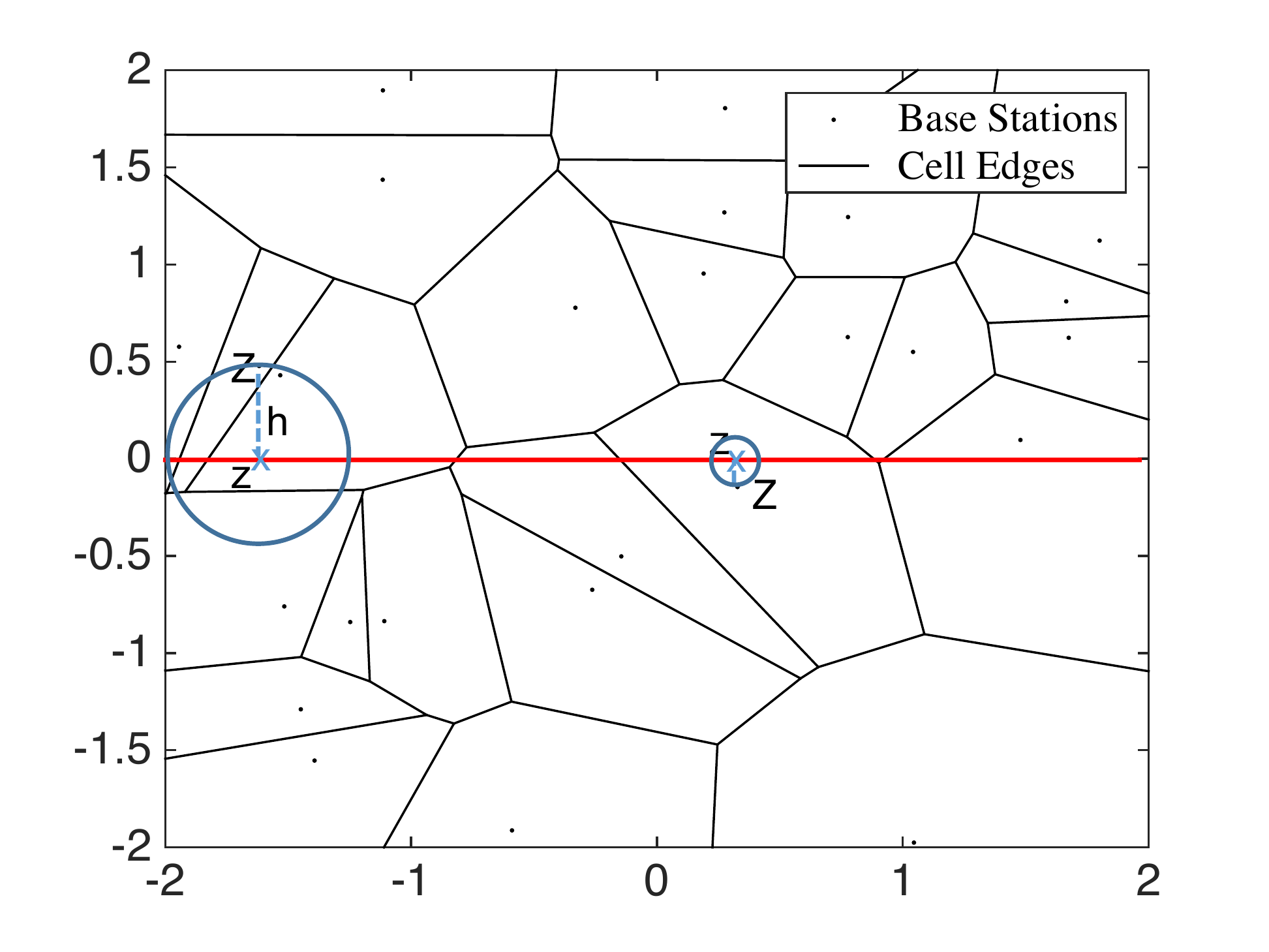}
\caption{The disc $B_z(h)$ around the projection point $z$ in two scenarios: where $z$ belong to the Voronoi cell of nucleus $Z$ and where it doesn't.}
\label{disc around mobile}
\end{figure}

\subsection{Busy Period of the M/GI/$\infty$ Queue}

\begin{theorem}(Makowski \cite{Makowski})
\label{makowski}
Consider an $M/GI/\infty$ queue with arrival rate $\lambda$ and generic service time $W$.
Let $M$ denote an $\mathbb{N}$-valued random variable which is geometrically distributed according to
\begin{equation}
\label{eq:geometric dist}
\mathbb{P}(M = l) = (1-\nu)(\nu)^{l-1}, l = 1,2,\ldots
\end{equation} 
with 
\begin{equation}
\label{eq: rho}
\nu = 1 - e^{-\rho}   ~\mbox{and} ~ \rho = \lambda \mathbb{E}[W].
\end{equation}
Consider the $\mathbb{R^{+}}$-valued random variable $U$ distributed according to
\begin{equation}
\label{eq: rv U}
\mathbb{P}(U \leq u) = \frac{1}{\nu} (1 - e^{-\rho \mathbb{P}[\hat{W} \leq u]}), u  \geq 0,
\end{equation}
where, 
$\hat{W}$ is the forward recurrence time associated with the generic service time $W$.
Let $\{ U_n , n \geq 1\}$ be an i.i.d. sequence independent of the random variable $M$. 
Let $B$ denote a typical busy period. Then the forward recurrence time $\hat{B}$ associated with $B$ admits the following random sum representation:
\begin{equation}
\label{eq:random sum}
\hat{B} =_{d} \sum_{i=1}^{M} U_i,
\end{equation}
where $=_{d}$ denotes equality in distribution.
\end{theorem}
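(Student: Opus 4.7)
The plan is to derive the compound-geometric representation of $\hat{B}$ through a regenerative decomposition at successive departure epochs, exploiting the Poisson state structure of the stationary $M/GI/\infty$ queue.

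First, I would recall that at every time $t$ the number of customers $K(t)$ in the stationary $M/GI/\infty$ queue is Poisson with parameter $\rho = \lambda\mathbb{E}[W]$, and conditional on $K(t) = k$ the residual service times of the $k$ present customers are i.i.d.\ with the forward-recurrence distribution $F_{\hat{W}}$ of $W$. Conditioning on the system being in a busy period at time $0$, i.e., $K(0)\ge 1$, an event of probability $\nu = 1-e^{-\rho}$, therefore yields a truncated-Poisson number of customers with i.i.d.\ residuals $\hat{W}_1,\hat{W}_2,\ldots$. The first post-$0$ departure then occurs at $U_1 = \min(\hat{W}_1,\ldots,\hat{W}_{K(0)})$, and a direct Poisson-thinning computation gives
\begin{equation*}
\mathbb{P}(U_1\le u \mid K(0)\ge 1) = \frac{1-e^{-\rho F_{\hat{W}}(u)}}{1-e^{-\rho}},
\end{equation*}
which matches the law of $U$ in \eqref{eq: rv U}.

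Second, I would carry out a regeneration at the epoch $U_1$. The state at time $U_1^+$ consists of (a) the surviving original customers, whose residuals $\hat{W}_i - U_1$ given $\hat{W}_i > U_1$ retain an i.i.d.\ structure after thinning, and (b) the Poisson-rate-$\lambda$ arrivals during $(0, U_1]$ still in service, which form an independent Poisson-distributed set of points with i.i.d.\ residual services. A superposition argument shows that the total customer count at $U_1^+$ is once again Poisson$(\rho)$ with i.i.d.\ $\hat{W}$-residuals, so the probability that the busy period continues past $U_1$ is again $\nu$, independent of the value of $U_1$. Iterating this regeneration gives $\hat{B} = U_1 + U_2 + \cdots + U_M$, where $M$ is the number of regenerations needed for the busy period to end and inherits the geometric law $\mathbb{P}(M = l) = (1-\nu)\nu^{l-1}$.

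The main technical obstacle is rigorously establishing the independence of $M$ from $(U_i)$ together with the fact that each successive $U_i$ is i.i.d.\ distributed as $U$. This amounts to verifying that at every regeneration epoch the conditional joint distribution of the current state and the future evolution reproduces the stationary distribution seen at time $0$; the key is that the Poisson state structure of $M/GI/\infty$ is preserved under the operations of advancing time, thinning out departed customers, and superposing fresh Poisson arrivals, so that each regeneration is a faithful copy of the initial setup. Once this symmetry is established, the compound-geometric representation follows, and a Laplace-transform consistency check via $\mathcal{L}_{\hat{B}}(s) = (1-\nu)\mathcal{L}_U(s)/(1-\nu\mathcal{L}_U(s))$ against the forward-recurrence transform of $B$ completes the argument.
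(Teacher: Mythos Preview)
The paper does not prove this theorem at all: it is stated in the appendix purely as a quotation of Makowski's result and is invoked as a black box in the proofs of Theorems~\ref{level crossing characterization} and~\ref{asymptotic}. So there is no ``paper's own proof'' to compare against; your proposal is being measured against the literature rather than the paper.

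Your regenerative-decomposition approach is essentially correct, and indeed the key identity --- that the post-regeneration state is Poisson$(\rho)$ with i.i.d.\ $\hat W$-residuals, independent of the regeneration gap --- really does hold. A quick check: conditionally on $U_1=u$, the number of surviving originals is Poisson$(\rho(1-F_{\hat W}(u)))$ with residuals distributed as $\hat W-u\mid \hat W>u$, while the new arrivals still present number Poisson$(\rho F_{\hat W}(u))$; mixing these two populations according to their Poisson weights yields exactly the $\hat W$ residual law, and the total count is Poisson$(\rho)$ independent of $u$. This is the computation you would need to write out.

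There is one genuine slip in your description, however. You call $U_1=\min(\hat W_1,\ldots,\hat W_{K(0)})$ ``the first post-$0$ departure'', but it is only the first departure \emph{among the original customers}; a post-$0$ arrival with a very short service time may well depart earlier. Fortunately this does not damage the argument: the busy period cannot end while any original is still present, so decomposing at the epochs where originals depart (rather than at all departures) is exactly what is needed, and your step~(b) already accounts for the new arrivals present at $U_1$. Just relabel $U_1$ accordingly.
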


\subsection{Proof of Theorem \ref{asymptotic}}
$V^{({\gamma})}$ is the sum of the busy period $B^{({\gamma})}$ and the idle period $I^{({\gamma})}$ of the $M/GI/\infty$ queue. We know that the distribution of the idle period is exponential with parameter $2 \lambda v r_{\gamma}$ and that the busy period $B^{({\gamma})}$  and idle period $I^{({\gamma})}$ are independent.

The Laplace transform of the random variable $V^{({\gamma})}$ scaled by $2 \lambda v r_{\gamma}$ is given by

\begin{equation}
\label{eq:laplace interarrival}
\Phi_{V^{(\gamma)}}(f(\gamma)s) = \Phi_{B^{({\gamma})}}(f(\gamma) s) \Phi_{I^{({\gamma})}}(f(\gamma) s),
\end{equation}
\mbox{where},
\begin{equation}
\label{eq: laplace idle}
\Phi_{I^{({\gamma})}}(f(\gamma) s)  = \frac{1}{1+ s}. 
\end{equation}

So, asymptotically, as $r_{\gamma}$ goes to 0 , $ \Phi_{I^{({\gamma})}}(f(\gamma) s) $ converges in distribution to an exponential random variable with unit mean. 

Next, we need to prove that asymptotically, the busy period $B^{({\gamma})}$ scaled by $f(\gamma)$  goes to one. Let us consider the forward recurrence time of the busy period, $\hat{B}^{({\gamma})}$ which is defined in \eqref{eq:forward recurrence}. Now from Theorem ~\ref{level crossing characterization} we get,

$$
\mathbb{E}[e^{-s\hat{B}^{({\gamma})}}] = \mathbb{E}[e^{-s \sum_{i=1}^{M^{({\gamma})}} U^{({\gamma})}_i}] = \frac{(1-\nu^{({\gamma})})\mathbb{E}[e^{-sU^{({\gamma})}}]}{1-\nu^{({\gamma})}\mathbb{E}[e^{-sU^{({\gamma})}}]} .
$$
and it follows that
\begin{equation}
\label{eq:laplace busy}
\mathbb{E}[e^{-s f(\gamma) \hat{B}^{({\gamma})}}] = \frac{(1-\nu^{({\gamma})})\mathbb{E}[e^{-s f(\gamma) U^{({\gamma})}}]}{1-\nu^{({\gamma})}\mathbb{E}[e^{-s f(\gamma) U^{({\gamma})}}]} .
\end{equation}

Now, 
\begin{equation}
\label{eq:laplace U}
 \mathbb{E}[e^{-s f(\gamma) U^{({\gamma})}}]  =  \sum_{k=0}^{\infty} l_k(s,r_{\gamma}).
\end{equation}
with
$$ l_k(s,r_{\gamma}) = \frac{ (-1)^k (2 \lambda v sr_{\gamma})^k \mathbb{E}[(U^{({\gamma})})^k]}{k!}.  $$
 
Given that the support of service times $W^{({\gamma})}$ is $[0,2r_{\gamma}/v]$, it follows from \eqref{eq: rv U} that the support of the random variable $U^{({\gamma})}$ is also $[0,2r_{\gamma}/v]$ and all its moments are bounded $\mathbb{E}[(U^{({\gamma})})^k] < (2r_{\gamma}/v)^k$. Thus, for all values of $s$, we get that $\lim_{r_{\gamma} \to 0} l_k(s,r_{\gamma}) =0.$
 
For all $k = 2, 3..$ we have that
$$ |l_k(s,r_{\gamma})| \leq \frac{(2 \lambda v s r_{\gamma}^2)^k} {k!}.$$

 
Therefore, for all values of $s$,$\lim_{r_{\gamma} \to 0} \sum_{k=2}^{\infty} l_k(s,r_{\gamma}) = 0.$

Thus, Eq \eqref{eq:laplace U} can be written as 

 $$
 \mathbb{E}[e^{-s f(\gamma)U^{({\gamma})}}] = 1 - s 2 \lambda r_{\gamma} v \mathbb{E}[U^{({\gamma})}] +  o(r_{\gamma}),
 $$
Now, substituting into Eq \eqref{eq:laplace busy} we have  
\begin{equation}
\begin{split}
& \mathbb{E}[e^{-s f(\gamma) \hat{B}^{({\gamma})}}]  = \\ \\
 & \frac{1- \nu^{({\gamma})} +2s \lambda v r_{\gamma} \mathbb{E}[U^{({\gamma})}](1- \nu^{({\gamma})})+ o(r_{\gamma})}{1-\nu^{({\gamma})} + 2\nu^{({\gamma})} s \lambda v r_{\gamma} \mathbb{E}[U^{({\gamma})}]   + o(r_{\gamma}) }.
 \end{split}
\end{equation}

From which it follows that
\begin{equation}
\label{eq:limit1}
\lim_{r_{\gamma} \to 0} \mathbb{E}[e^{-s f(\gamma) \hat{B}^{({\gamma})}}] = 1.
\end{equation}

From ~\cite{Makowski} we have that 
\begin{equation}
\label{eq: laplace relation}
\mathbb{E}[e^{-s f(\gamma) B^{({\gamma})}}] = 1 - sf(\gamma) \mathbb{E}[B^{({\gamma})}] \mathbb{E}[e^{-s f(\gamma) \hat{B}^{({\gamma})}}],
\end{equation}
and also
\begin{equation}
\label{eq:mean busy}
\mathbb{E}[B^{({\gamma})}] = \frac{\nu^{({\gamma})}}{2\lambda vr_{\gamma} (1-\nu^{({\gamma})})}.
\end{equation}
Thus, we get

$$
\mathbb{E}[e^{-s f(\gamma) B^{({\gamma})}}]  = 1 - s\frac{1-e^{-\lambda \pi r_{\gamma}^2}}{e^{-\lambda \pi r_{\gamma}^2}} \mathbb{E}[e^{-s f(\gamma) \hat{B}^{({\gamma})}}].
$$
Using the limit in \eqref{eq:limit1}, we get

\begin{equation}
\label{eq:result1}
\lim_{r_{\gamma} \to 0} \mathbb{E}[e^{-s f(\gamma) B^{({\gamma})}}] = 1.
\end{equation}

Thus, the distribution of the random variable $V^{({\gamma})}$ scaled by  $f(\gamma) = 2 \lambda v r_{\gamma} $ converges in distribution to an exponential random variable with unit mean.

Now, consider a continuous function $g(\gamma)$ such that $\lim_{r_{\gamma} \to \infty} g(\gamma) = 0$. The Laplace transform of the random variable $V^{({\gamma})}$ scaled by $g(\gamma)$ is given by

\begin{equation}
\label{eq: laplace interarrival2}
\Phi_{V^{({\gamma})}(s)} = \Phi_{B^{({\gamma})}}(g(\gamma) s) \Phi_{I^{({\gamma})}}(g(\gamma) s),
\end{equation}
\mbox{where},

\begin{equation}
\label{eq: laplace idle2}
\Phi_{I^{({\gamma})}}(g(\gamma) s)  = \frac{1}{1+ s g(\gamma)  (1 / 2 vr_{\gamma} \lambda )}. 
\end{equation}

Asymptotically, as $r_{\gamma}$ goes to $\infty$, $ \Phi_{I^{({\gamma})}}(g(\gamma) s) $ goes to 1. 

By the same arguments as those for the up-crossing case
$$
\mathbb{E}[e^{-s g(\gamma)\hat{B}^{({\gamma})}}] = \frac{(1-\nu^{({\gamma})})\mathbb{E}[e^{-sg(\gamma)U^{({\gamma})}}]}{1-\nu^{({\gamma})}\mathbb{E}[e^{-sg(\gamma)U^{({\gamma})}}]},
$$
where,
\begin{equation}
\label{eq:laplace U2}
\mathbb{E}[e^{-sg(\gamma)U^{({\gamma})}}]  =  \sum_{k=0}^{\infty} \frac{(-1)^k( 2\lambda v sr_{\gamma})^k (e^{-\rho^{({\gamma})}})^{k}}{k!}  \mathbb{E}[(U^{({\gamma})})^k].
\raisetag{1\baselineskip}
\end{equation}

Let $m_k(s,r_{\gamma}) = \frac{(-1^k)(2\lambda v sr_{\gamma})^k (e^{-\rho^{({\gamma})}})^{k-1}}{k!}  \mathbb{E}[(U^{({\gamma})})^k]. $
 The moments of the random variable $U^{({\gamma})}$ are bounded $\mathbb{E}[(U^{({\gamma})})^k] < (2r_{\gamma}/v)^k$. Thus, for all values of $s$, we get that $\lim_{r_{\gamma} \to \infty} m_k(s,r_{\gamma}) =0,$ because $\rho^{({\gamma})} = \lambda \pi r_{\gamma}^2$ and the exponential term $(e^{-\rho^{({\gamma})}})^{k-1}$ dominates. Also for $k =2,3,...$ we have that

$$|m_k(s,r_{\gamma})| \leq \frac{(2 \lambda v s r_{\gamma}^2)^k (e^{-\rho^{({\gamma})}})^{k-1}} {k!}.$$

 
Therefore for all values of $s$,
$$
\lim_{r_{\gamma} \to \infty} \sum_{k=2}^{\infty} m_k(s,r_{\gamma}) = 0.
$$

Thus, Eq \eqref{eq:laplace U2} can be written as 
 $$
 \mathbb{E}[e^{-sf(\gamma)U^{({\gamma})}}] = 1 - s e^{-\rho^{({\gamma})}}2vr_{\gamma}\lambda  \mathbb{E}[U^{({\gamma})}] + e^{-\rho^{({\gamma})}} o(r_{\gamma}),
 $$
where, $ \lim_{r_{\gamma} \to \infty} o(r_{\gamma}) = 0$.

From Lemma \ref{equality} we have that,$
\lim_{r_{\gamma} \to \infty} 2 \lambda r_{\gamma}v \mathbb{E}[U^{({\gamma})}] = 1.$
Putting these results together we have that
 \begin{equation}
 \begin{split}
& \mathbb{E}[e^{-s g(\gamma)\hat{B}^{({\gamma})}}] = \\ \\
& \frac{ e^{-\rho^{({\gamma})}}  - s e^{-2\rho^{({\gamma})}}2vr_{\gamma}\lambda  \mathbb{E}[U^{({\gamma})}]  + e^{-2\rho^{({\gamma})}} o(r_{\gamma})}{e^{-\rho^{({\gamma})}}\big[1-e^{-\rho^{({\gamma})}}\big]\big[s 2vr_{\gamma}\lambda  \mathbb{E}[U^{({\gamma})}] - e^{-\rho^{({\gamma})}} o(r_{\gamma})\big] +  e^{-\rho^{({\gamma})}}}
 \end{split}
 \end{equation}
so that

\begin{equation}
\label{eq:limit2}
\lim_{r_{\gamma} \to \infty} \mathbb{E}[e^{-s g(\gamma)\hat{B}^{({\gamma})}}] = \frac{1}{1+s}.
\end{equation}

Now, from Eq \eqref{eq: laplace relation} and \eqref{eq:mean busy} we get,
$$
\mathbb{E}[e^{-s f(\gamma) B^{({\gamma})}}]  = 1 - s (1-e^{-\lambda \pi r_{\gamma}^2}) \mathbb{E}[e^{-s f(\gamma) \hat{B}^{({\gamma})}}].
$$
Thus from the limit in \eqref{eq:limit2},$
\lim_{r_{\gamma} \to \infty} \mathbb{E}[e^{-s g(\gamma) B^{({\gamma})}}] = \frac{1}{1+s}.$

Thus, the random variable $V^{({\gamma})}$ scaled by  $g(\gamma) = 2 \lambda v r_{\gamma} e^{-\lambda \pi r_{\gamma}^2} $ converges in distribution to an exponential random variable with unit mean.

\subsection{Proof of Theorem \ref{thhsr}}

Let us first prove the second relation.
Let $K=p/w$ and let $D$ denote the distance to the closest base
station.  We have
\begin{eqnarray*}
& & \hspace{-1cm}\mathbb{P}((\log( 1+{\mathrm{SNR}}) > s) \\
& = & \mathbb{P}\left(
KD^{-\beta} > e^s -1\right)\\ 
& = & \mathbb{P}\left( D^{\beta} < \frac K{e^s -1} \right)\\
& = & \mathbb{P}\left( D^{2} < \left(\frac K {e^s -1} \right)^{\frac 2 \beta}
\right)\\
& = & 1 -
\exp\left(- \lambda \pi \left(\frac K {e^s -1} \right)^{\frac 2 \beta} \right),
\end{eqnarray*}
where we used the fact that $D^2$ is an exponential random variable
with parameter $\lambda \pi$. The result then follows from 
the bound $ a \ge 1-e^{-a} \ge a + a^2/2$, for $a\ge 0$. 

We now prove the first relation. Since
$\hat S^{(\gamma)} \le S^{(\gamma)} \le \log( 1+{\mathrm{SNR}})$, in order to prove
the first inequality, it is enough to show that
\begin{eqnarray}
\label{eq:lbb}
& &  \lim_{s\to \infty} - \frac 1 s \log\left(
\mathbb{P}(\hat S^{(\gamma)} > s)\right) 
\le \frac 2 \beta.
\end{eqnarray}
We have
\begin{eqnarray*}
\mathbb{P}(\hat S^{(\gamma)} > s) & = &  \mathbb{P}( \log(1+KD^{-\beta})>s(1+ \hat N_{\gamma})).
\end{eqnarray*}
Hence
\begin{eqnarray*}
& &\hspace{-1cm}
 \mathbb{P}(\hat S^{(\gamma)} > s) \\
& = & \mathbb{P}\left(
KD^{-\beta} > \exp\left(s(1+ \hat N_{\gamma})\right) -1\right)\\ 
& \ge & \mathbb{P}\left( KD^{-\beta} > \exp\left(s(1+ \hat N_{\gamma})\right) \right)\\
& = & \mathbb{P}\left( D^{\beta} < K \exp\left(-s(1+ \hat N_{\gamma})\right) \right)\\
& = & \mathbb{P}\left( D^{2} < K^{\frac 2 \beta}
\exp\left(-s{\frac 2 \beta}(1+ \hat N_{\gamma})\right) \right)\\
& = & \mathbb{P}\left( D^{2} < K^{\frac 2 \beta}
e^{-s{\frac 2 \beta}} e^{-s{\frac 2 \beta}\hat N_{\gamma}} \right).
\end{eqnarray*}
Using now the fact that $D^2$ is an exponential random variable
with parameter $\lambda \pi$, independent of $\hat N_{\gamma}$, we get
\begin{eqnarray*}
& & \mathbb{P}(\hat S^{(\gamma)} > s) \\
& \ge & 1- \mathbb{E}\left( \exp\left(-\lambda \pi K^{\frac 2 \beta}
e^{-s{\frac 2 \beta}} e^{-s{\frac 2 \beta} \hat N_{\gamma}}\right)\right)\\
& \ge & \lambda \pi K^{\frac 2 \beta}
e^{-s{\frac 2 \beta}}
\exp\left(-\xi \pi r_{\gamma}^2 \left(1-e^{-s{\frac 2 \beta}}\right)\right)
\left(1+ o\left(e^{-s{\frac 2 \beta}}\right)\right),
\end{eqnarray*}
where we used again the bound $1-e^{-a} \ge a + a^2/2$ and the fact
that $\hat{N}^{(\gamma)}$ is Poisson with parameter $\xi \pi r_{\gamma}^2$.

Taking the log, multiplying both sides by $-\frac 1 s$ and letting
$s$ go to infinity gives (\ref{eq:lbb})

\subsection{Proof of Theorem \ref{thlsr}}

Let us first prove the last relation. Using Chernoff's bound
for the Poisson random variable $\hat N^{(\gamma)}$, we get
\begin{eqnarray*}
& & \hspace{-1cm}
\mathbb{P}\left( \hat N^{(\gamma)} +1> s_n \log (1 +Kr_{\gamma}^{-\beta}) \right)\\
& \le & e^{-\theta} \left( \frac {e \theta} {s_n \log (1 +Kr_{\gamma}^{-\beta})}
\right)^{s \log (1 +Kr_{\gamma}^{-\beta})},
\end{eqnarray*}
with $\theta= \xi \pi r_{\gamma}^2$, which shows that
\begin{eqnarray*}
&& \hspace{-1cm} \lim_{n\to \infty} - \frac 1 {s_n \log(s_n)} \log\left(
\mathbb{P}\left( \hat N^{(\gamma)} +1> s_n \log (1 +Kr_{\gamma}^{-\beta}) \right)\right)
\\ & \ge & \log (1 +Kr_{\gamma}^{-\beta}). 
\end{eqnarray*}
The converse inequality follows from the bound
\begin{eqnarray*}
\mathbb{P}\left( \hat N^{(\gamma)} +1> s_n \log (1 +Kr_{\gamma}^{-\beta}) \right)
\ge 
e^{-\theta} \frac {\theta^k}{k!}, \end{eqnarray*}
with $k =   s_n \log (1 +Kr_{\gamma}^{-\beta}) $
and Stirling's bound on the Gamma function.

Let us now prove the first relation.
We have
\begin{eqnarray*}
& & \hspace{-1cm}
\mathbb{P}\left( \hat S^{(\gamma)} < \frac 1 s_n  \right)\\
& = &
\int_0^{r^2_\gamma} e^{-\lambda \pi v}
\mathbb{P}\left( \hat N^{(\gamma)} + 1 > s_n \log\left(1+Kv^{-\frac \beta 2}\right) \right)
{\mathrm d} v.
\end{eqnarray*}
For $s_n$ large enough,
\begin{eqnarray*}
& & \hspace{-1cm}
\mathbb{P}\left( \hat S^{(\gamma)} < \frac 1 s_n  \right)\\
& \le & 
\mathbb{P}\left( \hat N^{(\gamma)} + 1 > s_n \log\left(1+Kr_{\gamma}^{-\frac \beta 2}\right) \right)
\int_0^{r^2_\gamma} e^{-\lambda \pi v}
{\mathrm d} v,
\end{eqnarray*}
which allows one to conclude that
\begin{eqnarray*}
& & \hspace{-1cm} \lim_{n\to \infty} - \frac 1 {s_n \log(s_n)}
\log\left( \mathbb{P}\left(\hat S^{(\gamma)} < \frac 1 s_n\right)\right)
\ge  \log (1 +Kr_{\gamma}^{-\beta}).
\end{eqnarray*}

The upper bound follows from the inequality:

\begin{eqnarray*}
& & \hspace{-1cm}
\mathbb{P}\left( \hat S^{(\gamma)} < \frac 1 s_n  \right)\\
& = &
\sum_{l=0}^{\infty} 
e^{-\theta} \frac{\theta^l}{l!} 
 \mathbb{P}\left( D^{2} > \left(\frac K {e^{\frac{l+1}{ s }} -1} \right)^{\frac 2 \beta}
\right)\\
& \ge & 
e^{-\theta} \frac{\theta^k}{k!} 
 \mathbb{P}\left( D^{2} > \left(\frac K {e^{\frac{k+1}{s} } -1} \right)^{\frac 2 \beta}
\right)
\end{eqnarray*}
where $k =   s_n \log (1 +Kr_{\gamma}^{-\beta}) $.

\subsection{Integral Geometry \RN{1} }

Consider a tagged user at the origin, and let the serving BS $X_j$ be at a distance $x$ from the origin. Let $Q$ denote a point a distance $u$ from the origin. We need to consider all points $Q$ that are within a distance $r$ from the serving BS $X_j$ i.e., $z = \left | X_j Q \right | = \sqrt{u^2+x^2-2ux\mbox{cos}(\alpha)} < r $, where $\alpha = \angle{X_jOQ}$ as shown in Figure 8.

\begin{figure}
\centering
\includegraphics[scale=0.5]{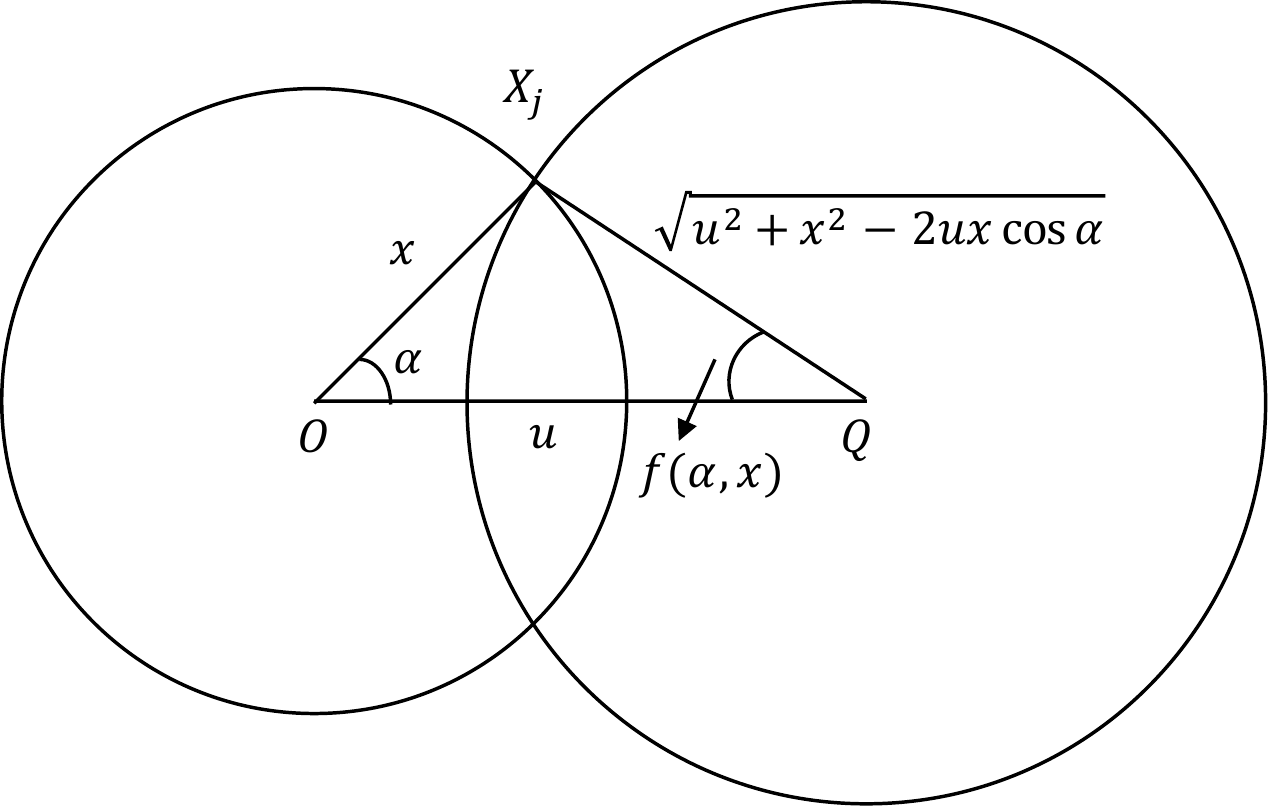}
\label{intgeometry}
\caption{Tagged user at origin, its serving base station $X_j$ at distance $x$ and a point $Q$ at a distance $u$ along with discs $B_x(0)$ and $B_z(Q)$.}
\end{figure}
Let $B_x(0),B_z(Q)$ be two discs with centers at the origin and $Q$, and radii $x$ and $X_jQ$, respectively.
 
The conditional probability that a point at distance $u$ from origin is within the Voronoi cell of the BS serving the user at the origin given it is at a distance $x$ is the probability that there is no other BS in the area of disc $B_z(Q)$ excluding the area of $B_x(0)$. Then the conditional expectation is:
 \begin{equation}
 \begin{split}
&\mathbb{E}[V^*| D= x] = \\ \\
& 2 \int_0^{\pi} \int_0^{x cos(\alpha) + \sqrt{r^2 - x^2sin(\alpha)}} e^{(-\lambda l(B_z(Q) - B_x(0)))} u dud\alpha ,
\end{split}
\raisetag{3\baselineskip}
\end{equation}
where, $D$ is a random variable denoting the distance to the closest BS.Thus,

\begin{equation}
\begin{split}
\mathbb{E}[V^*] &= \int_0^r \mathbb{E}[V^*| D= x]  \frac{f_D(x)}{\int_0^r f_D(y) dy} dx\\
&= \frac{\int_0^r \mathbb{E}[V^*| D= x] 2\lambda \pi x e^{-\lambda \pi x^2} dx }{1 - e^{-\lambda \pi r^2}}\\
&= \frac{1}{1 - e^{-\lambda \pi r^2}}\int_0^r 2 \int_0^{\pi} \int_0^{x \cos(\alpha) + \sqrt{r^2 - x^2\sin(\alpha)}} \\ \\
&~~~~ e^{(-\lambda l(B_z(Q) - B_x(0)))} u dud\alpha  2\lambda \pi x e^{-\lambda \pi x^2} dx \\
&= \frac{1}{1 - e^{-\lambda \pi r^2}} 4 \lambda \pi  \int_0^{r} \int_0^{\pi} \int_0^{x \cos(\alpha) + \sqrt{r^2 - x^2\sin(\alpha)}} \\\\
&~~~~e^{(-\lambda l(B_x(0) \cup B_z(Q))) }u x dud\alpha dx.
\end{split}
\raisetag{1\baselineskip}
\end{equation}

The last equality is because $l(B_x(0)) = \pi r^2$ and 
$l(B_x(0) \cup B_z(Q)) = ux\sin(\alpha) + (\pi-\alpha)x^2 + (\pi - f(\alpha,x))(u^2+x^2-2ux\cos(\alpha))$ with $\cos(f(\alpha,x)) = \frac{u-x\cos(\alpha)}{\sqrt{u^2+x^2-2ux\cos(\alpha)}}$.

\subsection{Integral Geometry \RN{2}}

 Let us consider a point $U$ at a distance $u$ from the origin on the line  $d(0,\theta)$ as shown in the figure. Then, the probability that the point is in the Voronoi cell of the BS $X_j$ is given by $e^{- \lambda l(C(u,\theta))}$, where $l(C(u,\theta))$ is the area of the shaded region in the figure. The conditional expectation is:
 \begin{equation} 
 \begin{split}
&\mathbb{E}[l(0,\theta) | D = x] =\\ \\
 & \frac{1}{\pi} \int_0^{\pi} \int_0^{x \cos(\theta) + \sqrt{r^2 - x^2\sin(\theta)}} e^{(-\lambda l(C(u,\theta)))} dud\theta,
\end{split}
\end{equation}
where, $D$ is the random variable denoting the distance to the closest BS.

Let $B_x(0),B_z(U)$ be two discs with centers at the origin and $U$, and radii $x$ and $X_jU$, respectively. Thus,

\begin{equation}
\begin{split}
\mathbb{E}[l(0,\theta)] &= \int_0^r \mathbb{E}[l(0,\theta)| D= x]  \frac{f_R(x)}{\int_0^r f_D(y) dy} dx\\
&= \frac{\int_0^r \mathbb{E}[l(0,\theta)| D= x] 2\lambda \pi x e^{-\lambda \pi x^2} dx }{1 - e^{-\lambda \pi r^2}}\\
&= \frac{1}{\pi(1 - e^{-\lambda \pi r^2})}\int_0^r \int_0^{\pi} \int_0^{x cos(\theta) + \sqrt{r^2 - x^2sin(\theta)}} \\ \\
&~~~~ e^{(-\lambda l(C(u,\theta)))} dud\theta  2\lambda \pi x e^{-\lambda \pi x^2} dx \\
&= \frac{1}{\pi(1 - e^{-\lambda \pi r^2})} 2 \lambda \pi  \int_0^{r} \int_0^{\pi} \int_0^{x cos(\theta) + \sqrt{r^2 - x^2sin(\theta)}} \\\\
&~~~~e^{(-\lambda  l(B_x(0) \cup B_z(U))) } x dud\theta dx.
\raisetag{1\baselineskip}
\end{split}
\end{equation}
The last equality is because $l(B_x(0))  = \pi r^2$ and 
$l(B_x(0) \cup B_z(U)) = ux\sin(\theta) + (\pi-\theta)x^2 + (\pi - f(\theta,x))(u^2+x^2-2uxcos(\theta))$ with $\cos(f(\theta,x)) = \frac{u-x\cos(\theta)}{\sqrt{u^2+x^2-2ux\cos(\theta)}}$.

\begin{figure}
\centering
\includegraphics[scale=0.5]{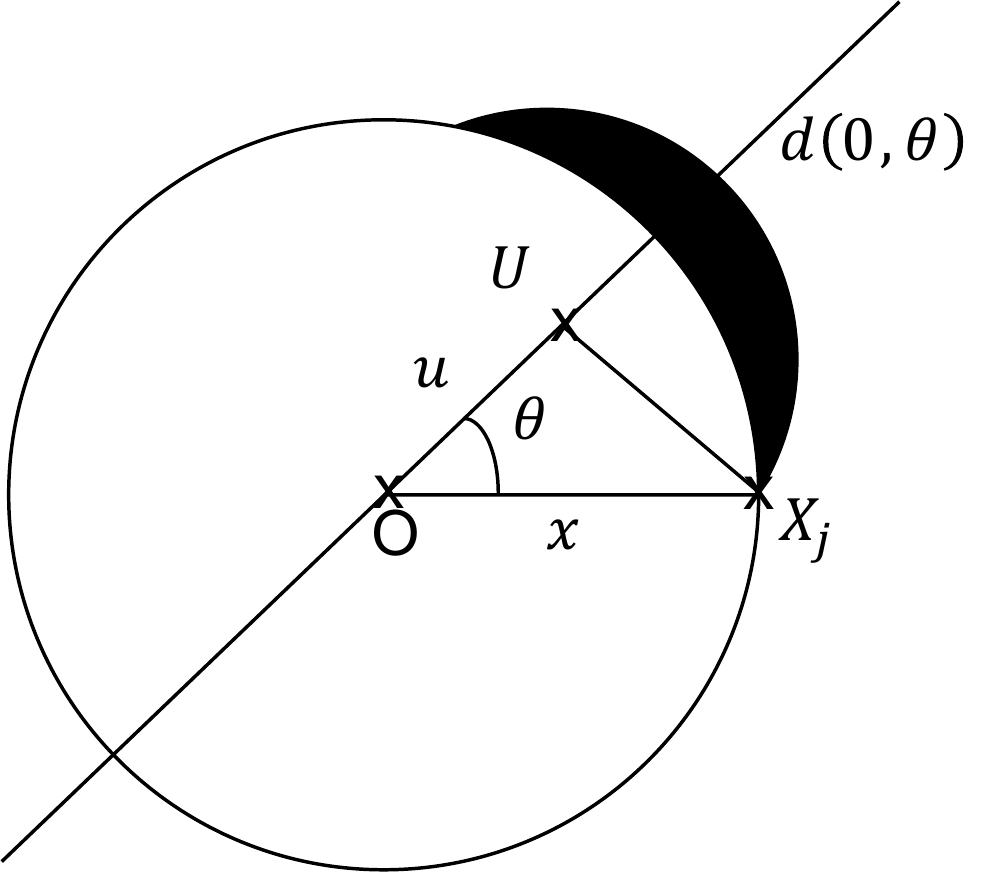}
\label{integralgeometry}
\caption{The tagged user at the origin, its serving base station $X_j$ at distance $x$ and a point $U$ at a distance $u$ on the line $d(0,\theta)$.}
\end{figure}

\subsection{Proof of Theorem 11}
\begin{proof}
Let $Z$ be a geometric random variable given by:
\begin{equation}
\label{eq:rv Z}
P(Z = l) = (1-\alpha)\alpha^{l}.
\end{equation}

Then, the random variable $T$ representing the time taken to download the file is the geometric sum of independent random variables given by
\begin{equation}
\label{eq:rv T}
T = \sum_{i=0}^{Z} (X_i + I^{(\gamma}) + Y.
\end{equation}
Thus, the Laplace transform of $T$ is
\begin{equation}
\label{eq:rv T}
\begin{split}
{\cal L}_{T}(s) &= \big[(1- \alpha) + (1-\alpha)\alpha {\cal L}_{X}(s){\cal L}_{I^{({\gamma})}}(s)+ ..\big] {\cal L}_{Y}(s) \\ \\
&= \frac{(1-\alpha){\cal L}_{Y}(s)}{1 -\alpha{\cal L}_{X}(s) {\cal L}_{I^{({\gamma})}}(s)} = \frac{(1-\alpha){\cal L}_{Y}(s)}{1 -\alpha{\cal L}_{X}(s) \frac {2\lambda v r}{2\lambda v r+ s}}.
\end{split}
\raisetag{2\baselineskip}
\end{equation}

The last equality follows from the fact that the $I^{({\gamma})}$ is exponential with parameter $2 \lambda v r$.
\end{proof}

\begin{lemma}
\label{equality}
For a $\mathbb{R^{+}}$ -valued random variable $U^{({\gamma})}$ defined by ~\eqref{eq: rv U} and service time distribution by ~\eqref{eq:service density}. We have the following limit:
$$
\lim_{r_{\gamma} \to \infty} 2 \lambda r_{\gamma}v \mathbb{E}[U^{({\gamma})}] = 1.
$$
\end{lemma}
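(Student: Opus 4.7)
The plan is to reduce the limit to an explicit Laplace-type integral, exploit the fact that the density of $W^{(\gamma)}$ is supported on $[0,2r_\gamma/v]$ with mean $\pi r_\gamma/(2v)$, and then show by a change of variables that the dominant contribution converges to $\pi/4$, which cancels the prefactor $4/\pi$.

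First, starting from the definition of $U^{(\gamma)}$ in \eqref{eq: rv U}, I will write
\[
\mathbb{E}[U^{(\gamma)}] = \int_0^{2r_\gamma/v} \mathbb{P}(U^{(\gamma)} > u)\,du = \frac{1}{\nu^{(\gamma)}} \int_0^{2r_\gamma/v} \Bigl(e^{-\rho^{(\gamma)} \mathbb{P}(\hat W^{(\gamma)} \le u)} - e^{-\rho^{(\gamma)}}\Bigr) du,
\]
with $\rho^{(\gamma)} = \lambda \pi r_\gamma^2$ and $\nu^{(\gamma)} = 1-e^{-\rho^{(\gamma)}}$. Using \eqref{eq:service density} together with the definition of the forward recurrence time, a direct calculation (with substitution $s = 2r_\gamma t/v$) yields the clean expression
\[
\mathbb{P}\bigl(\hat W^{(\gamma)} \le 2r_\gamma x/v\bigr) \;=\; F(x) \;:=\; \frac{4}{\pi}\int_0^x \sqrt{1-t^2}\,dt, \qquad x \in [0,1],
\]
so that $F$ is a proper CDF on $[0,1]$ with $F(0)=0$ and $F(1)=1$.

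Second, I will rescale the outer integral with $u = 2r_\gamma x/v$ to obtain
\[
2\lambda r_\gamma v\,\mathbb{E}[U^{(\gamma)}] \;=\; \frac{4}{\pi\,\nu^{(\gamma)}}\,\rho^{(\gamma)} \int_0^1 \Bigl(e^{-\rho^{(\gamma)} F(x)} - e^{-\rho^{(\gamma)}}\Bigr) dx.
\]
The tail term $\rho^{(\gamma)} e^{-\rho^{(\gamma)}}$ vanishes as $r_\gamma \to \infty$. For the main term, the substitution $y = \rho^{(\gamma)} F(x)$ gives $dx = \frac{\pi\,dy}{4\rho^{(\gamma)}\sqrt{1-(F^{-1}(y/\rho^{(\gamma)}))^2}}$, so
\[
\rho^{(\gamma)} \int_0^1 e^{-\rho^{(\gamma)} F(x)} dx \;=\; \frac{\pi}{4} \int_0^{\rho^{(\gamma)}} \frac{e^{-y}}{\sqrt{1-(F^{-1}(y/\rho^{(\gamma)}))^2}}\,dy.
\]
Since $F^{-1}(y/\rho^{(\gamma)}) \to 0$ pointwise in $y$ as $r_\gamma \to \infty$, and since the integrand is dominated by $e^{-y}/\sqrt{1-F^{-1}(1/e)^2}$ uniformly on $[0,\rho^{(\gamma)}]$ after splitting off a small initial interval, dominated convergence gives the limit $\pi/4 \cdot \int_0^\infty e^{-y} dy = \pi/4$.

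Combining, $2\lambda r_\gamma v\,\mathbb{E}[U^{(\gamma)}] \to \frac{4}{\pi}\cdot 1\cdot \frac{\pi}{4} = 1$, as desired. The only real subtlety is the dominated convergence step in the last display, since the Jacobian factor $1/\sqrt{1-(F^{-1}(y/\rho^{(\gamma)}))^2}$ blows up near $y = \rho^{(\gamma)}$; I would handle this by splitting the integral at $y = \rho^{(\gamma)}/2$ (where the Jacobian is harmlessly bounded) and bounding the tail $\int_{\rho^{(\gamma)}/2}^{\rho^{(\gamma)}}$ crudely against the integrable singularity together with $e^{-\rho^{(\gamma)}/2}$, which is negligible.
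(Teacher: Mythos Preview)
Your proposal is correct and follows essentially the same route as the paper: both derive the same integral representation of $\mathbb{E}[U^{(\gamma)}]$ in terms of the forward recurrence CDF, rescale by $2\lambda r_\gamma v$, and then extract the Laplace-type asymptotics as $\rho^{(\gamma)}\to\infty$. The paper writes the integral as $\frac{r_\gamma}{v}\int_0^2 e^{-\lambda r_\gamma^2 q(z)}\,dz$ with $q(z)=\int_0^z\sqrt{4-s^2}\,ds$ (which is your $\pi F(x)$ under $z=2x$), and then argues that this is asymptotically equivalent to $\frac{r_\gamma}{v}\int_0^2 e^{-2\lambda r_\gamma^2 z}\,dz$ by bounding the relative error via a Taylor expansion near $z=0$ and a monotonicity argument on the tail. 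Your substitution $y=\rho^{(\gamma)}F(x)$ together with the split at $y=\rho^{(\gamma)}/2$ and dominated convergence is a cleaner and more transparent way to carry out the same Laplace asymptotics; it avoids the somewhat ad hoc error analysis the paper performs but delivers the same conclusion for the same reason (the mass of $e^{-\rho^{(\gamma)}F(x)}$ concentrates where $F(x)\approx F'(0)x$). One small slip: in your domination sentence you wrote $F^{-1}(1/e)$ where you clearly mean $F^{-1}(1/2)$, and ``small initial interval'' should read ``tail''; your final paragraph states the intended split correctly.
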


From Equations \eqref{eq: rho with r} and \eqref{eq: rv U}:
\begin{equation}
\label{equation}
P[U^{({\gamma})}> u] = \frac{e^{-\rho^{({\gamma})}}} {\nu^{({\gamma})}} (e^{\rho^{({\gamma})}P[\hat{W}^{({\gamma})} > u]} - 1) , u \geq 0,
\end{equation}
and
\begin{equation}
\label{equation}
\begin{split}
\mathbb{E}[U^{({\gamma})}] & = \int_0^{2 r_{\gamma}/v} P[U^{({\gamma})} > u] du\\ \\
& = \frac{e^{-\rho^{({\gamma})}}} {\nu^{({\gamma})}} \int_0^{2 r_{\gamma}/v} (e^{\rho^{({\gamma})}P[\hat{W}^{({\gamma})} > u]} - 1) du.
\end{split}
\end{equation}

Thus,
\begin{equation}
\label{equation}
\mathbb{E}[U^{({\gamma})}]  = \frac{e^{-\rho^{({\gamma})}}} {\nu^{({\gamma})}} \bigg( \int_0^{2r/v} e^{\rho^{({\gamma})}P[\hat{W}^{({\gamma})} > u]} du  - \frac{2r_{\gamma}}{v} \bigg).
\end{equation}

Since  $\hat{W}^{({\gamma})}$ is the forward recurrence time associated with the service time $W^{({\gamma})}$ defined in ~\eqref{eq:forward recurrence},
$$
\mathbb{P}(\hat{W}^{({\gamma})} > u) = \frac{1}{\mathbb{E}[W^{({\gamma})}]} \int_u^{\infty} \mathbb{P} (W^{({\gamma})} > t) dt,
$$
where, $\mathbb{E}[W^{({\gamma})}] = \frac{ \rho^{({\gamma})}}{2r_{\gamma} v\lambda}$ from ~\eqref{eq: rho with r} and $\mathbb{P} (W^{({\gamma})} > t) = 1/2r_{\gamma} \sqrt{4r_{\gamma}^2 -  v^2 t^2}$ from ~\eqref{eq:service density}.

Thus,
\begin{equation}
\label{eq:integral}
\begin{split}
&\mathbb{E}[U^{({\gamma})}] = \frac{e^{-\rho^{({\gamma})}}} {\nu^{({\gamma})}} \bigg( \int_0^{\frac{2r_{\gamma}}{v}} e^{\lambda v \int_u^{\frac{2r_{\gamma}}{v}} \sqrt{4r_{\gamma}^2 - v^2 t^2} dt} du -  \frac{2r_{\gamma}}{v} \bigg)\\ \\
& = \frac{e^{-\rho^{({\gamma})}}} {\nu^{({\gamma})}} \bigg( \int_0^{\frac{2r_{\gamma}}{v}} e^{\lambda \int_u^{2r_{\gamma}} \sqrt{4r_{\gamma}^2 -  x^2} dx} du -  \frac{2r_{\gamma}}{v} \bigg) \\ \\
 & = \frac{1} {\nu^{({\gamma})}} \bigg(  \int_0^{\frac{2r_{\gamma}}{v}} e^{- \lambda ( \frac{1}{2} u \sqrt{4r_{\gamma}^2 - v^2u^2} + 2r_{\gamma}^2{\arctan}(\frac{uv}{\sqrt{4r_{\gamma}^2 - v^2u^2}})} du\\ \\
& ~~~~-  e^{-\lambda \pi r_{\gamma}^2}\frac{2r_{\gamma}}{v} \bigg)\\ \\
 & = \frac{1} {1 - e^{-\lambda \pi r_{\gamma}^2}} \bigg(  \frac{r_{\gamma}}{v} \int_0^2 e^{-\lambda r_{\gamma}^2 q(z)} dz - \frac{2e^{-\lambda \pi r_{\gamma}^2}r_{\gamma}}{v} \bigg).
\end{split}
\raisetag{1\baselineskip}
\end{equation}
The second equality is by the change of variables $vt=x$ and the last equality is by the change of variables $z = uv/r_{\gamma}$ and $q(z) = \frac{1}{2} z \sqrt{4 - z^2} + 2{\arctan}(\frac{z}{\sqrt{4 - z^2}}) $.

Now, from Equation \eqref{eq:integral} we have,
$$
\lim_{r_{\gamma} \to \infty} \mathbb{E}[U^{({\gamma})}]  \sim \frac{r_{\gamma}}{v} \int_0^2 e^{-\lambda r_{\gamma}^2 q(z)} dz.
$$
Now, the integral $\frac{r_{\gamma}}{v} \int_0^2 e^{-\lambda r_{\gamma}^2 q(z)}$ and $\frac{r_{\gamma}}{v} \int_0^2 e^{-\lambda r_{\gamma}^2 2z} dz$ are asymptotically equal as $r_{\gamma}$ goes to $\infty$.

The asymptotic equality of the two functions as $r_{\gamma}$ goes to $\infty$ means, that the relative error of the approximate equality goes to 0 as $r_{\gamma}$ goes to $\infty$ i.e.,

$$
\lim_{r_{\gamma} \to \infty} \frac{\int_0^2 e^{-\lambda r_{\gamma}^2 q(z)} - e^{-\lambda r_{\gamma}^2 2z} dz}{\int_0^2 e^{-\lambda r_{\gamma}^2 2z}} = 0.
$$

With the help of the Taylor series expansion, we get that
$$
\lim_{r_{\gamma} \to \infty} \frac{\int_0^{C/r_{\gamma}^2} e^{-\lambda r_{\gamma}^2 q(z)} - e^{-\lambda r_{\gamma}^2 2z} dz}{\int_0^2 e^{-\lambda r_{\gamma}^2 2z}} = 0,
$$
for some constant $C$.

Thus, we need to prove that 
$$ \lim_{r_{\gamma} \to \infty} \frac{\int_{C/r_{\gamma}^2}^2 e^{-\lambda r_{\gamma}^2 q(z)} - e^{-\lambda r_{\gamma}^2 2z} dz}{\int_0^2 e^{-\lambda r_{\gamma}^2 2z}} = 0.$$

Now let us observe the function $h(z) =  e^{-\lambda r_{\gamma}^2 q(z)} - e^{-\lambda r_{\gamma}^2 2z}$. The derivative of the function $h(z)$ is 

\begin{equation}
\label{equation}
\begin{split}
h'(z) & = \big(-\sqrt{4-z^2} e^{-\lambda r_{\gamma}^2 q(z)} + 2e^{-\lambda r_{\gamma}^2 2z}\big)r_{\gamma}^2 \\ \\ 
& = e^{-\lambda r_{\gamma}^2 2z}r_{\gamma}^2 \bigg[2 - \sqrt{4-z^2} e^{-\lambda r^2 [q(z)-2z]}\bigg].
\end{split}
\end{equation}

Thus by substituting $z = C/r_{\gamma}^2$ and using the Taylor series expansion of $[q(z) - 2z]$  and  $\sqrt{4-z^2}$ , we get
$$
\lim_{r_{\gamma} \to \infty} h'(C/r_{\gamma}^2) = e^{-2C\lambda z} [ o(r_{\gamma})] = 0.
$$

The derivative of function $h(z)$ is 0 for $z = C/r_{\gamma}^2$ as  $r_{\gamma}$ goes to $\infty$ i.e., the function $h(z)$ has a local maximum at $C/r_{\gamma}^2$ for some constant $C$ and it can be observed that it is decreasing for $C = 3$.

Therefore, the function $h(z)$ is a decreasing function from $3/r_{\gamma}^2$ to 2 and the area under its curve has an upper bound that goes to 0.

Thus, the two integrals are asymptotically equal as $r_{\gamma}$ goes to $\infty$, which implies that $\mathbb{E}[U^{({\gamma})}] \sim \frac{1}{2vr_{\gamma}\lambda}$ as $r_{\gamma}$ goes to $\infty$.

\end{document}